\documentclass[11pt]{article}
\usepackage{graphicx} 
\usepackage{amssymb}
\usepackage{amsmath}
\usepackage{bm,cancel}
\usepackage{amsthm}
\usepackage{parskip}
\usepackage{geometry} 
\usepackage{wrapfig, framed}
\usepackage{overpic}
\usepackage{comment}
\usepackage{mathtools, mathrsfs}
\usepackage{scalerel}
\usepackage{relsize}
\usepackage{hyperref}
\usepackage{graphbox}
\usepackage{authblk}
\usepackage{stmaryrd}
\usepackage[T1]{fontenc}
\usepackage{enumitem}
\usepackage{tikz-cd} 

\DeclareFontEncoding{LS1}{}{}
\DeclareFontSubstitution{LS1}{stix}{m}{n}
\newcommand*\kay{%
  \text{%
  \fontencoding{LS1}%
  \fontfamily{stixscr}%
  \fontseries{\textmathversion}%
  \fontshape{n}%
  \selectfont\symbol{"6B}}}
\makeatletter
  \newcommand*\textmathversion{\csname textmv@\math@version\endcsname}
  \newcommand*\textmv@normal{m}
  \newcommand*\textmv@bold{b}
\makeatother

\newcounter{descriptcount}
\newlist{enumdescript}{description}{1}
\setlist[enumdescript,1]{%
  before={\setcounter{descriptcount}{0}%
          \renewcommand*\thedescriptcount{\boldsymbol{\alpha}rabic{descriptcount}}},
        font={\bfseries\stepcounter{descriptcount}Aim \thedescriptcount:}
}
\usepackage{tikz}

\theoremstyle{plain}
\newtheorem{theorem}{Theorem}[section]
\newtheorem*{theorem*}{Theorem}

\newtheorem*{proposition*}{Proposition}
\newtheorem{corollary}[theorem]{Corollary}
\newtheorem{lemma}[theorem]{Lemma}

\theoremstyle{definition}

\theoremstyle{remark}
\newtheorem{remark}[theorem]{Remark}

\usepackage{chngcntr}
\newtheorem{exampleinner}{Example}[section] 
\counterwithin{exampleinner}{section}

\newcommand{\TightBracketB}[1]{\raisebox{0.00ex}{\scalebox{1}{${#1}$}}}%
\newcommand*{\eq}[1]{\TightBracketB[#1\TightBracketB]}

\newcommand{\R}{\mathbb{R}}
\newcommand{\ssubset}{\subset\joinrel \! \!\subset}
\newcommand{\B}{\mathcal{B}}

\newcommand{\ST}{\Lambda}
\newcommand{\HT}{\mathrm{HT}}

\newcommand{\C}{\mathcal{C} (\ST_i)}
\newcommand{\I}{\mathcal{I}(\ST)}
\newcommand{\h}{\mathcal{H}_D^2(\ST_i)}
\renewcommand{\hbar}{h}

\makeatletter         
\renewcommand\maketitle{
{\raggedright 
\begin{center}
{\vspace{-5em} \huge \@title }\\[2ex] 
{\Large  \@author}\\[2ex]

\end{center}}} 
\makeatother

\title{A Variational Shape Optimisation Approach to Multi-region Relaxed Magnetohydrodynamic Equilibria}
\author[1]{K. de Lacy}
\author[1]{L. Noakes}
\author[1]{D. Pfefferl\'e}
\affil[1]{ \small The University of Western Australia, 35 Stirling Highway, Crawley WA 6009, Australia}

\date{\today}

\begin{document}
\maketitle
\newsavebox\mybox
\savebox{\mybox}{$\stretchrel*{|}{\rule[-.6ex]{0ex}{2.5ex}}$}
\def\myabs#1{\usebox\mybox#1\usebox\mybox}
\vspace{2em}

\begin{abstract}
\noindent
Let \(\ST\subset\mathbb{R}^3\) be a region admitting a partition into \(n\) compact, connected subregions \(\ST_1,\dots,\ST_n\), each with smooth boundary. Consider a vector field \(B\) on \(\ST\) where \(B|_{\ST_i}\) is smooth, divergence free, and tangent to \(\partial \ST_i\) for all \(i\). We show that the multi-region relaxed magnetohydrodynamics (MRxMHD) equilibrium equations are necessary and sufficient conditions for \( B \) and a metric to yield a stationary point of the magnetic energy under appropriate constraints. We constrain the pressure, relative helicity, and magnetic flux of \(B\) through all smooth surfaces in \(\ST_i\) whose boundary lies on \(\partial \ST_i\).

\noindent
We identify a previously overlooked gauge condition. A definition for relative helicity is introduced, its gauge invariance is proved, and the existence of a gauge where relative helicity reduces to conventional helicity is demonstrated. In the case of a single region an additional condition is introduced that is sufficient to ensure a critical point of the magnetic energy is also a minimiser.
\end{abstract}

\section{Introduction}\label{section:Introduction}

Magnetohydrodynamics (MHD) equilibrium equations describe the steady-state behaviour of a globally neutral plasma under the influence of a magnetic field. Formally, MHD is derived from the Vlasov–Maxwell equations by assuming quasi-neutrality and taking low-order velocity moments to produce a single-fluid model coupled to the magnetic field. This model neglects effects like collisionless damping, viscosity, and high-frequency waves \cite{Freidberg_2014}. The MHD equilibrium equations characterise steady-state MHD solutions and take the form
\begin{align*}
  (\nabla \times B) \times B = \nabla p \, , \quad \nabla \cdot B = 0 \, , \quad N \cdot B \big|_{\partial \ST} = 0 \, ,
\end{align*}
where the \emph{magnetic field} \(B\) is a smooth vector field on a compact, oriented domain \(\ST\subset\mathbb{R}^3\) with smooth boundary \(\partial \ST\). The \emph{pressure} \(p : \ST \to \R\) is a smooth function that is constant on each connected component of the boundary \(\partial \ST\).

Let \(f:\ST \to f(\ST)\) be a diffeomorphism isotopic to the identity with \(f|_{\partial \ST} = \mathrm{id}\). Rather than work directly on a moving domain \(f(\ST)\) with a Euclidean volume form \(\varpi\), we pull back to a fixed reference domain \(\ST\). We equip \(\ST\) with a volume form \(f^*\varpi\), and interpret differential operators and inner products, such as those in the MHD equation, with respect to the pulled-back Euclidean metric. The domain \(\ST\) is fixed while \(f\) and \(B\) are allowed to vary. The outward unit normal \(N\) on \(\partial \ST\) is defined as the pullback of the Euclidean outward normal on \(\partial f(\ST)\). This keeps the reference domain fixed while allowing the geometry and the location of interior conditions to vary through \(f\).

The MHD equilibrium equations constitute a nonlinear boundary‐value problem. MHD equilibria are used in the early stages of designing and optimising magnetic field configurations for plasma confinement devices \cite{chen1984introduction, hazeltine2003plasma}. By exploring domain geometries one can identify promising reactor shapes and parameters before undertaking more realistic but computationally expensive simulations. Both iterative solvers and variational methods are in common use for computing or approximating MHD equilibria (assuming such solutions exist) with variational formulations typically exhibiting better convergence properties \cite{loizu2016verification, REIMAN1986157, hirshman2011siesta, helander2013ideal, dudt2020desc, hirshman1986three, chodura19813d, hayashi1995evolution}. Restricting to axisymmetric fields invariant under continuous rotations about the toroidal axis yields the 2D elliptic Grad–Shafranov equation, for which many numerical solvers exist \cite{PATAKI201328, LEE201572, zakharov1999theory, takeda1991computation}.

This work presents a variational formulation of the MHD equilibrium equations on relatively general domains \(\ST\). Specifically, we show that a variational method developed for nested toroidal domains \cite{dennis2014multi}---which can approximate MHD equilibria under certain assumptions \cite{10.1063/1.4795739}---can be extended to more general domains \(\ST\). To properly discuss the motivating ideas we introduce the necessary notation and provide a brief overview of the relevant literature for the case of non-degenerate pressure profiles.

\subsection{Setup}\label{sub:setup}

Let \(\Omega^k(\ST)\) be the space of all smooth \(k\)-forms on \(\ST\). The magnetic field $B$ and the volume form $f^*\varpi$ yield a smooth 2-form,
\begin{align*}
   \beta = i_B ( f^*\varpi ) \, ,
\end{align*}
where \(i_B : \Omega^k(\ST) \to \Omega^{k-1}(\ST)\) denotes interior multiplication by \(B\). The MHD equilibrium equations are equivalent to:
\begin{align} \label{eq:mhe_original}
  \delta \beta  \wedge \star \beta  &= \star dp \, , \quad
  d \beta  = 0 \, , \quad \mathcal{J}^* \beta  = 0 \, ,
\end{align}
where $d:\Omega^k(\ST)\to \Omega^{k+1}(\ST)$ is the \emph{exterior derivative}, $\star:\Omega^k(\ST)\to\Omega^{3-k}(\ST)$ is the \emph{Hodge star} operator, and the \emph{co-derivative} \(\delta\) applied to a \(k\)-form on \(\ST\) is defined as \(\delta \coloneqq (-1)^k \star d \star\) \cite{lee}. Additionally, \(\wedge\) denotes a wedge product. The Neumann condition on \(B\) is replaced by \(\mathcal{J}^* \beta  = 0\) where \(\mathcal{J}\colon \partial\ST \hookrightarrow \ST\) is the inclusion map and \(\mathcal{J}^*\) its pull‐back. We reserve the symbol \(\mathcal{J}\) for inclusion maps, specifying its domain and codomain only when the context does not make them clear. Likewise in integrals we often omit the explicit insertion of \(\mathcal{J}^*\) whenever the integration domain alone suffices to indicate the pull‐back.

For \(\omega_1, \omega_2 \in \Omega^k (\ST)\) the \(L^2\) inner product and norm are given by:
\begin{align*}
\begin{gathered}
  \langle \omega_1, \omega_2 \rangle_{L^2} \coloneqq \int_{\ST} \langle \omega_1 ,\omega_2 \rangle \, \varpi = \int_{\ST} \omega_1 \wedge \star \omega_2 \, , \qquad
  \| \omega_1 \|_{L^2}^2 \coloneqq \langle \omega_1, \omega_1 \rangle_{L^2} \, ,
\end{gathered}
\end{align*}
where \(\langle \cdot , \cdot \rangle\) denotes the pointwise inner product induced by the Riemannian metric on \(\ST\).

As \(\ST\) is a compact, oriented subset of \(\R^3\) with a smooth boundary, a closed 2-form \(\beta  \in  \Omega^2(\ST)\) that is zero when pulled back by inclusion to the boundary, is automatically exact. Namely, there exists an \(a \in  \Omega^1(\ST) \) where \(\beta  = da\), as shown by Cantarella and Parsley (Proposition C.4 \cite{CANTARELLA20101127}). As we use this result throughout the paper, an exposition of this proof can be found in the appendix, Lemma \ref{Lemma:Potential}.

Consider any closed \( s \in \Omega^1(\ST) \) and fix a potential \(a\) of \(\beta\). Since \( ds = 0 \), we have \( \beta  = d(a + s) \), so \( a + s \) is likewise a valid primitive for \( \beta  \). Similarly, the reverse argument shows that: for any two primitives of \(\beta \) their difference is a closed 1-form, demonstrating that \(\{a+s \in \Omega^1(\ST) : \; da = \beta  ,\; ds=0\} \) is the set of every potential for \(\beta \).

Consider a set \(G \subseteq \Omega^1(\ST)\) of potentials with the following property: for every closed \(2\)-form \(\beta \) whose pullback to \(\partial \ST\) vanishes, there exists at least one \(a \in G\) such that \(\mathrm{d}a = \beta \), and whenever \(a_1,a_2 \in G\) are both potentials for \(\beta \), one has
\[
\int_{\ST} a_1 \wedge \beta  = \int_{\ST} a_2 \wedge \beta  \, .
\]
The \emph{helicity} of \(\beta \) on \(\ST\) with respect to \(G\) is defined by
\begin{align*}
  H_G(\beta ,\ST) \coloneqq \int_{\ST} a \wedge \beta  \, ,
\end{align*}
where \(a \in G\) is a potential of \(\beta \).

Without a restriction on the admissible potentials, the helicity need not be well-defined: if the flux of \(\beta \) through some smooth surface in \(\ST\) whose boundary lies in \(\partial \ST\) is non-trivial, then different choices of potential \(a\) lead to arbitrarily different values of the helicity (see Subsection \ref{sub:Well-Posed}). In Subsection \ref{subsection:Relative Helicity} we present a replacement for helicity that does not require restricting potentials to \(G\). Another approach, not adopted here but common in the literature, is to use a Biot--Savart type operator to select a gauge in which helicity becomes a well-defined function of the magnetic field and the underlying geometry \cite{parsley2001taylor, Arnold_Topological_Methods, soperimetric_problems_helicity, CANTARELLA20101127, MacTaggart_2023}.

Consider a partition of \(\ST\) into \(n<\infty\) compact, connected subregions \(\ST_i\) (\(i=1,\dots,n\)), each with smooth boundary, so that
\[
    \bigcup_{i=1}^n \ST_i = \ST \, ,
    \qquad \text{and} \qquad
    \mathrm{int}(\ST_i)\cap \mathrm{int}(\ST_j)=\emptyset
    \quad (i\neq j) \, .
\]
Let the set \(\I\) be the set of all diffeomorphisms \(f: \ST\to \ST\) such that \(f(\partial \ST) = \partial \ST\) pointwise and there exists a smooth isotopy \(f_t : \ST\to \ST\) for \(t \in [0,1]\) that is an orientation preserving diffeomorphism for all \(t\) and \(f_0 = id\), \(f_1 = f\). Therefore, the family of sets \(f_t (\ST_i)\) gives a smooth deformation between the partitions \(\{\ST_i\}_{i=1}^n \) and \( \{ f( \ST_i)\}_{i=1}^n\), with the subregions remaining pairwise disjoint and covering \(\ST\) for every \(t\).

For each \(f \in \I\), the image \(f(\ST_i)\) inherits a Euclidean metric and a corresponding volume form \(\varpi\). We then equip \(\ST_i\) with the pullback metric induced by \(f\), so that its volume form is \(f^{*}\varpi\). The associated Hodge star operator \(\star\) on \(\ST_i\) therefore depends on \(f\), although we suppress this dependence in the notation.

From this point onward, we permit \(B\) to be discontinuous across partition boundaries, requiring only that its restriction to each \(\mathrm{int}(\ST_i)\) is smooth. We define a smooth field \(B_i \in \Gamma \ST_i\) that matches \(B\) on \(\mathrm{int}(\ST_i)\). The process of replacing a smooth magnetic field with a piecewise smooth field is called \emph{relaxation}. Similarly, we allow the pressure \(p\) to be discontinuous across partition boundaries.

For \(n>1\), a partition \((\ST_1,\dots,\ST_n)\) specifies the locations where \(B\) may be discontinuous. In each region \(\ST_i\) we express the vector field \(B_i\) in the language of exterior calculus by defining
\[
  \beta_i \coloneqq i_{B_i}\bigl(f^*\varpi\bigr)
  \in \Omega^2(\ST_i) \, ,
\]
for \(i=1,\dots,n\). If \(n=1\) then \(B\) is a smooth vector field and \(\beta _1\) reduces to the definition of \(\beta \).

Each non-empty intersection \(\ST_i \cap \ST_j\) for \(i \neq j\) is called an \textit{interface}. There are a finite number of pairwise distinct interfaces labelled \(I_q\) for \(q = 1, 2, \dots, m\). For a given \(I_q\) there exists, by definition, an \(i\) and \(j\) where \(i > j\) such that \(I_q = \ST_i \cap \ST_j\). For \(k\)-forms \(\omega_i \in \Omega^k(\ST_i)\), \(\omega_j \in \Omega^k(\ST_j)\) define the operator \(\llbracket \omega \rrbracket |_{I_q} = \llbracket \omega \rrbracket |_{\ST_i \cap \ST_j} \coloneqq \mathcal{J}^* ( \omega_{i}) - \mathcal{J}^* (\omega_{j} ) \). Here each \(\mathcal{J}\) is an inclusion map \(I_q \hookrightarrow \ST_i\) and \(I_q \hookrightarrow \ST_j\) respectively.

With an inclusion map \(\mathcal{J} : \partial  \ST_i \hookrightarrow  \ST_i\) we define the \textit{tangential trace} of \(\omega \in \Omega^k(\ST_i) \) on \(\partial  \ST_i\) by \(t \omega = \mathcal{J}^* \omega\). Let \(N\) be a vector field on \(\ST_i\) that is a unit normal field to the surface \( \partial  \ST_i \). The \textit{normal trace} is defined to be the \(k-1\) form \(n \omega = \mathcal{J}^* ( i_N \omega )\).

We define the following sets of \(k\)-forms in \(\ST_i\):
\begin{align*}
   \Omega^k_D( \ST_i ) &\coloneqq \{  \beta  \in   \Omega^k( \ST_i )  : \; t\beta  = 0 \} \, , \\
   \Omega^k_N( \ST_i ) &\coloneqq \{  \beta  \in   \Omega^k( \ST_i )  : \; n \beta  = 0 \} \, , \\
  \C &\coloneqq  \{  \beta  \in   \Omega^2( \ST_i )  : \; t \beta  = 0 \, , \; d \beta  = 0 \} \, .
\end{align*}
Elements of \(\Omega^k_D( \ST_i )\) and \( \Omega^k_N( \ST_i )\) are called Dirichlet and Neumann \(k\)-forms respectively.

For a given partition \((\ST_1, \dots,\ST_n)\) and piecewise-constant pressure \(p : \ST \to \R\) equal to \(p_i\) in \(\ST_i\), a solution to the multi-region relaxed magnetohydrodynamics (MRxMHD) equilibrium equations is given by \(\beta _i\in \C \), \(i = 1, \dots, n\) and \(f \in \I\) solving:
\begin{align}
\label{prob:MRxMHD} 
  d \star \beta _i = \mu_i \beta _i \, , \qquad 
  \big\llbracket \| \beta  \|^2 + 2p \big\rrbracket \big|_{I_j} = 0 \, , 
\end{align}
for each \(i\), \(j\) and constants \(\mu_i \in \R\). These conditions on \(f\) and each \(\beta _i\) are equivalent to first order necessary and sufficient conditions for a solution to a variational problem (Theorem \ref{thr:iff_simple}). In equations \eqref{prob:MRxMHD} the contribution from \(f\) lies in the Hodge star and norm. The first equation is a Beltrami equation, and the second is referred to as a \textit{jump condition}. In vector notation:
\begin{align*}
  \nabla \times B_i = \mu_i B_i \, , \qquad 
  \big\llbracket \|B\|^2 + 2 p  \big\rrbracket \big|_{I_j} = 0 \, .
\end{align*}
If MRxMHD equilibria approximate MHD equilibria the fact that each pressure discontinuity in equation \eqref{prob:MRxMHD} can contribute to plasma confinement would make MRxMHD a natural framework for studying magnetic confinement \cite{10.1063/1.4765691}.

\subsection{Background: Toroidal Geometry}\label{sub:back}

\setlength{\FrameSep}{3pt}

\begin{wrapfigure}{r}{0.3\linewidth}
    \vspace{-1em}
  \begin{framed}\raggedleft
  \begin{overpic}[width=\linewidth]{Chapter_Introduction/image1.png}
    \put(63,23){\(\ST_1\)}
    \put(63,36.5){\(\ST_2\)}
    \put(63,45.5){\(\ST_3\)}
    \put(64.5,53){\(\vdots\)}
    \put(63,63){\(\ST_n\)}
  \end{overpic}
  \caption{A cross-section and partition of a solid torus embedded in \(\R^3\).}\label{fig:torus} 
  \end{framed}
  \vspace{-2em}
\end{wrapfigure}

In this subsection, we restrict ourselves to a special geometric setting and review previous work. The existing literature addresses minimisation problems, but more generally we study stationary points.

Toroidal geometries are common when solving for MHD equilibria. Consider a smooth pressure \(p\) that is non‐degenerate (\(\nabla p\neq0\)) on \(\partial \ST\). Then an MHD solution satisfies \(B \cdot \nabla p = 0\), namely, \(B\) is tangential to the corresponding pressure level set.

On level sets of \(p\) with \(\nabla p \neq 0\) everywhere, we have \(\nabla p = (\nabla \times B) \times B\), so \(B\) and \(\nabla \times B\) are both non-zero and \(B\) is nowhere parallel to \(\nabla \times B\). The Poincaré--Hopf theorem implies that any compact manifold admitting a nowhere-zero continuous tangent vector field has Euler characteristic zero. The only compact, connected, smooth surface with Euler characteristic zero is diffeomorphic to a 2-torus \cite{Arnold_Topological_Methods}. Consequently, if the pressure is non-degenerate then each level set is a disjoint union of 2-tori.

In this subsection we consider regions \(\ST_i\) for \(i = 1, \dots, n\) that are solid and hollow tori embedded in \(\R^3\), as in Figure \ref{fig:torus}. This implies that interfaces are given by: \(I_k = \partial \cup_{i=1}^k\ST_i = \partial \ST_{k} \cap \partial \ST_{k+1}  \) for \(k=1,2, \dots, n-1\), which are pairwise-disjoint toroidal surfaces.

For a fixed set \(G\) as defined previously, consider the following minimisation problem over closed Dirichlet 2-forms \(\beta _i \in \C \) and diffeomorphisms \(f \in \I\) given by:
\begin{align} \label{prob:min1}
\begin{gathered}
    \min_{\substack{f \in \I \\ \beta _i \in \C \, , \; \forall i }} \sum_{i=1}^n \Big( \frac{1}{2} \| \beta _i\|_{L^2}^2 - p_i \myabs{\ST_i} \Big) \, , \\ H_G ( \beta_i , \ST_i ) = h_i  \, , \text{\footnotemark}
  \quad \int_{T_{i,j}} \beta _i = \psi_{i,j} \, , 
\end{gathered}
\end{align}
\footnotetext{Although a formulation in terms of the potential exists, we state a problem with \(\beta _i\) to match the literature. With adjustments the variational problem is well-posed in \(\beta _i\) and the two formulations are equivalent.}
for \(i = 1, \dots, n\), \(j = 1,2\), \((i,j) \neq (1,2)\), where \(f\) affects each term in the objective function through the metric of \(\Lambda_i\), which is the Euclidean metric pulled back by \(f^*\). Namely, given constants \((p_i, h_i, \psi_{i,j})\) and regions \(\ST_i\) for each \(i,j\), our objective is to determine \((\beta _1, \dots, b_n, f)\) that solve \eqref{prob:min1}. A pair \((\beta _i, f)\) determine a magnetic field via \(B_i = (\star  \beta _i)^\flat\), where \(\flat\) is the usual flat musical isomorphism. Additionally, note that \(\myabs{\ST_i}\) is the volume of \(\ST_i\).

In problem \eqref{prob:min1}, \(T_{i,1}\) is a poloidal surface through the solid or hollow tori \(\ST_i\) for all \(i = 1,2, \dots, n\). Similarly, \(T_{i,2}\) is a toroidal surface through each hollow torus \(\ST_i\) for \(i = 2, \dots, n\). The generalisation of such surfaces is explained in subsection \ref{sub:flux}.

Pushing forward each integral in problem \eqref{prob:min1} and using a result that 
\[
H_G ( \beta_i , \ST_i ) = H_{(f^{-1})^* G} (  (f^{-1})^*  \beta_i , f(\ST_i) ) \, ,
\]
from Cantarella \cite{CANTARELLA20101127}, we have a minimisation problem with a Euclidean metric that is equivalent to \eqref{prob:min1}. For 2-forms \(\omega_i \in \mathcal{C} ( f(\ST_i) )\), with an appropriate \(G\), problem \eqref{prob:min1} is rewritten as:
\begin{align*} 
\begin{gathered}
\min_{\substack{f \in \I \\  \omega_i \in \mathcal{C} ( f(\ST_i) ) \, , \; \forall i }} \sum_{i=1}^n \Big( \frac{1}{2} \|  \omega_i\|_{L^2}^2 - p_i \myabs{f(\ST_i)} \Big) \, , \\ H_{G} ( \omega_i , f(\ST_i) ) = h_i \, , 
  \quad \int_{f(T_{i,j})}  \omega_i = \psi_{i,j} \, .
\end{gathered}
\end{align*}
This formulation is equivalent to the minimisation problem of Dewar, Yoshida, Bhattacharjee, and Hudson \cite{Dewar_Yoshida_Bhattacharjee_Hudson_2015}. These authors also consider the case in which the vector field \(B\) is a Neumann harmonic vector field, in the sense of Schwarz \cite{schwarz-1995}, on regions near the domain boundary. They derive the MRxMHD equilibrium equations \eqref{prob:MRxMHD} from \eqref{prob:min1} in a fixed gauge via a Lagrange multiplier argument \cite{Dewar_Yoshida_Bhattacharjee_Hudson_2015}.

The Stepped Pressure Equilibrium Code (SPEC) is a numerical solver for the MRxMHD variational problem, capable of reconstructing stellarator magnetic fields with arbitrarily high accuracy. SPEC can compute MRxMHD equilibria that incorporate cross-helicity conditions, field-aligned flow and angular-momentum constraints \cite{hudson2020free, qu2020stepped}. The program is used to generate and analyse stellarator configurations that exhibit low neoclassical damping relative to other reactor designs \cite{loizu2016verification, helander2008intrinsic}.

Numerical studies using SPEC provide evidence that solutions to the MRxMHD variational problem \eqref{prob:min1} can approximate MHD equilibria \eqref{eq:mhe_original}. However, because the existence of MHD equilibria is not known in general, there is no a priori guarantee that one can choose constants \(\psi_{i,j}, h_i, p_i\) for the MRxMHD problem which correspond to an MHD equilibrium \cite{10.1063/1.4795739}. More generally, passing to the infinite-interface limit in the MRxMHD variational problem \ref{prob:min1} then yields Euler–Lagrange conditions that coincide with the ideal MHD equilibrium equations \cite{10.1063/1.4795739}.

\subsection{Background: Existence of MRxMHD Equilibria}

A longstanding conjecture on the existence of MHD equilibria was proposed by Grad \cite{grad_conjecture}. Grad conjectured that, under suitable regularity assumptions, any MHD equilibrium in \(R^3\) must satisfy one of two alternatives: either the pressure is constant, or the magnetic field is invariant under a nontrivial one-parameter group of Euclidean isometries.

An early study on the existence of MHD equilibria with stepped pressure, namely MRxMHD equilibria was conducted by Bruno and Laurence who showed the existence of solutions on domains that are a perturbation away from axisymmetry \cite{bruno_laurence_1996}. In Section \ref{section:Preparatory Results} we show an equivalence between the boundary conditions they provide and the jump condition given in the MRxMHD equilibrium equations when \(\llbracket p \rrbracket \neq 0\).

More recently, Enciso, Luque, and Peralta-Salas demonstrate the existence of solutions to \eqref{prob:MRxMHD} on nondegenerate, analytic, toroidal domains using a Cauchy-Kovalevskaya theorem. They also show that these results are generic \cite{enciso2023mhd}.

The existence of energy minimisers for \(n=1\) on any given domain \(\ST\), which may not be toroidal, has been studied. When the magnetic field is flux-free, existence is guaranteed for any helicity \cite{exist_gerner}. Similarly, Laurence and Avellaneda provide the corresponding Euler-Lagrange equations and an existence proof \cite{laurence1991woltjer}. It was also shown by Parsley that if there exist magnetic fields with a given flux, helicity, and gauge then there exists a corresponding energy minimiser (Theorem 6.1 \cite{parsley2001taylor}). Similarly, a study by Yoshida and Dewar showed existence of solutions to the Beltrami equations with helicity and flux constraints \cite{yoshida2012helical}.

\subsection{Paper Layout}

One of our principal results is that the MRxMHD equations \eqref{prob:MRxMHD} are necessary and sufficient conditions for a non-zero magnetic field to be a stationary point of the associated variational problem. In particular, we replace the earlier non-degeneracy hypothesis \(\nabla p\neq0\) — which forces boundary components to be unions of tori — by the substantially weaker assumption that each partition element \(\ST_i\) is a compact, oriented, connected volume with smooth boundary. This result permits computation of MRxMHD equilibria via variational methods on a much broader class of geometries than previously considered, including domains with knotted or linked toroidal boundaries (see Figure \ref{fig:trefoil}).

In Section \ref{section:Problem Construction} we introduce the definitions required to state our principal theorems. Section \ref{section:Main Results} then states and discusses these theorems. Our formulation is expressed in terms of a relative helicity which, under a suitable gauge choice, reduces to the conventional helicity.

Section \ref{section:Preparatory Results} provides useful results and a reformulation of our variational problem constraining relative helicity. Section \ref{section:Main Proofs} presents proofs for the main results in Section \ref{section:Main Results}. Lastly, Section \ref{section:Characterisation and Local uniqueness} gives a sufficient condition for a critical point of the magnetic energy to also be a local minimum in the \(n=1 \) case.

\section{Notation}\label{section:Problem Construction}

\subsection{De Rham Cohomology and Singular Homology}

Let \((\ST_1, \dots, \ST_n)\) be a partition of $\ST$ with corresponding interfaces $I_1,\dots,I_m$ (see Section \ref{section:Introduction}). For each $i$ and each integer $k\ge 0$ we write
\[
H^k_{\mathrm{dR}}(\ST_i)\quad\text{and}\quad H_k(\ST_i)
\]
for the $k$th de Rham cohomology group and the $k$th singular homology group of $\ST_i$, respectively, both taken with real coefficients. The homology and cohomology groups of the boundary are denoted analogously for $\partial\ST_i$. The corresponding groups for \(\ST_i\) relative to the boundary \(\partial \ST_i\) are written
\[
H^k_{\mathrm{dR}}(\ST_i,\partial\ST_i)\quad\text{and}\quad H_k(\ST_i,\partial\ST_i) \, .
\]
For the constructions of singular homology and de Rham cohomology we refer to Appendix \ref{sub:construct}. Within the current section we utilise de Rham's theorem and Lefschetz duality, but prove most other results. The outcomes of this section are known and presented here for readers convenience.

For each subdomain $\ST_i$ and each integer $k \in \mathbb{Z}$ there is a bilinear pairing
\begin{align}\label{eq:pairing}
    \big\langle \eq{\omega}, \eq{z} \big\rangle = \int_{z}\omega \, ,
\qquad \eq \omega \in H^k_{\mathrm{dR}}(\ST_i) \, ,\ \ \eq z \in H_k(\ST_i) \, .
\end{align}
The pairing is well-defined by the divergence theorem: it depends only on the cohomology classes \(\eq \omega\) and \(\eq z\), not on the particular representatives \(\omega\) and \(z\). Namely, if $\omega$ is replaced by $\omega+d\nu$ with $\nu$ a smooth $(k-1)$-form, then
\[
\int_z(\omega+d\nu)-\int_z\omega=\int_z d\nu=\int_{\partial z}\nu=0,
\]
since $z$ is a cycle ($\partial z=0$). Likewise, if $z$ is replaced by $z+\partial y$ for some $(k+1)$-chain $y$, then
\[
\int_{z+\partial y}\omega-\int_z\omega=\int_{\partial y}\omega=\int_y d\omega=0,
\]
because $\omega$ is closed (\(d \omega = 0\)). The same Stokes-type argument shows the pairing descends to the relative groups. Namely, there are well defined pairings \(\langle \cdot , \, \cdot \rangle\) by integration for \(H^k_{\mathrm{dR}}(\ST_i, \partial \ST_i) \times H_k(\ST_i)\) and \(H^k_{\mathrm{dR}}(\ST_i, \partial \ST_i) \times H_k(\ST_i, \partial \ST_i)\).

We write the 3-sphere as \(\mathbb S^3\). Let \(\overline{\ST}_i^c\) denote the closure of \( (\R^3\cup\{\infty\} ) \setminus \ST_i \cong   \mathbb{S}^3 \setminus \ST_i \). Since \(\mathbb S^3\) is compact and \(\overline{\ST}_i^c\) is closed in \(\mathbb S^3\), it follows that \(\overline{\ST}_i^c\) is compact. Moreover, the first homology group \(H_1(\overline{\ST}_i^c)\) is isomorphic to the first homology of the closure of \(\R^3\setminus\ST_i\), see Appendix \ref{sub:complement} for details.
\begin{lemma}\label{Lemma:finite}
    The homologies \(H_k (\ST_i)\) and \(H_k (\overline{\ST}_i^c)\) are finite-dimensional real vector spaces for all \(k,i\).
\end{lemma}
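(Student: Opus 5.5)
The plan is to reduce finite-dimensionality of homology to a compactness statement, using the fact that both spaces are compact smooth manifolds with boundary. For \(\ST_i\) this holds by hypothesis: it is a compact, connected subset of \(\R^3\) with smooth boundary, hence a compact smooth 3-manifold with boundary. For \(\overline{\ST}_i^c\), I would first argue that it is also a compact smooth 3-manifold with boundary, embedded in \(\mathbb S^3\), with boundary \(\partial\ST_i\). Near a point of \(\partial\ST_i\), the smooth boundary gives a chart of \(\mathbb S^3\) in which \(\ST_i\) is a closed half-space; its complement's closure is then the opposite closed half-space. Away from \(\partial\ST_i\), including near \(\infty\), the set is an open subset of \(\mathbb S^3\). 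Compactness is already noted in the text just before the lemma.

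Next I would invoke the standard fact that a compact smooth manifold with boundary has the homotopy type of a finite CW complex. Two routes give this:
\begin{itemize}
\item triangulate it (Whitehead's theorem that smooth manifolds, with boundary, admit smooth triangulations, and a compact one needs only finitely many simplices);
\item use the collar neighbourhood theorem to deform-retract onto the interior minus a collar, then take a Morse function, giving a finite handle decomposition.
\end{itemize}
Singular homology with real coefficients is a homotopy invariant and agrees with simplicial/cellular homology. So \(H_k\) is a quotient of a subspace of the simplicial chain group \(C_k\), which has dimension equal to the number of \(k\)-simplices and is therefore finite. This also gives \(H_k=0\) for \(k>3\) and \(k<0\), so the statement holds for all \(k\).

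As a cross-check avoiding triangulations, one could instead apply de Rham's theorem. It identifies \(H_k(\ST_i)\) with the dual of \(H^k_{\mathrm{dR}}(\ST_i)\), via the pairing \eqref{eq:pairing}, which is nondegenerate over \(\R\). Hodge theory on compact manifolds with boundary (Schwarz) shows \(H^k_{\mathrm{dR}}\) is isomorphic to a space of harmonic Neumann fields, which is finite-dimensional by elliptic regularity and Rellich compactness. However, this is heavier machinery than triangulation, so I would present the CW/triangulation argument and mention this route only in a remark.

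The main obstacle is the first step for \(\overline{\ST}_i^c\): verifying carefully that the closure of the complement in \(\mathbb S^3\) is a genuine smooth manifold with boundary, including at \(\infty\). I would handle \(\infty\) by noting that \(\ST_i\) is bounded, so a neighbourhood of \(\infty\) lies in the open complement. The point \(\infty\) then has a standard chart of \(\mathbb S^3\) via stereographic projection. The remaining steps are citations of standard results.
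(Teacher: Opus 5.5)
Your argument is correct, but it takes a different route from the paper. The paper follows Bott--Tu. It builds a finite good cover of \(\ST_i\) from geodesically convex neighbourhoods, and proves by Mayer--Vietoris induction, with the Poincar\'e lemma as base case, that \(H^k_{\mathrm{dR}}(\ST_i)\) is finite-dimensional. Only then does it pass to singular homology via de Rham's theorem, and it asserts that the same argument applies to \(\overline{\ST}_i^c\).

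You instead work directly on the singular side. A compact smooth manifold with boundary has the homotopy type of a finite CW complex (by triangulation or by a Morse function and handle decomposition), so \(H_k\) is a subquotient of a finite-dimensional cellular chain group.

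\textbf{What your route buys.} It leans on heavier existence theorems (smooth triangulation or Morse theory), but in exchange it:
\begin{itemize}
\item never needs de Rham's theorem;
\item gives the vanishing for \(k>3\) for free;
\item treats boundary points uniformly.
\end{itemize}
Your explicit check that \(\overline{\ST}_i^c\) is a compact smooth manifold with boundary in \(\mathbb S^3\), including near \(\infty\), fills in what the paper leaves implicit.

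\textbf{What the paper's route buys.} It stays inside the differential-form framework used throughout the rest of the paper. However, its good-cover construction is written only for interior points: the exponential-map charts are built at \(x\in\mathrm{int}(\ST_i)\). So covering \(\partial\ST_i\) tacitly requires an extra step, such as a collar or a thickening.

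\textbf{A minor imprecision in your cross-check.} De Rham's theorem gives \(H^k_{\mathrm{dR}}\cong H_k^*\), not \(H_k\cong (H^k_{\mathrm{dR}})^*\). Finite-dimensionality still transfers, because an infinite-dimensional space has an infinite-dimensional dual.
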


\begin{proof}
    By construction the homologies are vector spaces. To show that these are also finite dimensional, we follow a proof by Bott and Tu (proposition 5.3.1 \cite{bott2013differential} or Appendix \ref{sub:finite}).
\end{proof}

Given Lemma \ref{Lemma:finite}, let the dimension of \(H_1 (\ST_i)\) be denoted \(\ell_i < \infty\) and the dimension of \(H_1 (\overline{\ST}_i^c)\) be denoted \(\kay_i < \infty\). If \(\ell_i, \kay_i > 0 \) then select any bases
\begin{align*}
    \big\{  \eq{s_{i,j} } \big\}_{j=1}^{\ell_i} \ \text{ for } \ H_1(\ST_i) \, , \qquad \text{and} \qquad \big\{  \eq{ t_{i,j} } \big\}_{j=1}^{\kay_i} \ \text{ for } \ H_1 (\overline{\ST}^c_i ) \, .
\end{align*}
If \(\ell_i = 0\) or \(\kay_i = 0\) then the corresponding basis is an empty set.

Assuming that \(\ell_i, \kay_i > 0 \) and bases for \( H_1(\ST_i) \) and \(H_1 (\overline{\ST}^c_i )\) are given, we may define dual bases for the corresponding de Rham cohomologies because the de Rham theorem tells us that equation \ref{eq:pairing} is a perfect pairing \cite{lee}. Namely, there are unique basis elements \(\eq{s_{i,j}^*} \in H^1_{\mathrm{dR}}(\ST_i)\) and \(\eq{ t_{i,j}^*} \in  H^1_{\mathrm{dR}}(\overline{\ST}^c_i)\) given by 
\begin{align*}
    \begin{gathered}
        \big\langle \eq{ s_{i,k}^* }, \eq{ s_{i,j} } \big\rangle = 
        \delta_{jk}
        \, , \ \text{ for } \ k,j = 1,2, \dots, \ell_i \, ,
        \qquad 
        \big\langle \eq{ t_{i,k}^* }, \eq{ t_{i,j} } \big\rangle = \delta_{jk} \, , \ \text{ for } \ k,j = 1,2, \dots, \kay_i  \, .
    \end{gathered}
\end{align*}
Lefschetz duality combined with de Rham duality provides an isomorphism over compact, oriented 3-manifolds \cite{bott2013differential}
\begin{align*}
    H^1_{\mathrm{dR}}(\ST_i) \cong H_2(\ST_i, \partial \ST_i)
\end{align*}
given by
\begin{align*}
     [\omega] \mapsto [z] \qquad \text{for} \qquad  \int_{\ST_i} \omega \wedge \omega_0
    =   \int_{z}   \omega_0 \, , \quad \text{for all } [ \omega_0 ] \in H^2_{\mathrm{dR}}(\ST_i, \partial \ST_i) \, ,
\end{align*}
where \([\omega] \in H^1_{\mathrm{dR}}(\ST_i)\) and \([z] \in H_2(\ST_i, \partial \ST_i) \). Similarly, for \(H^1_{\mathrm{dR}}(\overline{\ST}^c_i) \cong H_2(\overline{\ST}^c_i, \partial \overline{\ST}^c_i)\). 

With this isomorphism we may utilise the bases for \(H^1_{\mathrm{dR}}(\ST_i)\) and \(H^1_{\mathrm{dR}}(\overline{\ST}^c_i) \) to generate bases for \(H_2(\ST_i, \partial \ST_i)\) and \(H_2(\overline{\ST}^c_i, \partial \overline{\ST}^c_i)\) respectively. These basis elements are denoted \([T_{i,j}]  \in H_2(\ST_i, \partial \ST_i)\), \([S_{i,j}]\in H_2(\overline{\ST}^c_i, \partial \overline{\ST}^c_i)\), and they are given uniquely by the relations,
\begin{align*}
    \int_{\ST_i} s_{i,j}^* \wedge \omega_1  =   \int_{T_{i,j}}  \omega_1 \, , \ \text{ for } \ j = 1,2, \dots, \ell_i \, , 
    \qquad 
    \int_{\overline{\ST}^c_i} t_{i,j}^* \wedge \omega_2  =   \int_{S_{i,j}}  \omega_2\, , \ \text{ for } \ j = 1,2, \dots, \kay_i \, , 
\end{align*}
for all \(\omega_1 \in H^2_{\mathrm{dR}}(\ST_i, \partial \ST_i)\) and \(\omega_2 \in H^2_{\mathrm{dR}}(\overline{\ST}^c_i, \partial \overline{\ST}^c_i)\). 

Utilising de Rham's theorem, the corresponding algebraic duals are computed via the perfect pairing in equation \eqref{eq:pairing}, acting on \( H^2_{\mathrm{dR}}({\ST}_i, \partial {\ST}_i) \times H_2( {\ST}_i, \partial \ST_i)\) or \( H^2_{\mathrm{dR}}(\overline{\ST}^c_i, \partial {\ST}_i) \times H_2( \overline{\ST}^c_i, \partial \ST_i)\). Hence, we define the corresponding basis elements \([T_{i,j}^*] \in H^2_{\mathrm{dR}}( \ST_i, \partial \ST_i) \) and \([S_{i,j}^*]\in H^2_{\mathrm{dR}}( \overline{\ST}^c_i, \partial \ST_i) \) via the relationships
\begin{align*}
    \begin{gathered}
        \big\langle [T_{i,k}^*], [T_{i,j}] \big\rangle = 
        \delta_{jk}
        \, , \ \text{ for } \ k,j = 1,2, \dots, \ell_i \, ,
        \qquad 
        \big\langle [S_{i,k}^*], [S_{i,j}] \big\rangle = \delta_{jk} \, , \ \text{ for } \ k,j = 1,2, \dots, \kay_i  \, .
    \end{gathered}
\end{align*}
We note that combining these algebraic duals with the Lefschetz duality isomorphism, gives the usual non-degenerate Lefschetz pairing,
\begin{align*}
    \int_{\ST_i} s_{i,j}^* \wedge T_{i,k}^* & =   \int_{T_{i,j}} T_{i,k}^* = \delta_{jk}\, , \ \text{ for } \ j,k = 1,2, \dots, \ell_i \, , 
    \\ 
    \int_{\overline{\ST}^c_i} t_{i,j}^* \wedge S_{i,k}^*  &=   \int_{S_{i,j}}  S_{i,k}^*  = \delta_{jk} \, , \ \text{ for } \ j,k = 1,2, \dots, \kay_i \, .
\end{align*}
We are interested in one final collection of variables, again given through an isomorphism. This isomorphism we provide in the following Lemma.
\begin{lemma}\label{Lemma:isomorphism_inclusion}
    Consider the inclusion maps \(\mathcal{J} : \partial \ST_i \hookrightarrow \ST_i\) and \(\mathcal{J}^c : \partial \ST_i \hookrightarrow \overline{\ST}^c_i\). Then \((\mathcal{J}_*, \mathcal{J}^c_*) : H_1(\partial \ST_i) \to H_1(\ST_i) \oplus H_1(\overline{\ST}^c_i)\) is an isomorphism.
\end{lemma}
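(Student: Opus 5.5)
We will use the Mayer--Vietoris sequence for the decomposition \(\mathbb{S}^3 = \ST_i \cup \overline{\ST}^c_i\) with \(\ST_i \cap \overline{\ST}^c_i = \partial \ST_i\). Since \(\partial\ST_i\) is a smooth closed surface in \(\mathbb{S}^3\), it has a collar neighbourhood on both sides. Thickening \(\ST_i\) and \(\overline{\ST}^c_i\) slightly across \(\partial\ST_i\) gives open sets \(U \simeq \ST_i\) and \(V \simeq \overline{\ST}^c_i\) with \(U\cup V = \mathbb{S}^3\) and \(U\cap V \simeq \partial \ST_i \times (-\epsilon,\epsilon) \simeq \partial\ST_i\). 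The homotopy equivalences are deformation retractions along the collar, and they are compatible with the inclusions. Hence the Mayer--Vietoris sequence in singular homology with real coefficients applies verbatim with \(\ST_i\), \(\overline{\ST}^c_i\) and \(\partial\ST_i\) in place of \(U\), \(V\) and \(U\cap V\).

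The relevant portion of the sequence is
\[
H_2(\mathbb{S}^3) \xrightarrow{\ \partial_*\ } H_1(\partial\ST_i) \xrightarrow{(\mathcal{J}_*,\,\mathcal{J}^c_*)} H_1(\ST_i)\oplus H_1(\overline{\ST}^c_i) \longrightarrow H_1(\mathbb{S}^3) \, .
\]
The middle map is \((\mathcal{J}_*, -\mathcal{J}^c_*)\) under the usual sign convention. Composing with the automorphism \((x,y)\mapsto(x,-y)\) of the target turns it into \((\mathcal{J}_*,\mathcal{J}^c_*)\) without affecting injectivity or surjectivity. Since \(H_1(\mathbb{S}^3)=H_2(\mathbb{S}^3)=0\), exactness gives that this map is both injective and surjective, hence an isomorphism of real vector spaces. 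By Lemma \ref{Lemma:finite}, all the spaces involved are finite dimensional. As a byproduct we obtain \(\dim H_1(\partial\ST_i) = \ell_i + \kay_i\).

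The only step needing care is justifying Mayer--Vietoris for the closed cover \(\{\ST_i,\overline{\ST}^c_i\}\). This is where the collar theorem for the smooth boundary \(\partial\ST_i\) is used, and we expect it to be the main technical point. We must also check that \(\overline{\ST}^c_i\) is a compact manifold with boundary exactly \(\partial\ST_i\), including when the point at infinity is added, as discussed in Appendix \ref{sub:complement}. Alternatively, one could work on the de Rham side: use the Mayer--Vietoris sequence for forms with \(H^1_{\mathrm{dR}}(\mathbb{S}^3)=H^2_{\mathrm{dR}}(\mathbb{S}^3)=0\), and then dualise via the perfect pairing \eqref{eq:pairing}. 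The homological argument is more direct, so we will present that one.
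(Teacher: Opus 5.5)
Your proposal is correct and matches the paper's proof: both apply the Mayer--Vietoris sequence for $\mathbb{S}^3 = \ST_i \cup \overline{\ST}^c_i$, use $H_1(\mathbb{S}^3)=H_2(\mathbb{S}^3)=0$, and pass from $(\mathcal{J}_*,-\mathcal{J}^c_*)$ to $(\mathcal{J}_*,\mathcal{J}^c_*)$ by the sign change. Your collar-neighbourhood argument for applying Mayer--Vietoris to this closed cover fills in a step the paper leaves implicit, but the argument is otherwise the same.
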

\begin{proof} 
    By a Mayer–Vietoris sequence for the decomposition \( \mathbb S^3 = \ST_i \cup \overline{\ST}^c_i\), we obtain for \(k>0\) the exact sequence
    \begin{align*}
        H_{k+1} (\mathbb{S}^3) \to H_k (\partial \ST_i) \xrightarrow{ \, (\mathcal{J}_* , - \mathcal{J}^c_* ) \, } H_k (\ST_i) \oplus H_k (\overline{\ST}_i^c) \to H_{k} (\mathbb{S}^3) \, .
    \end{align*}
    We have that \( H_{k} (\mathbb{S}^3) \cong \{0\}\) for \(k=1,2\), hence the map \((\mathcal{J}_* , -\mathcal{J}^c_* )\) is an isomorphism 
    \begin{align}\label{eq:split_boundary}
        H_1 (\partial \ST_i) \cong H_1 (\ST_i) \oplus H_1 (\overline{\ST}_i^c) \, ,
    \end{align}
    as is the map \((\mathcal{J}_* , \mathcal{J}^c_* )\). 
\end{proof}
As \((\mathcal{J}_*, \mathcal{J}^c_*) : H_1(\partial \ST_i) \to H_1(\ST_i) \oplus H_1(\overline{\ST}^c_i)\) is an isomorphism, it has an inverse \((\mathcal{J}_*, \mathcal{J}^c_*)^{-1}\) giving a unique element \([\sigma_{i,j}] \coloneqq (\mathcal{J}_*, \mathcal{J}^c_*)^{-1} ( [s_{i,j}] , 0) \in H_1 (\partial \ST_i)\) for each \(i,j\). Similarly, we define a unique element \([\tau_{i,j}] \coloneqq (\mathcal{J}_*, \mathcal{J}^c_*)^{-1} ( 0, [t_{i,j}] ) \in H_1 (\partial \ST_i)\) for each \(i,j\).

\begin{lemma}\label{Lemma:form_basis}
Given non-empty bases \( \{[s_{i,j}]\}_{j=1}^{\ell_i}\), \(\{[t_{i,j}]\}_{j=1}^{\kay_i}\), of \(H_1(\ST_i)\) and \(H_1(\overline{\ST}^c_i)\) respectively, then
\[
\big\{[\sigma_{i,1}], \dots , [\sigma_{i,\ell_i}],[\tau_{i,1}], \dots, [\tau_{i,\kay_i}]\big\}
\]
is a basis of \(H_1(\partial\ST_i)\).
\end{lemma}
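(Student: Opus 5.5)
The plan is to transport the obvious basis of the direct sum \(H_1(\ST_i)\oplus H_1(\overline{\ST}^c_i)\) back to \(H_1(\partial\ST_i)\) through the inverse of the isomorphism of Lemma \ref{Lemma:isomorphism_inclusion}. Write \(\Phi \coloneqq (\mathcal{J}_*,\mathcal{J}^c_*)\).

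First I check that the collection
\[
\big\{([s_{i,1}],0),\dots,([s_{i,\ell_i}],0),(0,[t_{i,1}]),\dots,(0,[t_{i,\kay_i}])\big\}
\]
is a basis of \(H_1(\ST_i)\oplus H_1(\overline{\ST}^c_i)\). This is elementary linear algebra.

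\emph{Spanning.} An arbitrary element \((x,y)\) decomposes as \((x,0)+(0,y)\). We then expand \(x=\sum_j a_j[s_{i,j}]\) and \(y=\sum_j b_j[t_{i,j}]\) using the given bases.

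\emph{Independence.} Suppose
\[
\sum_j a_j([s_{i,j}],0)+\sum_j b_j(0,[t_{i,j}])=0 .
\]
Then \(\bigl(\sum_j a_j[s_{i,j}],\ \sum_j b_j[t_{i,j}]\bigr)=(0,0)\). Each component vanishes, so every \(a_j=0\) and every \(b_j=0\) by independence of the given bases.

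Next I use that \(\Phi\) is a linear isomorphism by Lemma \ref{Lemma:isomorphism_inclusion}, so \(\Phi^{-1}\) is also a linear isomorphism. A linear isomorphism carries bases to bases: images of a spanning set span the image, which is all of \(H_1(\partial\ST_i)\) by surjectivity, and linear independence is preserved by injectivity. By definition,
\[
\Phi^{-1}([s_{i,j}],0)=[\sigma_{i,j}], \qquad \Phi^{-1}(0,[t_{i,j}])=[\tau_{i,j}],
\]
which gives the claim.

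There is no real obstacle. The only points of care are the following.
\begin{enumerate}[label=(\roman*)]
\item Linearity of \(\Phi^{-1}\) is needed, and it follows since \(\mathcal{J}_*\) and \(\mathcal{J}^c_*\) are linear.
\item The degenerate cases \(\ell_i=0\) or \(\kay_i=0\), where one of the lists is empty, go through verbatim, although the statement assumes non-empty bases anyway.
\item As a consistency check, finite-dimensionality from Lemma \ref{Lemma:finite} gives \(\dim H_1(\partial\ST_i)=\ell_i+\kay_i\). This matches the count of listed classes, and these are pairwise distinct because \(\Phi^{-1}\) is injective.
\end{enumerate}
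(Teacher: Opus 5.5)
Your proposal is correct and matches the paper's proof: both take the obvious basis of \(H_1(\ST_i)\oplus H_1(\overline{\ST}^c_i)\) and transport it through the inverse of the isomorphism \((\mathcal{J}_*,\mathcal{J}^c_*)\) from Lemma \ref{Lemma:isomorphism_inclusion}, using that an isomorphism sends bases to bases. The only difference is that you write out the elementary spanning and independence check for the direct-sum basis, which the paper leaves implicit.
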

\begin{proof}
The collection \(\{([s_{i,1}],0), \dots, (0,[t_{i,1}]), \dots\}\) is a basis of the direct sum \(H_1(\ST_i)\oplus H_1(\overline{\ST}_i^c)\). Applying the isomorphism \((\mathcal J_*,\mathcal J^c_*)^{-1}\) carries that basis to the set \(\{[\sigma_{i,1}]\mathbin{,}\dots \mathbin{,} [\tau_{i,1}] \mathbin{,} \dots\}\) in \(H_1(\partial\ST_i)\). An isomorphism sends bases to bases, therefore giving a basis of \(H_1(\partial\ST_i)\).
\end{proof}

We can consider maps induced on de Rham cohomologies via the inclusions \(\mathcal J\) and \(\mathcal J^c\). In this case there is a Mayer-Vietoris sequence for de Rham classes
\begin{align*}
    \{0\} \cong H^{1}_{\mathrm{dR}} ( \mathbb{S}^3 ) \to H^{1}_{\mathrm{dR}} ( \ST_i ) \oplus H^{1}_{\mathrm{dR}} ( \overline{\ST}^c_i ) \xrightarrow{\mathcal{J}^* -(\mathcal{J}^c)^*} H^{1}_{\mathrm{dR}} ( \partial \ST_i ) \to H^{2}_{\mathrm{dR}} (  \mathbb{S}^3 ) \cong \{0\} \, , 
\end{align*} 
so the map \(\mathcal{J}^*- (\mathcal{J}^c)^*\) is an isomorphism, acting as
\begin{align*}
    (\mathcal{J}^*- (\mathcal{J}^c)^*) ([s],[t]) = \mathcal{J}^*[s]- (\mathcal{J}^c)^*[t] \qquad \text{for} \qquad ([s], [t]) \in H^{1}_{\mathrm{dR}} ( \ST_i ) \oplus H^{1}_{\mathrm{dR}} ( \overline{\ST}^c_i ) \, .
\end{align*}
Hence \(\mathcal{J}^* + (\mathcal{J}^c)^* \) is an isomorphism. We define the unique elements \([\sigma_{i,j}^*] \coloneqq (\mathcal{J}^* + (\mathcal{J}^c)^*) ( [s_{i,j}^*] , 0) \in H^1_{\mathrm{dR}} (\partial \ST_i)\) and \([\tau_{i,j}^*] \coloneqq (\mathcal{J}^* + (\mathcal{J}^c)^*)( 0, [t_{i,j}^*] ) \in H^1_{\mathrm{dR}} (\partial \ST_i)\). Note that \([\sigma_{i,j}^*], [\tau_{i,j}^*]\) are clearly not defined as the duals of \([\sigma_{i,j}], [\tau_{i,j}]\) respectively. But these are, in fact, the unique duals in \(H_1(\partial \ST_i)\), as
\begin{align*}
\big\langle [\sigma^*_{i,k}],[\sigma_{i,j}] \big\rangle
=\int_{\sigma_{i,j}} \sigma^*_{i,k}
=\int_{\sigma_{i,j}} \mathcal J^* (s^*_{i,k}) +  (\mathcal{J}^c)^* (0)
= \int_{\sigma_{i,j}} \mathcal J^* (s^*_{i,k})
= \int_{\mathcal J^* \sigma_{i,j}} s^*_{i,k}
= \int_{s_{i,j}} s^*_{i,k}
= \delta_{jk} \, , 
\end{align*}
similarly \(\delta_{jk} = \langle  [\tau_{i,k}^*]  , [\tau_{i,j}] \rangle\). Additionally, the pairings between the remaining elements,
\begin{align*}
\big\langle [\tau^*_{i,k}],[\sigma_{i,j}] \big\rangle
= \int_{\sigma_{i,j}} \tau^*_{i,k}
= \int_{\sigma_{i,j}}   \mathcal J^* (0) +  (\mathcal{J}^c)^* (t^*_{i,k})
= \int_{\mathcal{J}^c_* \sigma_{i,j}}   t^*_{i,k}
= 0\, , 
\end{align*}
similarly \(\langle [\sigma^*_{i,k}],[\tau_{i,j}]\rangle=0\).

\begin{theorem}\label{thr:Alexander}
    For \(\ell_i > 0\) consider any basis \( \{ [s_{i,1}], \dots , [s_{i,\ell_i}]\} \subset H_1 ( \ST_i) \). Then there exist a choice of basis \(\{[t_{i,1}], \dots , [t_{i,\ell_i}]\}\) for \(H_1 (\overline{\ST}^c_i )\) so that the following additional conditions hold:
    \begin{enumerate}
        \item For all \( j, k \in \{ 1, \dots, \ell_i \} \) and \(i = 1,2, \dots, n\): 
        \begin{align*}
        \begin{gathered}
             \int_{\partial \ST_i} \tau_{i,j}^* \wedge \sigma_{i,k}^* = \delta_{jk}  \,, \qquad
            \int_{\partial \ST_i} \sigma_{i,j}^* \wedge \sigma_{i,k}^* = 
             0 \,,    \qquad 
             \int_{\partial \ST_i} \tau_{i,j}^* \wedge \tau_{i,k}^*
            = 0 \, .
        \end{gathered}
        \end{align*}   
        \item The basis elements \( [S_{i,j}] \in H_2( \overline{\ST}^c_i, \partial \overline{\ST}^c_i) \) and \( [T_{i,j}] \in H_2( \ST_i, \partial \ST_i) \) satisfy \( \partial [S_{i,j}] =  [\sigma_{i,j}] \) and \( \partial [T_{i,j}] =  [\tau_{i,j}] \) for \( j = 1, \dots, \ell_i \). 
    \end{enumerate}
\end{theorem}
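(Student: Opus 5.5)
The plan is to treat the whole statement as a consequence of Poincaré duality on the closed surface \(\partial\ST_i\) together with the fact that the images of \(H_1\) of the two sides are Lagrangian. First I will show \(\kay_i=\ell_i\). By Lemma \ref{Lemma:isomorphism_inclusion}, \(\dim H_1(\partial\ST_i)=\ell_i+\kay_i\). The half-lives-half-dies theorem says \(\ker\big(H_1(\partial\ST_i)\to H_1(\ST_i)\big)\) has dimension \(\tfrac12\dim H_1(\partial\ST_i)\). I would prove it with the long exact sequence of the pair \((\ST_i,\partial\ST_i)\) and Lefschetz duality, using that the boundary map \(H_2(\ST_i,\partial\ST_i)\to H_1(\partial\ST_i)\) is adjoint to \(\mathcal J^*\) under the intersection pairing. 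That kernel is \(\mathrm{span}\{[\tau_{i,j}]\}\), which has dimension \(\kay_i\). Hence \(\kay_i=\tfrac12(\ell_i+\kay_i)\), so \(\kay_i=\ell_i\).

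Next, the vanishing relations in part 1. Take \(\sigma^*_{i,j}=\mathcal J^*s^*_{i,j}\) with \(s^*_{i,j}\) closed on \(\ST_i\). Stokes gives
\[
\int_{\partial\ST_i}\mathcal J^*(s^*_{i,j}\wedge s^*_{i,k})=\int_{\ST_i}d(s^*_{i,j}\wedge s^*_{i,k})=0 .
\]
The same computation on \(\overline{\ST}^c_i\), whose boundary is \(\partial\ST_i\) with the opposite orientation, gives \(\int\tau^*\wedge\tau^*=0\). These relations hold for any choice of bases. The pairing \((a,b)\mapsto\int_{\partial\ST_i}a\wedge b\) on \(H^1_{\mathrm{dR}}(\partial\ST_i)\) is nondegenerate by Poincaré duality. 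It vanishes on the \(\ell_i\)-dimensional subspaces \(V=\mathrm{span}\{\sigma^*\}\) and \(W=\mathrm{span}\{\tau^*\}\), and \(V\oplus W\) is the whole space. Therefore the matrix \(M_{jk}=\int\tau^*_{i,j}\wedge\sigma^*_{i,k}\) is invertible.

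Now I adjust the basis. Replace \([t_{i,j}]\) by \([t'_{i,j}]=\sum_k A_{jk}[t_{i,k}]\). The dual basis then transforms by \((A^{-1})^T\), and so do the \(\tau^*\). Choosing \(A\) so that \((A^{-1})^TM=I\), that is \(A=M^T\), yields \(\int\tau'^*_{i,j}\wedge\sigma^*_{i,k}=\delta_{jk}\). The \([\sigma_{i,j}]\) are unchanged because \([s_{i,j}]\) is fixed. This establishes part 1.

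For part 2, I identify \(\partial[T_{i,j}]\) through pairings in \(H_1(\partial\ST_i)\).
\begin{itemize}
\item \(\partial[T_{i,j}]\) lies in \(\ker\mathcal J_*\), so \(\langle\sigma^*_{i,k},\partial T_{i,j}\rangle=\int_{T_{i,j}}ds^*_{i,k}=0\). This means it lies in \(\mathrm{span}\{\tau\}\).
\item Its \(\tau\)-coefficients are \(\int_{\partial T_{i,j}}\tau^*_{i,k}\). To compute them, extend \(\tau^*_{i,k}\) to a 1-form \(\eta\) on \(\ST_i\); this is possible since extension is only required at the form level. Stokes then gives \(\int_{T_{i,j}}d\eta\).
\item The form \(d\eta\) is closed with \(t\,d\eta=0\), so it represents a class in \(H^2_{\mathrm{dR}}(\ST_i,\partial\ST_i)\). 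The defining relation of \(T_{i,j}\) gives \(\int_{T_{i,j}}d\eta=\int_{\ST_i}s^*_{i,j}\wedge d\eta\). Integrating by parts, this equals \(\pm\int_{\partial\ST_i}\sigma^*_{i,j}\wedge\tau^*_{i,k}=\pm\delta_{jk}\) by part 1.
\end{itemize}
Hence \(\partial[T_{i,j}]=\pm[\tau_{i,j}]\). The argument for \(S\) is symmetric on \(\overline{\ST}^c_i\) and gives \(\partial[S_{i,j}]=\pm[\sigma_{i,j}]\).

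The main obstacle is making the signs come out exactly as stated, since orientation conventions on \(\partial\overline{\ST}^c_i\) and the order of wedge factors matter. I would resolve this by fixing conventions once: \(\partial\overline{\ST}^c_i=-\partial\ST_i\), and the Stokes sign \(\int s\wedge d\eta=-\int d(s\wedge\eta)\). If a stray sign remains, I absorb it by replacing \(A\) with \(-A\) or by the sign freedom in the ordering \(\tau^*\wedge\sigma^*\). A second delicate point is the half-lives-half-dies dimension count when \(\partial\ST_i\) is disconnected, which I would handle componentwise.
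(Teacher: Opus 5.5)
Your proof is correct, but it runs in the opposite direction to the paper's. The paper never starts from an arbitrary basis of $H_1(\overline{\ST}^c_i)$. Instead it \emph{defines} $[t_{i,j}] \coloneqq \partial\mathcal{J}^0_*[T_{i,j}]$, using excision $H_2(\ST_i,\partial\ST_i)\cong H_2(\R^3,\overline{\ST}^c_i)$ followed by the connecting isomorphism onto $H_1(\overline{\ST}^c_i)$; this is Alexander duality made explicit. With that choice:
\begin{itemize}
\item $\kay_i=\ell_i$ is automatic;
\item $\partial[T_{i,j}]=[\tau_{i,j}]$ follows from $\mathcal{J}_*\partial=0$ and the isomorphism $(\mathcal{J}_*,\mathcal{J}^c_*)$;
\item the cross relation in part 1 is read off from $\int_{T_{i,j}}T^*_{i,k}=\delta_{jk}$, by writing $T^*_{i,k}=d\Gamma_{i,k}$ and expanding $[\mathcal{J}^*\Gamma_{i,k}]$ in the $\sigma^*,\tau^*$ basis;
\item only $\partial[S_{i,j}]=[\sigma_{i,j}]$ uses the extension-plus-Stokes argument that you use for both halves of part 2.
\end{itemize}
You instead impose part 1 by linear algebra on $H^1_{\mathrm{dR}}(\partial\ST_i)$: complementary isotropic subspaces of a nondegenerate form have an invertible cross matrix, and $A=M^T$ normalises it. You then derive part 2 symmetrically for $T$ and $S$. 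Your route avoids excision and the relative homology of $(\R^3,\overline{\ST}^c_i)$ altogether. It also makes clear that, once the $s$-basis is fixed, the $t$-basis satisfying part 1 is unique. The paper's route gives the geometric picture of $t_{i,j}$ as the boundary of the surface dual to $s_{i,j}$.

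Two loose ends are worth tightening.

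First, you can drop half-lives-half-dies. $V$ and $W$ are isotropic for a nondegenerate form and $V\oplus W=H^1_{\mathrm{dR}}(\partial\ST_i)$, so each has at most half the dimension, which forces $\kay_i=\ell_i$. If you do use it, do not go componentwise. The half-dimensional kernel claim fails for individual components: for the hollow torus $[1,2]\times\mathbb{T}^2$, each boundary torus injects into $H_1$.

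Second, the signs are determined, not ambiguous. With $d(s^*\wedge\eta)=-s^*\wedge d\eta$ and outward Stokes you get $\int_{\partial T_{i,j}}\tau^*_{i,k}=\int_{\partial\ST_i}\tau^*_{i,k}\wedge\sigma^*_{i,j}$. On the complement, the reversal $\partial\overline{\ST}^c_i=-\partial\ST_i$ cancels the Stokes sign, giving $\int_{\partial S_{i,j}}\sigma^*_{i,k}=\int_{\partial\ST_i}\tau^*_{i,j}\wedge\sigma^*_{i,k}$. Both equal $+\delta_{jk}$ by part 1. These identities show that part 1 and each half of part 2 are the same normalisation, $M=I$. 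So your fallback of replacing $A$ by $-A$ was never available, since it would break part 1, but it is also never needed.
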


\begin{proof}
    An exposition of the original proof by Cantarella and Parsley is adapted into differential forms and presented in Appendix \ref{section:Alexander Basis} (theorems B.2, B.3 \cite{CANTARELLA20101127}). 
\end{proof}

For each domain \(\ST_i\) with \(\ell_i > 0\), any choice of basis for \(H_1(\ST_i)\), together with a basis for \(H_1(\overline{\ST}_i^{\,c})\) as in Theorem~\ref{thr:Alexander}, induces a basis for \(H_1(\partial \ST_i)\) via Lemma~\ref{Lemma:form_basis}. We refer to such a basis of \(H_1(\partial \ST_i)\) as an \emph{Alexander basis}. If \(\ell_i = 0\), then \(H_1(\partial \ST_i) \cong \{0\}\) (see appendix~\ref{section:Alexander Basis}), and we define the Alexander basis to be empty.

The constructions in this section are encoded by the following commutative diagram, which records the relevant homology classes associated to the reference basis element \([s_{i,j}] \in H_1(\ST_i)\). Taking duals with respect to the corresponding basis yields the associated de Rham cohomology basis elements:
\[
    \begin{tikzcd}[column sep=5em]
        {[T_{i,j}]\in H_2(\ST_i,\partial\ST_i)} \arrow[swap]{d}{\partial}
        & {[s_{i,j}]\in H_1(\ST_i)} \arrow[swap]{l}{\text{LD}^*}  
        & {[\sigma_{i,j}]\in H_1(\partial\ST_i)} \arrow[swap]{l}{\mathcal{J}_*} \\
        {[\tau_{i,j}]\in H_1(\partial\ST_i)} \arrow{r}{\mathcal{J}^c_*}
        & {[t_{i,j}]\in H_1(\overline{\ST}_i^c)} \arrow{r}{\text{LD}^*} 
        & {[S_{i,j}]\in H_2(\overline{\ST}^c_i,\partial\overline{\ST}^c_i)} \arrow[swap]{u}{\partial}
    \end{tikzcd}
\]
Here \(\mathrm{LD}^*\) denotes Lefschetz duality followed by passage to the dual basis. The induced map \([s_{i,j}] \mapsto [t_{i,j}]\), together with this dualisation, realises \emph{Alexander duality}. Moreover, the composition of all maps around the diagram is the identity.

\begin{figure} 
  \centering
  \begin{overpic}[width=0.5\linewidth]{Chapter_Problem_Construction/topology3.png}
  \end{overpic}
  \caption{A volume with 3 toroidal boundaries.}\label{fig:topology} 
\end{figure}

We define the \emph{genus} of \(\partial \ST_i\) to be the dimension \(\ell_i\) of \(H_1(\ST_i)\) (equivalently, of \(H_1(\overline{\ST}^{\,c}_i)\)). This equals the total genus of the boundary \(\partial \ST_i\). In MHD, as discussed in Subsection \ref{sub:back}, we typically work with toroidal level sets and boundaries. In this setting, the genus is obtained by counting the number of disconnected toroidal boundary components, since each torus has genus \(1\). For example, the domain shown in Figure \ref{fig:topology}, whose boundary consists of three tori, has total genus \(3\) \cite{cantarella2002vector}.

\subsection{Relative Helicity}\label{subsection:Relative Helicity}

As noted in the introduction \ref{sub:setup}, helicity \(H_{\Omega^1(\ST_i) }( \beta_i , \ST_i ) \) is not, in general, a well-defined function of \( \beta _i \), owing to its dependence on the choice of gauge for the associated potential. One way to obtain a well-defined quantity is to restrict the admissible class of potentials \(G\) so that the helicity \(H_{G }( \beta_i , \ST_i ) \) is uniquely determined for any given \( \beta _i \), as is done in problem \ref{prob:min1}. 

An alternative, which we adopt here, is to introduce a \emph{relative helicity}. This quantity is constructed to be independent of the gauge choice for a given \( \beta _i \), while coinciding with the standard helicity when evaluated in a fixed reference gauge. In this way, helicity retains a clear physical interpretation without the need to impose additional constraints on the potential.

Given an Alexander basis we define the \textit{relative helicity} of \(\beta_i\) over \(\ST_i\) for \(\ell_i > 0\) and any primitive \(a_i \in \Omega^1(\ST_i)\) of \(b_i\):
\begin{align*}
    \mathscr{H}_{a_i}(\beta_i, \ST_i) \coloneqq \int_{\ST_i} a_i \wedge da_i - \sum_{j=1}^{\ell_i} \int_{\sigma_{i,j}} a_i \int_{\tau_{i,j}} a_i \, ,
\end{align*}
and \(\mathscr{H}_{a_i}(\beta_i, \ST_i) \coloneqq H_G (\beta_i, \ST_i)\) for \(\ell_i = 0\), and \(G=\Omega^1(\ST_i)\). Refer to Lemma \ref{Lemma:well-pose} to see that the relative helicity is independent of the chosen primitive for a given \(\beta _i\).

We refer to the following set as an Amperian gauge:
\begin{align}\label{eq:amperian}
    \mathscr{A}(\ST_i) = \bigg\{ a_i \in  \Omega^1 (\ST_i) \, : \; \int_{\sigma_{i,j}} a_i = 0 \, , \; \forall j \bigg\} \, .
\end{align}
An Amperian primitive for \(\beta _i\) is any \(a_i \in \mathscr{A}(\ST_i) \) such that \(d a_i = \beta _i\). See Lemma \ref{Lemma:gauge_exist} for a proof that any \(\beta _i \in \C\) has a primitive in \(\mathscr{A}(\ST_i)\). We compute the following change in helicity by applying the divergence theorem:
\begin{align*}
    H_{G}(\beta_i, \ST_i) - H_{G'}(\beta_i, \ST_i) = \int_{\partial \ST_i} s \wedge a_i \, ,
\end{align*}
for primitives \(a_i \in G, a_i + s \in G'\) of \(\beta _i\), where \(s\) is closed, for some \(G,G' \subseteq \Omega^1(\ST_i)\) that ensure helicity is independent of the chosen potential. This expression can be interpreted as a cup product on the boundary \( \partial \ST_i \), and its value depends on the interaction between cohomology classes for primitives. As the Amperian gauge fixes the primitive up to a de Rham cohomology class, the helicity is independent of the Amperian primitive choice (Lemma \ref{Lemma:well-pose}). Note that the definition of an Amperian gauge is similar to the definition of Amperian knots given by Cantarella \cite{plasma_flow}. 

Selecting an Amperian gauge reduces the relative helicity to helicity,
\begin{align*}
    H_{\mathscr{A}(\ST_i)}(\beta_i,\ST_i) = \mathscr{H}_{a_i + s}(\beta_i,\ST_i) \quad \text{for} \quad a_i \in \mathscr{A}(\ST_i)  \, ,    
\end{align*}
for all closed \(s \in \Omega^1(\ST)\). Recall that all primitives of \(\beta \) are of the form \(a_i + s\). Intuitively, the relative helicity encodes the helicity in an Amperian reference gauge. Therefore, any result using a relative helicity also holds for helicity under a restriction of the primitives to an Amperian gauge.

\subsection{Hodge-Morrey-Friedrichs Decomposition} \label{sub:Hodge}

The space $\C$ decomposes via a Hodge-Morrey-Friedrichs (HMF) decomposition \cite{schwarz-1995, Gauge_freedom}. Firstly, we define some notation for function spaces. We reserve \(\mathcal{H}^k(\ST_i)\) for harmonic \(k\)-forms on \(\ST_i\), and a subscript \(D\) or \(N\) indicates the Dirichlet of Neumann boundary conditions, as in \( \Omega^k (\ST_i)\). More explicitly, we have
\begin{align*}
    \mathcal{H}^k_D (\ST_i)  = \{   \lambda \in  \Omega^k (\ST_i) \, : \; d \lambda =  \delta \lambda =  t \lambda = 0    \} \, , \qquad \mathcal{H}^k_N (\ST_i)  = \{   \lambda \in  \Omega^k (\ST_i) \, : \; d \lambda =  \delta \lambda =  n \lambda = 0    \} \, . 
\end{align*}
Consider \(\beta _i \in \C\), then we have a HMF decomposition (see Appendix \ref{append:Hodge}):
\begin{align*}
    \star \beta _i = \delta (\star \eta_i) + d \phi_i + dz_i + \star \lambda_i \, , 
\end{align*}
where \(\star \eta_i \in  \Omega_N^2 (\ST_i) \), \(\lambda_i \in \mathcal{H}_D^2(\ST_i)\), and \(\phi_i,z_i \in  \Omega^0 (\ST_i)\) for \(\phi_i|_{\partial \ST_i} = 0\), and \(z_i\) is a harmonic function \(\Delta z_i = 0\) where \(\Delta \) is the Laplacian. Via an application of a Hodge star isomorphism:
\begin{align*}
    \beta _i = d \eta_i + \star d\phi_i + \star dz_i + \lambda_i \, .
\end{align*}
Using \(d \beta _i = d \star d \phi_i = 0\) and \(\phi_i|_{\partial \ST_i} = 0\) with the divergence theorem:
\begin{align*}
    \| d \phi_i \|_{L^2}^2 = \int_{\ST_i}d \phi_i \wedge \star d \phi_i = -  \int_{\ST_i} \phi_i \wedge d \star d \phi_i + \int_{\partial \ST_i} \phi_i \wedge \star d \phi_i = 0 \, .
\end{align*}
With smoothness of \(\phi_i\), we have that \(d \phi_i = 0\). Similarly for the \(dz_i\) component, utilising that \(t \beta _i = t ( \star dz_i ) =0 \):
\begin{align*}
    \| d z_i \|_{L^2}^2 = \int_{\ST_i}d z_i \wedge \star d z_i =  -  \int_{\ST_i} z_i \wedge d \star d z_i + \int_{\partial \ST_i} t(z_i) \wedge t( \star d z_i) = 0 \, . 
\end{align*}
By smoothness of \(z_i\) we have \(dz_i = 0\). Therefore we are left with a decomposition \(\beta _i = d \eta_i + \lambda_i\). By the Hodge isomorphism and Poincaré–Lefschetz duality
\begin{align*}
    \h \cong H^2_{\mathrm{dR}} ( \ST_i, \partial \ST_i) \cong H^1_{\mathrm{dR}} ( \ST_i)  \, ,
\end{align*}
for coefficients in \(\R\). So \(\h\) is a \(\ell_i\)-dimensional vector space. The set \(\{ [T_{i,j}^*] \}_{j=1}^{\ell_i}\) (\(\ell_i > 0\)) form a basis for \( H^2_{\mathrm{dR}} (\ST_i, \partial \ST_i )\). The Hodge theorem says that there is a unique harmonic representative for each class \([T_{i,j}^*]\). So, without loss of generality, we select \(T_{i,j}^* \in \h \). As \(T_{i,j}^*\) are a basis for \(\h\), for constants \(\psi_{i,j}\) we have
\begin{align}
    \beta _i = d \eta_i + \lambda_i = d \eta_i + \sum_{j=1}^{\ell_i} \psi_{i,j} T_{i,j}^* \, . 
\end{align}
In the case of \(\ell_i = 0\):
\begin{align*}
    \beta _i = d \eta_i  \, . 
\end{align*}
Examples of these decompositions in the case of solid and hollow tori are given by Pfefferlé, Noakes, and Perrella \cite{Gauge_freedom}.

\subsection{Flux} \label{sub:flux}

Let \(\beta _i\in\C\) be a closed Dirichlet \(2\)-form on \(\ST_i\), and let \([T]\in H_2(\ST_i,\partial\ST_i)\) be a relative homology class. The \emph{flux} of \(\beta _i\) through \(T\) is by definition the pairing \(\langle[\beta _i],[T]\rangle\). Applying the divergence theorem to this pairing gives
\begin{align*}
\big\langle [\beta _i],[T]\big\rangle
&= \int_T \beta _i
= \int_{\partial T}\mathcal{J}^*\eta_i + \sum_{j=1}^{\ell_i}\psi_{i,j}\int_T T_{i,j}^*
= \sum_{j=1}^{\ell_i}\psi_{i,j}\int_T T_{i,j}^* =\sum_{j=1}^{\ell_i}\psi_{i,j}\big\langle [T_{i,j}^*],[T]\big\rangle \, ,
\end{align*}
since \(\eta_i\) vanishes on \(\partial\ST_i\). In particular, for the relative cycles \(T_{i,1},\dots,T_{i,\ell_i}\) we obtain
\[
\big\langle [\beta _i],[T_{i,j}]\big\rangle=\psi_{i,j}\qquad(1\le j\le\ell_i) \, ,
\]
so that specifying the values \(\psi_{i,1},\dots,\psi_{i,\ell_i}\) is equivalent to fixing the harmonic component \(h_i\) and therefore determines the flux of \(\beta _i\) through every class \([T]\). This is the set of constraints we impose when formulating the variational problem with fixed flux. The flux constraints are invariant under the push-forward action of any \(f\in\I\), since
\[
\big\langle[\beta _i],[T_{i,k}]\big\rangle
=\big\langle[(f^{-1})^* \beta _i],[f(T_{i,k})]\big\rangle \, .
\]
When \(\ell_i=0\) there is no harmonic component and every closed Dirichlet \(2\)-form is exact: \(\beta_i=d\eta_i\) with \(\eta_i\in\Omega^1_D(\ST_i)\). The divergence theorem then shows that the flux through any relative class vanishes,
\[
\langle[\beta _i],[T]\rangle=\int_T d\eta_i=\int_{\partial T}\mathcal{J}^*\eta_i=0 \, .
\]
Therefore \(\beta_i\) is an element of the flux-free subspace of closed Dirichlet \(2\)-forms when \(\ell_i=0\).

\subsection{Connected \texorpdfstring{\(\ST_i\)}{Domains}}

We assume that each partition element \(\ST_i\) is connected. This hypothesis removes a benign but inconvenient source of non-uniqueness, which we now explain.

If \(\ST_i\) is a connected, compact domain with smooth boundary, then a Beltrami field on \(\ST_i\) is determined uniquely by the Beltrami parameter \(\mu_i\) together with the flux constraints \(\psi_{i,1},\dots,\psi_{i,\ell_i}\) \cite{yoshida2012helical}. Replacing the prescription of \(\mu_i\) by a constraint on the relative helicity still yields existence of a solution to the Beltrami equations.

By contrast, if \(\ST_i\) has several connected components then a helicity condition imposed only on the whole of \(\ST_i\) fixes merely the sum of the helicities of the components. In that situation the individual component helicities may be varied continuously while keeping their total fixed, which produces a non-trivial continuous family of admissible Beltrami fields. Requiring each \(\ST_i\) to be connected therefore removes this extra degree of freedom.

\section{Main Results}\label{section:Main Results}

Consider the following objects as given:
\begin{enumerate}
  \item A finite collection of compact, oriented, connected, pairwise disjoint subdomains \(\ST_i\) \((i=1,\dots,n)\) that partition \(\ST\). 
  \item An Alexander basis of \(H_1(\partial\ST_i)\) for each \(i\).
  \item For each \(i\), if \(\ell_i=0\), then \(p_i\) and \(h_i\) are given. If \(\ell_i>0\), then \(p_i\), \(h_i\) and \(\psi_{i,1},\dots,\psi_{i,\ell_i}\) are given. 
\end{enumerate}
Consider the variational problem:
\begin{align} \label{eq:ext}
    \begin{gathered} 
    \underset{\substack{f \in \I \\ \beta _i \in \mathcal{C} ( \ST_i)} }{ \text{c. p.} } \left( \sum_{i=1}^n  \frac{1}{2} \big\|\beta _i\big\|_{L^2}^2 -  p_i \myabs{ \ST_i } \right) \, , \\
    \mathscr{H}_{a_i}( \beta_i,  \ST_i) = h_i\, , \qquad
    \int_{T_{i,j} } \beta _i = \psi_{i,j}\, ,
    \end{gathered}
\end{align}
where \(d a_i = \beta_i\) for all \(i\). The constraints hold for all \(i = 1, \dots, n\) and relevant \(j\). The notation “c. p.'' stands for critical points. If the pressure, helicity, and flux constants for any \(i\) are identically zero (\(p_i = h_i = \psi_{i,1} = \dots = \psi_{i,\ell_i}=0\)) then \(\beta _i=0\) is the unique global minimiser. 

We remind the reader that we are fixing \(\ST_i\) and allowing the metric to change through \(f \in \I\). This effectively determines the locations of interfaces within \(f(\ST)\). 

The following theorem details a connection between solutions to the variational problem \eqref{eq:ext} and solutions to the MRxMHD equations \eqref{prob:MRxMHD}. 

\begin{theorem}\label{thr:iff_simple}
    A non-zero element \(\beta_i \in \C\) is a solution to the variational problem \eqref{eq:ext} if and only if it satisfies the MRxMHD equations \eqref{prob:MRxMHD} with the corresponding helicity and flux constraints.
\end{theorem}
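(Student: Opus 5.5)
We split the first variation of the constrained functional \eqref{eq:ext} into variations of the potentials \(a_i\) at fixed metric, and variations of the metric through \(f\in\I\) at fixed transported field. Write \(\beta_i = da_i\) with \(a_i = \eta_i + \alpha_i\), where \(\eta_i\in\Omega^1_D(\ST_i)\) and \(\alpha_i\) is a fixed primitive of the harmonic part \(\sum_j \psi_{i,j}T_{i,j}^*\), chosen in the Amperian gauge \(\mathscr{A}(\ST_i)\) (Lemma \ref{Lemma:gauge_exist}). By the computation in Subsection \ref{sub:flux}, flux is preserved under \(\eta_i\mapsto\eta_i+\epsilon\,\theta\) for any \(\theta\in\Omega^1_D(\ST_i)\). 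Such variations exhaust the tangent space of the flux constraint inside \(\C\), since \(d\Omega^1_D(\ST_i)\) is exactly the flux-free part of \(\C\) by the HMF decomposition. By Lemma \ref{Lemma:well-pose}, \(\mathscr{H}_{a_i}\) is independent of the primitive, so we may compute it with \(a_i\) itself.

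\textbf{Step 1: variation of the field.} For \(\theta\in\Omega^1_D(\ST_i)\), the energy varies by \(\int_{\ST_i} d\theta\wedge\star\beta_i = \int_{\ST_i}\theta\wedge d\star\beta_i\), using Stokes' theorem and \(t\theta=0\). The helicity varies by \(2\int_{\ST_i}\theta\wedge\beta_i\). Here the boundary term \(\int_{\partial\ST_i}\theta\wedge a_i\) vanishes, and the line integrals \(\int_{\sigma_{i,j}}\theta\), \(\int_{\tau_{i,j}}\theta\) vanish because \(\theta\) pulls back to zero on \(\partial\ST_i\). Since \(\beta_i\neq0\), the helicity constraint is regular: its derivative \(2\beta_i\) is nonzero when paired against \(\Omega^1_D\). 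A Lagrange multiplier \(\mu_i/2\) therefore gives \(\int\theta\wedge(d\star\beta_i-\mu_i\beta_i)=0\) for all compactly supported \(\theta\), so \(d\star\beta_i=\mu_i\beta_i\). Conversely, the Beltrami equation makes this variation vanish on the constraint tangent space.

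\textbf{Step 2: variation of \(f\).} Take \(f_t=\phi_t\circ f\), where \(\phi_t\) is the flow of a vector field \(X\) on \(f(\ST)\) tangent to \(\partial f(\ST)\). The field is transported by pushforward of the 2-form, \(\omega_i^t=(\phi_t^{-1})^*\omega_i\). This preserves the Dirichlet condition, closedness, and flux, and by Cantarella's invariance it also preserves relative helicity, since the line integrals are pulled back compatibly. Working in the Euclidean picture on \(f(\ST_i)\), we have \(\frac{d}{dt}\big|_{0}\tfrac12\|\omega_i^t\|^2 = -\int \mathcal{L}_X\omega_i\wedge\star\omega_i + \tfrac12\int\|\omega_i\|^2\, \mathcal{L}_X\varpi\). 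The standard stress-tensor identity rewrites this as a bulk term \(\int \langle X,(\nabla\times B_i)\times B_i\rangle\) plus the boundary term \(-\tfrac12\int_{\partial f(\ST_i)}\|B_i\|^2\, X\cdot N\). The bulk term vanishes by the Beltrami condition from Step 1. The volume term contributes \(-p_i\int_{\partial f(\ST_i)}X\cdot N\). Summing over \(i\), each interface is counted twice with opposite normals, and \(X\cdot N=0\) on \(\partial\ST\). The total variation is therefore \(-\tfrac12\sum_q\int_{I_q}\llbracket\|\beta\|^2+2p\rrbracket\, X\cdot N\). Since \(X\cdot N\) on the interfaces is arbitrary, we obtain the jump condition.

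\textbf{Main obstacle.} The delicate point is justifying that \(f\)-variations can be decoupled from field variations. The transported field must satisfy all constraints exactly, which requires checking that relative helicity, including the \(\sigma,\tau\) line-integral terms, is invariant under pushforward by \(\phi_t\). We would argue this from the naturality of \(\int a\wedge da\) and of the boundary line integrals under diffeomorphisms. It also requires that the tangent space of the constraint set is spanned by the variations in Steps 1 and 2. A general variation of \((\beta_i,f)\) decomposes as a transport term plus a \(d\theta\) term with \(\theta\in\Omega^1_D\), and this is where nonzero \(\beta_i\) (regularity of the helicity constraint) is used. The converse direction follows by reading both computations backwards.
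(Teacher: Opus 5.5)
Your proof is correct. Step 1 is essentially the paper's Lagrange-multiplier argument. Your handling of the $f$-variation, however, takes a genuinely different route.

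\textbf{The paper's route.} The paper differentiates $\|\beta_{i,t}\|_{L^2}^2$ along an arbitrary allowable path in Eulerian variables, $a_i'=\mathcal{L}_V\eta_i+\epsilon_i$, where $\epsilon_i$ is not Dirichlet. It then eliminates $\epsilon_i$ by combining:
\begin{itemize}
\item the Beltrami equation;
\item constancy of $\mathscr{H}$ along the path;
\item flux conservation, $\int_{\tau_{i,j}}(i_V\beta_i+\epsilon_i)=0$;
\item the evaluation $\int_{\tau_{i,j}}(2\star\beta_i-\mu_i a_i)=\mu_i\psi_{i,j}$;
\item the boundary cup-product identity of Lemma \ref{Lemma:compute}, which is what cancels the $\sigma$-period term \eqref{eq:floating}.
\end{itemize}

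\textbf{Your route.} You use the fact that, on the fixed reference domain, $\C$, the fluxes and $\mathscr{H}$ involve no metric at all. The constraint set is therefore a product. The frozen-in path is allowable: your pushforward amounts to holding $\beta_i$ literally fixed on $\ST_i$ while $f_t$ moves. Every allowable tangent is that transport plus $(0,d\theta_i)$, with $\theta_i\in\Omega^1_D(\ST_i)$ and $\int_{\ST_i}\theta_i\wedge\beta_i=0$. The second piece contributes $\mu_i\int_{\ST_i}\theta_i\wedge\beta_i=0$. The first is the classical Maxwell-stress computation.

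\textbf{What each buys.} Your argument is shorter and more transparent. It keeps Lemma \ref{Lemma:compute} and the Alexander-basis relations out of the proof; they enter only through Lemma \ref{Lemma:well-pose}. It also makes the appeal to Cantarella superfluous, since invariance along the frozen-in path is trivial. The paper's longer Eulerian computation makes the gauge issue visible: it exhibits the term \eqref{eq:floating} and shows how the $\sigma$--$\tau$ correction in $\mathscr{H}$ removes it. That is the basis of its remark about needing an Amperian gauge for conventional helicity. In your route, the issue is absorbed silently into the well-definedness of $\mathscr{H}$ as a function of $\beta_i$.

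\textbf{Two small corrections.}
\begin{itemize}
\item For $f_t=\phi_t\circ f$ to lie in $\I$, you need $X$ to vanish on $\partial\ST$, not merely be tangent to it. This is harmless, since only $X\cdot N$ on the interfaces matters.
\item The domain-motion term is $\tfrac12\int\mathcal{L}_X(\|\omega_i\|^2\varpi)=\tfrac12\int_{\partial f(\ST_i)}\|\omega_i\|^2\,i_X\varpi$. The expression you wrote, $\tfrac12\int\|\omega_i\|^2\mathcal{L}_X\varpi$, misses the term $\tfrac12\int X(\|\omega_i\|^2)\,\varpi$. The bulk-plus-boundary expression you then state is nevertheless the correct one.
\end{itemize}
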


\begin{corollary}
    If \((h_i, \psi_{i,1}, \dots, \psi_{i, \ell_i}) \neq 0\) for \(i = 1, \dots, n\) then the MRxMHD equations \eqref{prob:MRxMHD} with the corresponding helicity and flux constraints are necessary and sufficient conditions for all solutions to the variational problem \eqref{eq:ext}.
\end{corollary}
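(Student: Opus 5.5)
The corollary follows from Theorem \ref{thr:iff_simple} once we show that, under the hypothesis, every solution of \eqref{eq:ext} has each \(\beta_i\) non-zero. The equivalence in Theorem \ref{thr:iff_simple} is stated only for non-zero \(\beta_i\). So the plan is to exclude \(\beta_i = 0\) whenever \((h_i,\psi_{i,1},\dots,\psi_{i,\ell_i})\neq 0\), and then apply the theorem region by region.

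\textbf{Step 1 (\(\beta_i=0\) violates the constraints).} Fix \(i\) and suppose \(\beta_i = 0\). Then every flux vanishes, \(\int_{T_{i,j}}\beta_i = 0\), so the flux constraints force \(\psi_{i,j}=0\) for all \(j\). For the helicity, any primitive \(a_i\) of \(\beta_i=0\) is closed. Lemma \ref{Lemma:well-pose} says \(\mathscr{H}_{a_i}\) is independent of the chosen primitive, so we may take \(a_i = 0\) and obtain \(\mathscr{H}_{a_i}(0,\ST_i)=0\). The same holds when \(\ell_i=0\): there \(\mathscr{H}=H_{\Omega^1(\ST_i)}\), and the primitive \(a_i=0\) gives helicity zero. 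Hence \(h_i = 0\) as well. Taken together, \(\beta_i = 0\) is admissible only if \((h_i,\psi_{i,1},\dots,\psi_{i,\ell_i}) = 0\). This contradicts the hypothesis, so every admissible \(\beta_i\), and in particular every critical point, is non-zero.

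\textbf{Step 2 (apply the theorem).} Since every solution of \eqref{eq:ext} consists of non-zero \(\beta_i\) for all \(i\), Theorem \ref{thr:iff_simple} applies to it, and the solution satisfies \eqref{prob:MRxMHD} together with the constraints. Conversely, let \((\beta_i,f)\) satisfy \eqref{prob:MRxMHD} with the prescribed helicity and flux. Step 1 shows each \(\beta_i\neq 0\), so the theorem again gives that it is a critical point of \eqref{eq:ext}. This yields necessity and sufficiency for all solutions.

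\textbf{Main obstacle.} The only delicate point is the gauge issue in Step 1. We need the relative helicity of the zero field to vanish for every primitive, not just for \(a_i=0\). The boundary terms \(\int_{\sigma_{i,j}}a_i\int_{\tau_{i,j}}a_i\) need not vanish individually for a closed \(a_i\). The argument therefore relies on the gauge invariance of Lemma \ref{Lemma:well-pose}, rather than on a direct computation.
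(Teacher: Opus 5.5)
Your proposal is correct and is essentially the argument the paper leaves implicit: the corollary is Theorem \ref{thr:iff_simple} plus the observation that \(\beta_i=0\) has zero flux and zero relative helicity, so it is inadmissible whenever \((h_i,\psi_{i,1},\dots,\psi_{i,\ell_i})\neq 0\). One small correction to your closing remark: for a closed primitive \(a_i\) the boundary products do vanish individually, because \(\partial[T_{i,j}]=[\tau_{i,j}]\) gives \(\int_{\tau_{i,j}}a_i=\int_{T_{i,j}}\beta_i=0\); your appeal to Lemma \ref{Lemma:well-pose} is therefore valid but not actually needed.
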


\begin{remark}\label{remark:MHD}
  The variational problem \eqref{eq:ext} is formulated in terms of critical points, but for physical applications one is usually interested in local \emph{minimisers} (which correspond to stable equilibria). Woltjer's 1958 analysis introduced a closely related variational principle and argued in favour of minimising solutions. In the proof of Theorem \ref{thr:iff_simple} we show that, for the case \(n=1\) and upon adopting an Amperian gauge, Woltjer's theorem is recovered.
\end{remark}

Theorem \ref{thr:iff_simple} implies that the MRxMHD equations \eqref{prob:MRxMHD} are necessary for a minimiser to problem \eqref{eq:ext} with nested toroidal surfaces. This result is known, as discussed in subsection \ref{sub:back}. 

On a technical note, we define a critical point of a map \( \I \times  \mathcal{C} ( \ST_i) \to \R\) as a point in the domain where the Fréchet derivative in a \(C^\infty\) topology for \(\mathcal{C} ( \ST_i)\) and a Gâteau derivative with respect to the \(\I\) coordinates are zero. We require a Fr\'echet derivative combine with a choice of topology for \(\mathcal{C} ( \ST_i)\) in order to determine a necessary and sufficient condition when applying the Lagrange multiplier theorem.

  \begin{remark}
    A result establishing the existence of solutions to the Beltrami equation was proven by Yoshida and Dewar. These authors use the Fredholm alternative and a technique they refer to as \textit{helicity matching} (section 2.3 \cite{yoshida2012helical}). Therefore, there is always a solution to problem \eqref{eq:ext} when \(n=1\). In Appendix \ref{append:exist}, we prove an existence result that is weaker than the known result above, using the direct method from calculus of variations.
 \end{remark}

\section{Preparatory Results}\label{section:Preparatory Results}

\begin{lemma}\label{Lemma:compute}
    For any closed 1-forms \(s, \tilde{s} \in \Omega^{1} (\partial \ST_i)\) and an Alexander basis \(\{[\sigma_{i,j}],[\tau_{i,j}]\}_{j=1}^{\ell_i}\) of \(H_1(\partial \ST_i)\) we have a decomposition:
    \begin{align}\label{eq:compute_2}
    \int_{\partial \ST_i} \tilde{s} \wedge s = \sum_{j=1}^{\ell_i}  \left(\int_{\sigma_{i,j}} \! s \int_{\tau_{i,j}}  \! \tilde{s} -\int_{\sigma_{i,j}}  \! \tilde{s}    \int_{\tau_{i,j}}  \! s   \right) 
     \, .
\end{align}
\end{lemma}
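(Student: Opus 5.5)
Both sides of \eqref{eq:compute_2} depend only on the de Rham classes \([s],[\tilde s]\in H^1_{\mathrm{dR}}(\partial\ST_i)\). On the left, replacing \(s\) by \(s+dg\) changes \(\int_{\partial\ST_i}\tilde s\wedge s\) by \(\int_{\partial\ST_i}\tilde s\wedge dg = \pm\int_{\partial\ST_i} d(g\tilde s)=0\), since \(\partial\ST_i\) is closed and \(\tilde s\) is closed. The same argument applies to \(\tilde s\). On the right, the line integrals over the cycles \(\sigma_{i,j},\tau_{i,j}\) are exactly the pairing \eqref{eq:pairing}. Both sides are also bilinear in \((\tilde s,s)\). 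It therefore suffices to verify the identity on a basis of \(H^1_{\mathrm{dR}}(\partial\ST_i)\).

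If \(\ell_i=0\), then \(H_1(\partial\ST_i)\cong\{0\}\) by the remark after Theorem \ref{thr:Alexander}. Hence \(H^1_{\mathrm{dR}}(\partial\ST_i)=0\) by de Rham's theorem, so every closed \(s\) is exact and both sides vanish. For \(\ell_i>0\), the classes \(\{[\sigma^*_{i,k}],[\tau^*_{i,k}]\}_{k=1}^{\ell_i}\) form a basis of \(H^1_{\mathrm{dR}}(\partial\ST_i)\). This holds because they are the image of the dual bases \((s^*_{i,k},0),(0,t^*_{i,k})\) under the isomorphism \(\mathcal J^*+(\mathcal J^c)^*\), where \(\kay_i=\ell_i\). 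The pairings computed just before Theorem \ref{thr:Alexander} show that this basis is dual to the Alexander basis \(\{[\sigma_{i,j}],[\tau_{i,j}]\}\). Accordingly, I will expand
\[
[s]=\sum_{k}\Big(\int_{\sigma_{i,k}}s\Big)[\sigma^*_{i,k}]+\sum_k\Big(\int_{\tau_{i,k}}s\Big)[\tau^*_{i,k}] \, ,
\]
and expand \([\tilde s]\) in the same way.

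Next I substitute both expansions into the left-hand side and use bilinearity. Theorem \ref{thr:Alexander}(1) gives \(\int\sigma^*\wedge\sigma^*=0\), \(\int\tau^*\wedge\tau^*=0\) and \(\int_{\partial\ST_i}\tau^*_{i,j}\wedge\sigma^*_{i,k}=\delta_{jk}\). By graded commutativity of 1-forms, the last relation also gives \(\int_{\partial\ST_i}\sigma^*_{i,j}\wedge\tau^*_{i,k}=-\delta_{jk}\). Only the mixed terms survive, and they give
\[
\sum_j\Big(\int_{\tau_{i,j}}\tilde s\int_{\sigma_{i,j}}s-\int_{\sigma_{i,j}}\tilde s\int_{\tau_{i,j}}s\Big) \, ,
\]
which is exactly the right-hand side of \eqref{eq:compute_2}.

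The main point needing care is the justification that \(\{[\sigma^*],[\tau^*]\}\) is a basis with the stated dual coefficients; this rests on the isomorphism property of \(\mathcal J^*+(\mathcal J^c)^*\) and the earlier pairing computations. A second point is keeping the sign convention of Theorem \ref{thr:Alexander}(1), namely \(\tau^*\wedge\sigma^*\) rather than \(\sigma^*\wedge\tau^*\), consistent when collecting the cross terms.
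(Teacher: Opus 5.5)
Your proposal is correct and follows essentially the paper's route: reduce to cohomology classes, expand \([s]\) and \([\tilde s]\) in the dual basis \(\{[\sigma^*_{i,k}],[\tau^*_{i,k}]\}\), and apply the relations of Theorem \ref{thr:Alexander}(1), with the same cross-term bookkeeping. The only difference is how the reduction to classes is made: the paper first applies a Hodge--Morrey decomposition \(s=d\phi+\gamma\) on \(\partial\ST_i\) and discards the exact parts, whereas you show directly by Stokes that both sides are unchanged when an exact form is added, which is cleaner and makes harmonic representatives unnecessary (you also cover the \(\ell_i=0\) case explicitly).
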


\begin{proof}

Any 1-form \( s \in \Omega^{1} (\partial \ST_i)\) admits a Hodge-Morrey decomposition into an exact form, co-exact form, and a harmonic 1-form representing a cohomology class (\cite{schwarz-1995}, see Appendix \ref{append:Hodge} for this decomposition):
\begin{align*}
    s = \delta \eta + d \phi + \gamma  \, , 
\end{align*}
where \( \eta \in  \Omega^2 (\partial \ST_i) \), \(\phi \in  \Omega^0 (\partial  \ST_i)\), and a harmonic \(\gamma \in  \mathcal{H}^1(\partial \ST_i)\). Utilising the closure of \(s\) (\(0 = ds = d \delta  \eta\)) and the divergence theorem,
\begin{align*}
    \| \delta  \eta \|^2_{L^2} = \int_{\partial \ST_i} \delta  \eta \wedge \star \delta  \eta
    = -\int_{\partial \ST_i} 
    \delta  \eta \wedge d \star \eta
    = - \int_{\partial \ST_i} 
    d\delta  \eta \wedge  \star \eta
+
\int_{\partial^2 \ST_i} \delta  \eta \wedge  \star \eta = 0 \, . 
\end{align*}
Note that the Hodge star operators on \(\partial \ST_i\) acting on 1-forms satisfy \(\star \star = -1\). Therefore, the co-exact component in the Hodge-Morrey decomposition of a closed 1-form disappears. Taking the Hodge decomposition for two closed \(1\)-forms \(s, \tilde{s} \in \Omega^{1} (\partial \ST_i)\):
\begin{align*}
    s = d \phi + \gamma  \, , & &  \tilde{s} = d \tilde{\phi} + \tilde{\gamma} \, ,
\end{align*}
where \(\tilde{\phi} \in \Omega^0 (\ST_i)\) and \(\tilde{\gamma} \in  \mathcal{H}^1 (\partial \ST_i)\). 

Moving all terms in equation \eqref{eq:compute_2} to one side, we want to see if the result is zero. We begin this check by replacing the closed 1-forms with their Hodge-Morrey decomposition:
\begin{align*}
\begin{gathered}
    \int_{\partial \ST_i} \tilde{s} \wedge s - \sum_{j=1}^{\ell_i}  \left( \int_{\sigma_{i,j}} s \int_{\tau_{i,j}} \tilde{s} - \int_{\sigma_{i,j}} \tilde{s}    \int_{\tau_{i,j}} s \right) 
    \\ = \\   \int_{\partial \ST_i} (d \tilde{\phi} + \tilde{\gamma}) \wedge (d \phi + \gamma) -  \sum_{j=1}^{\ell_i}  \left( \int_{\sigma_{i,j}} (d \phi + \gamma) \int_{\tau_{i,j}} (d \tilde{\phi} + \tilde{\gamma}) - \int_{\sigma_{i,j}} (d \tilde{\phi} + \tilde{\gamma})    \int_{\tau_{i,j}} (d \phi + \gamma) \right)  \, . 
\end{gathered}
\end{align*}
Knowing that \(\partial \tau_{i,j} = \partial \sigma_{i,j} = 0\) we have, by the divergence theorem:
\begin{align*}
    \int_{\tau_{i,j}} d \phi  = \int_{\partial \tau_{i,j}} \phi = 0 \, ,
\end{align*}
similarly for the integrals over \(\sigma_{i,j}\). Additionally:
\begin{align*}
    \int_{\partial \ST_i} d \tilde{\phi}  \wedge s
    =
    - \int_{\partial \ST_i}  \tilde{\phi}  \wedge ds
    +
    \int_{\partial^2 \ST_i}  \tilde{\phi}  \wedge s = 0 \, .
\end{align*}
Similarly for the term involving \(\tilde s \wedge d \phi  \). This leaves us with the relation:
\begin{align}\label{eq:compute}
    \int_{\partial \ST_i} \tilde{s} \wedge s - \sum_{j=1}^{\ell_i}  \left( \int_{\sigma_{i,j}} \! s \int_{\tau_{i,j}}  \!  \tilde{s} - \int_{\sigma_{i,j}} \!  \tilde{s}    \int_{\tau_{i,j}} \!  s \right) 
    &=   \int_{\partial \ST_i} \tilde{\gamma} \wedge \gamma -  \sum_{j=1}^{\ell_i}  \left( \int_{\sigma_{i,j}} \!  \gamma \int_{\tau_{i,j}}  \! \tilde{\gamma} - \int_{\sigma_{i,j}}  \! \tilde{\gamma}    \int_{\tau_{i,j}}  \! \gamma \right)  \, .
\end{align}
Replacing \(\gamma\) and \(\tilde{\gamma}\) with their classes in \(H^1_{dR}(\partial \ST_i)\) given by \([\gamma]\) and \([\tilde{\gamma}]\) respectively, these integrals remain well-defined. Decomposing \([\gamma] \) and \([\tilde{\gamma}]\) in the dual to the Alexander basis we have that 
\begin{align*}
    [\gamma] = \sum_{j=1}^{\ell_i} c_{j}^1  [\sigma_{i,j}^*]+ c_{j}^2 [\tau_{i,j}^*] \, , & &
    [\tilde{\gamma}] = \sum_{j=1}^{\ell_i} \tilde{c}_{j}^1 [\sigma_{i,j}^*]+ \tilde{c}_{j}^2 [\tau_{i,j}^*] \, ,
\end{align*}
for constants \( c_{j}^1,  c_{j}^2, \tilde{c}_{j}^1\) and \(\tilde{c}_{j}^2\). Computing our wedge product between the gamma's:
\begin{align*}
    \int_{\partial \ST_i} \tilde{\gamma} \wedge \gamma  & = \scalebox{1.2}[1.4]{$\displaystyle\int$}_{ \! \! \! \! \! \partial \ST_i} \!  \Bigg(  \sum_{j=1}^{\ell_i} \tilde{c}_{j}^1 \sigma_{i,j}^*+ \tilde{c}_{j}^2 \tau_{i,j}^* \Bigg) \wedge \Bigg(   \sum_{j=1}^{\ell_i} c_{j}^1  \sigma_{i,j}^*+ c_{j}^2 \tau_{i,j}^*   \Bigg) \, .
\end{align*}
Recall from Theorem \ref{thr:Alexander} that: 
\begin{align*}
    \begin{gathered}
         \int_{\partial \ST_i} \tau_{i,j}^* \wedge \sigma_{i,k}^* = \delta_{jk}  \,, \qquad
        \int_{\partial \ST_i} \sigma_{i,j}^* \wedge \sigma_{i,k}^* = 
         0 \,,    \qquad 
         \int_{\partial \ST_i} \tau_{i,j}^* \wedge \tau_{i,k}^*
        = 0 \, ,
    \end{gathered}
\end{align*}   
so
\begin{align*}
\int_{\partial \ST_i} \tilde{\gamma} \wedge \gamma & 
  =  \sum_{j=1}^{\ell_i} 
  \tilde{c}_{j}^2  c_{j}^1  - \tilde{c}_{j}^1  c_{j}^2 
  \, . 
\end{align*}
And:
\begin{align*}
     \sum_{j=1}^{\ell_i}  \left( \int_{\sigma_{i,j}}  \!  \gamma \int_{\tau_{i,j}}  \!  \tilde{\gamma} - \int_{\sigma_{i,j}}  \!  \tilde{\gamma}    \int_{\tau_{i,j}}  \!  \gamma \right) 
     = \sum_{j=1}^{\ell_i}  \left( c^1_{j}  \tilde{c}^2_{j} - \tilde{c}^1_{j}  c^2_{i,j}  \right) 
     & = \int_{\partial \ST_i} \tilde{\gamma} \wedge \gamma \, . 
\end{align*}
Substituting this relation into equation \eqref{eq:compute} we get that:
\begin{align*}
    \int_{\partial \ST_i} \tilde{s} \wedge s - \sum_{j=1}^{\ell_i}  \left( \int_{\sigma_{i,j}}  \!  s \int_{\tau_{i,j}}  \!  \tilde{s} - \int_{\sigma_{i,j}}  \!  \tilde{s}    \int_{\tau_{i,j}}  \!  s \right) 
    &=   0 \, .
\end{align*}
\end{proof}

\subsection{Well-Posed}\label{sub:Well-Posed}

\begin{lemma}\label{Lemma:well-pose}
    For a magnetic field represented by a closed, Dirichlet 2-form \(\beta _i \in \C\) with potential \(a_i \in  \Omega^1(\ST_i)\), the relative helicity \(\mathscr{H}_{a_i}( \beta_i, \ST_i)\) is independent of the choice of primitive \(a_i\). Additionally, if \(f \in \I\) then \(\mathscr{H}_{a_i}( \beta_i, \ST_i)= \mathscr{H}_{(f^{-1})^* a_i}( (f^{-1})^*  \beta_i, f(\ST_i ) )\). 
\end{lemma}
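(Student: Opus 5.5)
The plan is to compare two arbitrary primitives of the same \(\beta_i\) directly. By the discussion in Subsection \ref{sub:setup}, any two primitives differ by a closed form, so I would fix one primitive \(a_i\) and write the other as \(a_i+s\) with \(s\in\Omega^1(\ST_i)\) closed. Two facts about boundary restrictions will be used throughout:
\begin{itemize}
\item \(t a_i\) is closed on \(\partial\ST_i\), because \(d(t a_i)=t\,da_i=t\beta_i=0\) since \(\beta_i\in\C\);
\item \(t s\) is closed on \(\partial\ST_i\), since \(ds=0\).
\end{itemize}
So Lemma \ref{Lemma:compute} is available for the pair \(t s\), \(t a_i\).

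For \(\ell_i>0\) I would treat the volume term first. Expanding,
\[
\int_{\ST_i}(a_i+s)\wedge d(a_i+s)-\int_{\ST_i}a_i\wedge da_i=\int_{\ST_i}s\wedge da_i .
\]
Since \(d(s\wedge a_i)=-s\wedge da_i\), Stokes' theorem turns this into \(-\int_{\partial\ST_i}ts\wedge ta_i\). Lemma \ref{Lemma:compute} with \(\tilde s=ts\) and \(s=ta_i\) then expands it as
\[
-\sum_{j}\Big(\int_{\sigma_{i,j}}a_i\int_{\tau_{i,j}}s-\int_{\sigma_{i,j}}s\int_{\tau_{i,j}}a_i\Big).
\]
The key topological input is that \(\int_{\tau_{i,j}}s=0\). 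This holds because \(\mathcal J_*[\tau_{i,j}]=0\) in \(H_1(\ST_i)\) by the definition of \(\tau_{i,j}\) through \((\mathcal J_*,\mathcal J^c_*)^{-1}(0,[t_{i,j}])\), and \(s\) is closed on \(\ST_i\), so the pairing \eqref{eq:pairing} gives \(\int_{\tau_{i,j}}s=\langle[s],\mathcal J_*[\tau_{i,j}]\rangle=0\). Hence the volume term changes by \(\sum_j\int_{\sigma_{i,j}}s\int_{\tau_{i,j}}a_i\). The boundary correction term changes by
\[
\sum_j\Big(\int_{\sigma_{i,j}}a_i\int_{\tau_{i,j}}s+\int_{\sigma_{i,j}}s\int_{\tau_{i,j}}a_i+\int_{\sigma_{i,j}}s\int_{\tau_{i,j}}s\Big),
\]
which, by the same vanishing, reduces to the identical sum. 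The two changes therefore cancel in \(\mathscr H\).

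For \(\ell_i=0\) the definition gives \(\mathscr H=H_{\Omega^1(\ST_i)}\). Here \(H^1_{\mathrm{dR}}(\ST_i)=0\), so \(s=d\phi\), and the change in helicity is \(-\int_{\partial\ST_i}d\phi\wedge ta_i=-\int_{\partial\ST_i}d(\phi\, ta_i)=0\), using that \(ta_i\) is closed and \(\partial\ST_i\) is closed.

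For the second claim, the plan is naturality of integration under the orientation-preserving diffeomorphism \(f\). First, \(\int_{\ST_i}a_i\wedge da_i=\int_{f(\ST_i)}(f^{-1})^*a_i\wedge d(f^{-1})^*a_i\), since \(d\) commutes with pullback. Second, \(\int_{\sigma_{i,j}}a_i=\int_{f_*\sigma_{i,j}}(f^{-1})^*a_i\), and likewise for \(\tau_{i,j}\). One must also interpret \(\mathscr H\) on \(f(\ST_i)\) using the pushed-forward basis \(\{f_*[\sigma_{i,j}],f_*[\tau_{i,j}]\}\), and \((f^{-1})^*\beta_i\) lies in \(\mathcal C(f(\ST_i))\) because pullback commutes with \(d\) and with \(t\).

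I expect the main obstacle to be checking that this pushed-forward basis is again an Alexander basis, so that the relative helicity on \(f(\ST_i)\) is the one defined by the paper's convention. I would use that \(f\) extends by the identity outside \(\ST\) to a diffeomorphism of \(\mathbb S^3\) isotopic to the identity. This extension carries \(\ST_i\), \(\overline{\ST}^c_i\) and \(\partial\ST_i\) to their images and commutes with the inclusions \(\mathcal J,\mathcal J^c\) and with Lefschetz duality. Because it preserves orientation, it also preserves the intersection integrals in Theorem \ref{thr:Alexander}. A secondary point to watch is the sign convention in Lemma \ref{Lemma:compute}, since the cancellation above depends on which of the two forms is placed in the \(\tilde s\) slot.
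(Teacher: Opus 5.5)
Your proof is correct. For gauge independence it is the paper's argument. Stokes' theorem turns the change in \(\int a_i\wedge da_i\) into a boundary cup product, Lemma \ref{Lemma:compute} expands that product in the Alexander basis, and everything cancels because the gauge difference has vanishing \(\tau_{i,j}\)-periods. The paper gets that vanishing from \(\partial[T_{i,j}]=[\tau_{i,j}]\) (item 2 of Theorem \ref{thr:Alexander}); you get it from \(\mathcal J_*[\tau_{i,j}]=0\). By exactness of the long sequence of the pair, these are the same fact. Your explicit \(\ell_i=0\) case is a small addition that the paper omits.

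For the transport statement your route differs. The paper cancels the period products by naturality, reduces to \(H_G-H_{(f^{-1})^*G}\), then restricts to the Biot--Savart gauge and cites Cantarella. You instead note that \(\int a_i\wedge da_i\) is the integral of a metric-independent 3-form, so it is preserved by an orientation-preserving change of variables. This makes the gauge detour unnecessary. Indeed, with the gauge on \(f(\ST_i)\) taken to be the pushed-forward one, as in the paper's display, the cited equality is itself just change of variables. Your version is therefore shorter and self-contained.

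The point you flag, that the pushed-forward classes form the Alexander basis used on \(\partial f(\ST_i)\), is what the paper addresses at the end of Appendix \ref{append:general_formulation}. One correction to your plan for it: \(f\in\I\) is only pointwise the identity on \(\partial\ST\). Extending it by the identity therefore gives, in general, a homeomorphism of \(\mathbb S^3\), not a diffeomorphism. This still suffices. The conditions involving \(\overline{\ST}^c_i\) are purely homological, and the intersection integrals of Theorem \ref{thr:Alexander} live on \(\partial\ST_i\), where \(f\) restricts to an orientation-preserving diffeomorphism.
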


\begin{proof} 
Consider \(a_i, \tilde{a}_i\in  \Omega^1 (\ST_i)\) such that \(da_i = d \tilde{a}_i = \beta_i \in \C\). Then the difference in relative helicities applied to each primitive is:
\begin{align*}
    \mathscr{H}_{a_i}(\beta_i, \ST_i) - \mathscr{H}_{\tilde{a}_i}(\beta_i, \ST_i) = \int_{\partial \ST_i} \mathcal{J}^* \tilde{a}_i \wedge \mathcal{J}^* a_i
    - \left( \sum_{j=1}^{\ell_i} \int_{\sigma_{i,j}}  \!  \mathcal{J}^* a_i\int_{\tau_{i,j}}   \!  \mathcal{J}^* a_i
    - \int_{\sigma_{i,j}}  \!  \mathcal{J}^* \tilde{a}_i \int_{\tau_{i,j}}  \!  \mathcal{J}^* \tilde{a}_i \right)
    \, .
\end{align*}
The forms \(\mathcal{J}^* (\tilde{a}_i)\) and \(\mathcal{J}^* (a_i)\) are closed 1-forms in \(\Omega^1(\partial \ST_i)\), so we may apply equation \eqref{eq:compute_2}:
\begin{align*}
    \mathscr{H}_{a_i}(\beta_i, \ST_i) - \mathscr{H}_{\tilde{a}_i}(\beta_i, \ST_i) = -
    \sum_{j=1}^{\ell_i}  \Bigg( & \int_{\sigma_{i,j}}  \! \mathcal{J}^* \tilde{a}_i    \int_{\tau_{i,j}}  \! \mathcal{J}^*a_i
    -
    \int_{\sigma_{i,j}}  \!  \mathcal{J}^*a_i \int_{\tau_{i,j}}  \! \mathcal{J}^*\tilde{a}_i 
    \\
    & \! \! 
    + \int_{\sigma_{i,j}}  \! \mathcal{J}^*a_i \int_{\tau_{i,j}}  \! \mathcal{J}^*a_i
    - \int_{\sigma_{i,j}}  \! \mathcal{J}^*\tilde{a}_i \int_{\tau_{i,j}}  \! \mathcal{J}^*\tilde{a}_i \Bigg) \, .
\end{align*}
By construction, for each \(j\), there exists a \(T_{i,j}\) where \(\partial T_{i,j} = \tau_{i,j}\), therefore:
\begin{align*}
    \int_{\tau_{i,j}} \mathcal{J}^* \tilde{a}_i = \int_{T_{i,j}} \mathcal{J}^* b_i = \int_{\tau_{i,j}} \mathcal{J}^* a_i \, .
\end{align*}
Applying this to the difference between relative helicities computed in different gauges gives:
\begin{align*}
    \mathscr{H}_{a_i}(\beta_i, \ST_i) - \mathscr{H}_{\tilde{a}_i}(\beta_i, \ST_i) 
    = 0 \, .
\end{align*}
Hence, the relative helicity is independent of the gauge. Due to this, we now drop the subscript of \(\mathscr{H}\) that specifies the potential.

Additionally, for \(f \in \I\) and an appropriate \(G \subseteq \Omega^1(\ST_i)\): 
\begin{align*}
\mathscr{H}( \beta_i, \ST_i ) - \mathscr{H}( (f^{-1})^* \beta_i, f(\ST_i ) ) = &H_G( \beta_i, \ST_i ) - H_{(f^{-1})^* G}( (f^{-1})^* \beta_i, f(\ST_i ) )  \\
&- \sum_{j=1}^{\ell_i} \int_{\sigma_{i,j}} a_i \int_{\tau_{i,j}}  a_i
 +
 \sum_{j=1}^{\ell_i} \int_{f(\sigma_{i,j})} (f^{-1})^*a_i \int_{f(\tau_{i,j})} (f^{-1})^* a_i
 \\
 = &H_G( \beta_i, \ST_i ) - H_{(f^{-1})^* G}( (f^{-1})^* \beta_i, f(\ST_i ) ) 
 \,  .
\end{align*}
We have the freedom to select a \(G\) without altering the relative helicity. Hence, we restrict primitives to a Biot-Savart gauge, as defined by Cantarella \cite{CANTARELLA20101127}. With this gauge Cantarella shows that \(H_G( \beta_i, \ST_i ) = H_{(f^{-1})^* G}( (f^{-1})^* \beta_i, f(\ST_i ) ) \). Hence, \(\mathscr{H}( \beta_i, \ST_i ) = \mathscr{H}( (f^{-1})^* \beta_i, f(\ST_i ) )\) must hold in any gauge.
\end{proof}

\begin{corollary}
    The objective and constraints in the variational problem \eqref{eq:ext} are independent of the chosen gauge. Namely, the problem is well-defined in terms of the magnetic field \(\beta_i\) and \(f \in \I\). 
\end{corollary}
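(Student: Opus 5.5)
The corollary should follow almost immediately from Lemma \ref{Lemma:well-pose} together with the observations in Subsection \ref{sub:flux}. I will check each ingredient of problem \eqref{eq:ext} separately: the objective, the helicity constraint, and the flux constraints. The aim is to show that each is a function of the pair \((\beta_i, f)\) alone, with no residual dependence on the primitive \(a_i\).

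First, the objective \(\sum_i \tfrac12\|\beta_i\|_{L^2}^2 - p_i\myabs{\ST_i}\) involves no potential. The \(L^2\) norm is computed with the Hodge star induced by the pulled-back metric of \(f\). The volume \(\myabs{\ST_i}\) is the integral of \(f^*\varpi\), equivalently \(\myabs{f(\ST_i)}\). Both therefore depend only on \(\beta_i\) and \(f\).

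Second, the flux constraints \(\int_{T_{i,j}}\beta_i = \psi_{i,j}\) are integrals of \(\beta_i\) itself over fixed relative cycles, so they are gauge free. By Subsection \ref{sub:flux}, they are well defined on relative homology classes, since \(\beta_i\) is closed with \(t\beta_i = 0\). They are also invariant under push-forward by any \(f\in\I\).

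Third, the helicity constraint \(\mathscr{H}_{a_i}(\beta_i,\ST_i) = h_i\) is where a potential enters. I need primitives to exist, which holds because Lemma \ref{Lemma:Potential} guarantees \(\beta_i\in\C\) is exact. Lemma \ref{Lemma:well-pose} then shows the value is independent of which primitive \(a_i\) is chosen. When \(\ell_i=0\), \(H_1(\partial\ST_i)\) is trivial and every closed \(1\)-form on \(\partial\ST_i\) is exact. Hence the boundary term \(\int_{\partial\ST_i} s\wedge a_i\) vanishes and \(H_{\Omega^1(\ST_i)}\) is already gauge invariant, consistent with the definition. The second part of Lemma \ref{Lemma:well-pose} also shows the quantity is the same whether computed on \(\ST_i\) with the pulled-back metric or on \(f(\ST_i)\) with the push-forward. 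So the constraint depends only on \((\beta_i, f)\), and in fact the helicity part is independent of \(f\) altogether.

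Combining these three points, the objective functional and constraint set are genuine functions on \(\I\times\prod_i\C\), and critical points are unambiguous. I expect no real obstacle. The only subtle point is confirming the \(\ell_i=0\) case, which rests on the triviality of \(H^1_{\mathrm{dR}}(\partial\ST_i)\) noted in the appendix on Alexander bases.
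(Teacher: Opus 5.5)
Your proposal is correct and matches the paper. The paper gives no separate proof: it treats the corollary as an immediate consequence of Lemma~\ref{Lemma:well-pose} (primitive-independence and push-forward invariance of \(\mathscr{H}\)), with the energy and flux terms being manifestly potential-free. Your explicit checks that primitives exist (Lemma~\ref{Lemma:Potential}) and that the \(\ell_i=0\) case works (where Lemma~\ref{Lemma:compute} reduces to an empty sum) spell out details the paper leaves implicit.
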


\begin{lemma} \label{Lemma:gauge_exist}
    There exists at least one element \(a_i \in \mathscr{A}(\ST_i)\) such that \(da_i = \beta_i\) for any \(\beta_i \in \C\). 
\end{lemma}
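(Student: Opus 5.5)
Start from any primitive and correct it by a closed 1-form chosen to kill its periods on the $\sigma_{i,j}$. If $\ell_i=0$ there are no conditions in \eqref{eq:amperian}, so $\mathscr{A}(\ST_i)=\Omega^1(\ST_i)$. The existence of a primitive $a_i$ with $da_i=\beta_i$ is then exactly Lemma \ref{Lemma:Potential} (Cantarella--Parsley), since $\beta_i\in\C$ is closed and $t\beta_i=0$. So assume $\ell_i>0$.

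Fix any primitive $a_i$ from Lemma \ref{Lemma:Potential}. As noted in Subsection \ref{sub:setup}, every primitive of $\beta_i$ has the form $a_i+s$ with $s\in\Omega^1(\ST_i)$ closed. We look for $s$ in the span of representatives of the cohomology basis. Let $s^*_{i,k}$ be closed 1-forms on $\ST_i$ representing $[s^*_{i,k}]\in H^1_{\mathrm{dR}}(\ST_i)$, the duals of $[s_{i,j}]$. Put
\begin{align*}
s=-\sum_{k=1}^{\ell_i} c_k\, s^*_{i,k}\, ,\qquad c_k\coloneqq \int_{\sigma_{i,k}} a_i\, .
\end{align*}
Then $s$ is closed and $d(a_i+s)=\beta_i$.

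It remains to compute the periods. By naturality of integration under the inclusion $\mathcal{J}$, and because $\mathcal{J}_*[\sigma_{i,j}]=[s_{i,j}]$ by construction via Lemma \ref{Lemma:isomorphism_inclusion}, we get
\begin{align*}
\int_{\sigma_{i,j}} \mathcal{J}^* s^*_{i,k}=\int_{\mathcal{J}_*\sigma_{i,j}} s^*_{i,k}=\big\langle [s^*_{i,k}],[s_{i,j}]\big\rangle=\delta_{jk}\, .
\end{align*}
This is the same computation the paper uses for $\langle[\sigma^*_{i,k}],[\sigma_{i,j}]\rangle$. The step needing care is that $\sigma_{i,j}$ is only defined up to homology in $\partial\ST_i$. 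Since $\mathcal{J}^*s^*_{i,k}$ is closed on $\partial\ST_i$, the pairing depends only on the class $[\sigma_{i,j}]\in H_1(\partial\ST_i)$, by the Stokes argument following \eqref{eq:pairing}. So the integral is unambiguous for any cycle representative. Hence
\begin{align*}
\int_{\sigma_{i,j}}(a_i+s)=c_j-\sum_k c_k\delta_{jk}=0
\end{align*}
for every $j$, so $a_i+s\in\mathscr{A}(\ST_i)$ is the required Amperian primitive. The main obstacle is therefore only the bookkeeping that the periods of $a_i$ (which is not closed on $\ST_i$, though $\mathcal{J}^*a_i$ is closed because $t\beta_i=0$) can be shifted arbitrarily and independently. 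That follows from the duality $\langle[s^*_{i,k}],[s_{i,j}]\rangle=\delta_{jk}$ together with the isomorphism in Lemma \ref{Lemma:isomorphism_inclusion}.
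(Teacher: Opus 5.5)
Your proof is correct and follows essentially the same route as the paper: start from a primitive given by Lemma~\ref{Lemma:Potential}, add a closed form $s=\sum_k c_k s^*_{i,k}$, and use $\int_{\sigma_{i,j}}\mathcal{J}^*s^*_{i,k}=\delta_{jk}$ to choose the coefficients so that every $\sigma_{i,j}$-period vanishes. You also treat the case $\ell_i=0$ explicitly and note that $\mathcal{J}^*a_i$ is closed, so the periods do not depend on the cycle representatives; the paper leaves both points implicit.
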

\begin{proof}
There exists at least one primitive \(a_i \in  \Omega^1 (\ST_i)\) such that \(da_i = \beta_i\) (Lemma \ref{Lemma:Potential}). Additionally, \(a_i + s\) is also a primitive for \(\beta_i\) for any closed 1-form \(s \in  \Omega^1 (\ST_i)\). Hence, \([s]\) is any element of \(H^1_{\mathrm{dR}} (\ST_i)\) and therefore has a decomposition:
\begin{align*}
    [s] = \sum_{j=1}^{\ell_i} c_{i,j} [ s_{i,j}^* ] \, . 
\end{align*}
Pulling back to the boundary via inclusion gives
\begin{align*}
    \mathcal{J}^* [s] = \sum_j c_{i,j}  \mathcal{J}^* [s_{i,j}^*]  = \sum_j c_{i,j} [\sigma_{i,j}^*] \, . 
\end{align*}
Giving
\begin{align*}
    \int_{\sigma_{i,j}} \mathcal{J}^* (a_i + s) =  \int_{\sigma_{i,j}} \mathcal{J}^* a_i + \int_{\sigma_{i,j}}  \sum_{k=1}^{\ell_i} c_{i,k}  \sigma_{i,k}^*   = 
    c_{i,j} + \int_{\sigma_{i,j}} \mathcal{J}^* a_i    \, .
\end{align*}
Setting \(c_{i,j}\) so that the right-hand side is zero uniquely specifies \([s]\) and we have that \(a_i + s\) is an Amperian primitive for \(\beta_i\).
\end{proof}

This lemma is important because the relative helicity given any gauge always aligns with the helicity for an Amperian gauge.

\begin{remark}\label{remark:helicity}
    Given a \(\ST_i\) with non-trivial genus and \(\beta_i \in \C\), the helicity \(H_{G} ( \beta_i, \ST_i)\) can take any value via a choice of \(G \subseteq \Omega^1(\ST_i) \) if \(\ell_i > 0\) and \(\psi_{i,1}, \dots, \psi_{i,\ell_i} \) are not all identically zero.
\end{remark}

We consider this remark in more detail below. Under the conditions in remark \ref{remark:helicity}, \(\beta_i \neq 0\) and \(\ell_i >0\). Consider two primitives of \(\beta_i\) given by \(a_i \in G\) and \(a_i + s \in G'\) for a closed 1-form \(s \in  \Omega^1(\ST_i)\). Both \(s\) and \(\mathcal{J}^* a_i\) are closed on the boundary so formula \eqref{eq:compute_2} is applicable:
\begin{align*}
    H_G (\beta_i, \ST_i) - H_{G'} (\beta_i, \ST_i)  =  \int_{\partial \ST_i}  \!  s \wedge a_i &=  - \sum_{j=1}^{\ell_i}  \left(\int_{\sigma_{i,j}} \! s    \int_{\tau_{i,j}}  \! a_i -\int_{\sigma_{i,j}}  \! a_i \int_{\tau_{i,j}}  \! s  \right) 
    =  - \sum_{j=1}^{\ell_i}  \left( \psi_{i,j} \int_{\sigma_{i,j}}  \!  s    \right) 
     \, .
\end{align*}
Therefore, if any of the fluxes \(\psi_{i,j}\) are non-trivial, we may select \(s\) to make the difference in helicity any value (we saw a comparable situation in the proof of Lemma \ref{Lemma:gauge_exist}).

Therefore, if we consider problem \eqref{prob:min1} without constraining the potential for non-trivial flux: (1) the constraint \(H_{\Omega^1(\ST_i)} (\beta_i , \ST_i) = h_i\) is not well-defined as a function of \(\beta_i\) and (2) the helicity can take any value, so this constraint does not restrict \(\beta_i\) in any way. Similarly, for problem \eqref{eq:ext} with a helicity constraint rather than a relative helicity constraint for non-trivial genus and \(\psi_{i,1}, \dots, \psi_{i,\ell_i} \) not identically zero.

\subsection{Jump Conditions}

The boundary and jump conditions that we are using in equations \eqref{prob:MRxMHD} are reported as a special case of the following conditions given by Bruno and Laurence \cite{bruno_laurence_1996}:
\begin{align*}
    0 & = \llbracket B \cdot N \rrbracket \, ,  \\
    0 & = \llbracket B^2 + 2p \rrbracket N - 2 (B \cdot N) \llbracket B \rrbracket \, \, ,
\end{align*}
on \(\partial \ST_i\) where \(N\) is the outward unit normal vector field to the boundary. The two terms in the second equation here are orthogonal, namely, the first term is parallel to \(N\) and the second term lies in the tangent space of \(\partial \ST_i\). Therefore, both terms must be \(0\) independent of each-other. So either \(B\) is continuous across the boundary and \(\llbracket p \rrbracket = 0\), or:
\begin{align*}
    0 & = B \cdot N \, ,  \\
    0 & = \llbracket B^2 + 2p \rrbracket \, .
\end{align*}
Under the assumption that \(\llbracket p \rrbracket \neq 0\) across our interfaces, the original conditions from Bruno and Laurence are equivalent to the boundary conditions in \eqref{prob:MRxMHD}.

\subsection{Lagrange Multipliers} \label{sub:LM}

We know that any \(\beta_i \in \C\) for \(\ell_i>0\) can be written as:
\begin{align} \label{eq:b_form}
    \beta_i = d \eta_i + \sum_{j=1}^{\ell_i}  \psi_{i,j}  T_{i,j}^*  
    = d \Bigg( \eta_i + \sum_{j=1}^{\ell_i}  \psi_{i,j} \Gamma_{i,j}  \Bigg)\, ,
\end{align}
for \(\eta_i \in  \Omega_D^1(\ST_i)\), \( T_{i,j}^*  \in \mathcal{H}_D^2 (\ST_i) \), and \(\Gamma_{i,j} \in  \Omega^1 (\ST_i)\). The second equality holds because there always exists a primitive \(\Gamma_{i,j}\) for \( T_{i,j}^* \) (Lemma \ref{Lemma:Potential}).

In the variational formulation \eqref{eq:ext} we incorporate the flux constraints directly into the objective. We see from equation \eqref{eq:b_form} that the harmonic component of \(\beta_i\) (which determine the flux of \(\beta_i\) through any relative \(2\)-chain) is fixed once the coefficients \(\psi_{i,1},\dots,\psi_{i,\ell_i}\) are prescribed, provided \(\ell_i>0\). Consequently, the objective and helicity are functions of \(f\in\I\) and \(d\eta_i\). The potential \(\eta_i\) is not always unique, different \(\eta_i\) may have the same exterior derivative \(\beta_i\), hence the same value for the objective and relative helicity. To incorporate variations of the potentials we formulate the optimisation over \(f\) and \(\eta_i\), giving
\begin{align}\label{prob:start_point}
\begin{gathered}
    \underset{\substack{  \eta_i \in  \Omega^1_D(\ST_i)  \\  f\in\I} }{ \text{c. p.} } \sum_{i=1}^n \Bigg( \frac{1}{2}\Big\| d\eta_i + \sum_{j=1}^{\ell_i} \psi_{i,j}\,T_{i,j}^* \Big\|_{L^2}^2 - p_i\,|\ST_i| \Bigg) \\
  \text{subject to}\qquad 
   \mathscr{H}\!\Big( d\eta_i + \sum_{j=1}^{\ell_i} \psi_{i,j}\,T_{i,j}^* ,\; \ST_i\Big) = h_i,\qquad i=1,\dots,n \, . 
\end{gathered}
\end{align}
If \(\ell_i=0\) there is no harmonic contribution in the displayed expressions.

For the remainder of this section we freeze the deformation \(f\). The variational problem reduces to finding admissible potentials \(\eta_i\). A standard way to treat the constraint in \eqref{prob:start_point} is to introduce Lagrange multipliers. If the constraint map is a submersion, the regular value theorem implies that the preimage of the prescribed value is a submanifold whose tangent space at each point is characterised as the kernel of the linearised constraint. Imposing that admissible variations lie in this tangent space and the Lagrange multipliers theorem yields Euler--Lagrange equations for the constrained optimisation problem.

Let's consider a regular value theorem for the relative helicity constraints in problem \eqref{prob:start_point}. Our definition of a critical point requires us to determine the Fr\'echet derivative in a \(C^\infty\) topology, hence, we consider \(\Omega^1_D(\ST_i)\) with this topology.

We can view \(\Omega_D^1(\ST_i)\) as the kernel of a continuous linear map given by the pullback \(\mathcal{J}^* : \Omega^1(\ST_i) \to \Omega^1(\partial \ST_i)  \). So \(\Omega_D^1(\ST_i)\) is a Fr\'echet space because it is a closed subspace of the Fr\'echet space \(\Omega^1(\ST_i)\).

Our constraint function is \(g_i:  \Omega^1_D(\ST_i) \to \R\), given by \(g_i : \eta_i \mapsto \mathscr{H} ( \beta_i, \ST_i) \). For any \(\eta_i \in \Omega^1_D(\ST_i)\), the \textit{tangent spaces} \(T_{\eta_i}  \Omega^1_D(\ST_i) \) and \( T_{g_i(\eta_i)} \R\) may be canonically identified with \( \Omega^1_D(\ST_i) \) and \( \R\) respectively \cite{kriegl1997convenient}. A standard computation gives us that \(g_i\) is continuously Fréchet differentiable, and therefore has a directional derivative (proof in Appendix \ref{app:differentiable}). The derivative of \(g_i\) with respect to \(\eta_i\) in the direction \(\eta_i' \in  \Omega^1_D(\ST_i)\) is denoted \(d_{\eta_i} g_i(\eta_i')\). Given this, we may adapt a regular value theorem given by Gl\"ockner (theorem D \cite{glockner2015fundamentals}) to our setting.

\begin{theorem}[Gl\"ockner, theorem D adaptation \cite{glockner2015fundamentals}]\label{theorem:Glock}
    Consider a smooth submersion \(g_i : \Omega_D^1(\ST_i) \to \R\) that has a Fr\'echet derivative \(d_{\eta_i} g_i : \Omega_D^1(\ST_i) \to  \R \) at \(\eta_i \in \Omega_D^1(\ST_i)\). If \(d_{\eta_i} g_i\) is surjective, then \(\ker (d_{\eta_i} g_i) = T_{\eta_i} g_i^{-1}(y)\) for any \(y = g_i(\eta_i)\). 
\end{theorem}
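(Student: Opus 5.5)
The plan is to exploit the fact that the codomain of \(g_i\) is \(\R\), which is finite dimensional. This lets me avoid a general Nash--Moser type inverse function theorem on the Fr\'echet space \(\Omega^1_D(\ST_i)\) and reduce to an implicit function theorem with a one-dimensional unknown.

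First I split the domain using the surjectivity of \(d_{\eta_i} g_i\). Choose \(v \in \Omega^1_D(\ST_i)\) with \(d_{\eta_i} g_i(v) = 1\) and write \(E \coloneqq \ker(d_{\eta_i} g_i)\). Since \(d_{\eta_i} g_i\) is a continuous linear functional, \(E\) is a closed subspace. The map
\[
\Omega^1_D(\ST_i) \to E \oplus \R \, , \qquad \xi \mapsto \big(\xi - d_{\eta_i} g_i(\xi)\, v \, , \; d_{\eta_i} g_i(\xi)\big)
\]
is a topological isomorphism, so \(E\) is a complemented Fr\'echet subspace with complement \(\R v\).

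Next I define \(F : E \times \R \to \R\) by \(F(k,t) = g_i(\eta_i + k + t v) - y\), where \(y = g_i(\eta_i)\). It satisfies \(F(0,0)=0\) and \(\partial_t F(0,0) = d_{\eta_i} g_i(v) = 1 \neq 0\). I would then invoke an implicit function theorem for smooth maps \(E\times\R\to\R\) whose partial derivative in the finite-dimensional variable is invertible. This is the version Gl\"ockner proves for Keller \(C^r\) maps on locally convex spaces with finite-dimensional range, and it is exactly where the original Theorem D is used. It yields a neighbourhood \(U\ni 0\) in \(E\) and a smooth \(\varphi : U \to \R\) with \(\varphi(0)=0\) such that, near \(\eta_i\),
\[
g_i^{-1}(y) = \{\eta_i + k + \varphi(k) v : k \in U\} \, .
\]
Differentiating \(F(k,\varphi(k)) = 0\) at \(k=0\) gives \(d_{\eta_i} g_i(k) + d\varphi_0(k) = 0\). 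Since \(d_{\eta_i} g_i(k) = 0\) for \(k \in E\), this forces \(d\varphi_0 = 0\). The map \(k \mapsto \eta_i + k + \varphi(k)v\) is therefore a submanifold chart for the level set, and its derivative at \(0\) is the inclusion \(E \hookrightarrow \Omega^1_D(\ST_i)\). So \(T_{\eta_i} g_i^{-1}(y) = E\).

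As a consistency check, and to obtain the inclusion \(T_{\eta_i} g_i^{-1}(y) \subseteq \ker(d_{\eta_i} g_i)\) directly, take any smooth curve \(c\) in the level set with \(c(0)=\eta_i\). Then \(g_i(c(t)) = y\) for all \(t\), so the chain rule gives \(d_{\eta_i} g_i(c'(0)) = 0\). Conversely, for \(k \in E\) the curve \(t\mapsto \eta_i + tk + \varphi(tk)v\) lies in the level set and has velocity \(k\).

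The main obstacle is the implicit function step, specifically the smoothness of \(\varphi\) in \(k\). Solvability for each fixed \(k\) and continuity of \(\varphi\) follow from the one-dimensional intermediate value argument: \(\partial_t F\) stays positive near \((0,0)\) by continuity of \(dg_i\), so \(t\mapsto F(k,t)\) is strictly increasing there. Differentiability of \(\varphi\) in the Fr\'echet directions, however, needs Gl\"ockner's parameter-dependent argument rather than a Banach contraction. I would first try to prove it by hand via difference quotients combined with the continuity of \(dg_i\) from Appendix \ref{app:differentiable}. Failing that, I would cite Gl\"ockner's theorem directly.
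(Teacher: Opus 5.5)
Your proposal is correct and follows essentially the paper's route: the paper also complements \(\ker(d_{\eta_i}g_i)\) by \(\mathrm{span}\{\nu\}\) with \(d_{\eta_i}g_i(\nu)=1\) (Appendix \ref{app:Complemented}) and then defers the submanifold structure and the identification \(T_{\eta_i}g_i^{-1}(y)=\ker(d_{\eta_i}g_i)\) to Gl\"ockner, which you simply unpack via the one-dimensional implicit function theorem and \(d\varphi_0=0\). The step you flag as the main obstacle is in fact elementary here: apply the fundamental theorem of calculus to \(F(k+sh,\varphi(k+sh))-F(k,\varphi(k))=0\) along the straight segment and use the continuity of \(\varphi\) and \(dg_i\); this gives \(d\varphi_k(h)=-\partial_kF(k,\varphi(k))(h)/\partial_tF(k,\varphi(k))\) directly, and bootstrapping this formula yields smoothness without any further citation.
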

Note that Gl\"ockner's result relies on \(d_{\eta_i} g_i\) having a \emph{complemented} kernel at \(\eta_i \in \Omega_D^1(\ST_i)\), which is true when \(g_i\) has a one dimensional codomain \cite{glockner2015fundamentals} (Appendix \ref{app:Complemented}).

Since \(g_i\) is Fréchet differentiable, \(d_{\eta_i} g_i\) coincides with the Gâteaux derivative \cite{zeidler1986nonlinear}. Hence, we proceed with the Gâteaux derivative.

Consider \(\eta_{i,t} = \eta_i + t \eta_i' + o(t)  \in  \Omega^1_D(\ST_i)\) that is differentiable in \(t \in [-\epsilon, \epsilon]\), \(\epsilon > 0\). A partial derivative of \(\eta_{i,t}\) with respect to \(t\) is denoted \(\partial_t\) and \(\eta_i' = \partial_t \eta_{i,t}|_{t=0}\) is an element of the tangent space \(T_{\eta_i}  \Omega^1_D (\ST_i)\). So:
\begin{align} 
    d_{\eta_i} g_i (\eta_i')  &=  \partial_t g_i (\eta_{i,t}) |_{t=0} 
     \label{eq:deriv}
    = 2  \int_{\ST_i}  \eta_i' \wedge \beta_i 
    -
    \sum_{j=1}^{\ell_i} \psi_{i,j} \int_{\sigma_{i,j}}  \eta_i' \, .
\end{align}
We have used the divergence theorem in our simplification.

Assume that \(\beta_i\) is not identically zero on \(\ST_i\). Then there exists a point \(x_0 \in \ST_i\) such that \(\beta_i(x_0) \neq 0\). By continuity, \(\beta_i\) remains non-zero on some ball \(\B(r) \ssubset \ST_i\) with radius \(r>0\). Since \(\partial \ST_i\) is smooth, \(\ST_i\) satisfies the \(\epsilon\)-cone property \cite{henrot2018shape}; hence there exists a ball \(\B(r_0) \ssubset \B(r) \cap \ST_i\) for some \(r_0>0\). In particular, \(\beta_i\) is non-zero throughout a compact subset of \(\ST_i\). It follows that
\[
\|\beta_i\|_{L^2(\ST_i)}^2 \geq \|\beta_i|_{\B(r_0)}\|_{L^2(\B(r_0))}^2 > 0 \, .
\]
Let \(\chi_{\B}\) denote the characteristic function of \(\B(r_0)\) on \(\ST_i\). For each \(k \in \mathbb{N}\), let \(\chi_k \colon \ST_i \to \mathbb{R}\) be a smooth function in \(\Omega^0(\ST_i)\) satisfying \(\chi_k|_{\partial \ST_i}=0\), and choose \(\chi_k \to \chi_{\B}\) pointwise as \(k \to \infty\). Consider the variation \(
\eta_i' = \chi_k \,\star \beta_i \) in the directional derivative \eqref{eq:deriv}. Then
\begin{align*}
    d_{\eta_i} g_i(\eta_i')
    &= 2 \! \int_{\ST_i} \! \! \chi_k \,\star \beta_i \wedge \beta_i
       + \sum_{j=1}^{\ell_i} \psi_{i,j} \int_{\sigma_{i,j}} \! \! \mathcal{J}^*(\chi_k \star \beta_i) 
    \xrightarrow{k \to \infty}
       2 \! \int_{\B(r_0)} \! \! \star \beta_i \wedge \beta_i
       = 2\|\beta_i|_{\B(r_0)}\|_{L^2(\B(r_0))}^2
       > 0 \, .
\end{align*}
Therefore, for \(k\) sufficiently large, \(
d_{\eta_i} g_i(\chi_k \,\star \beta_i) > 0 \). By linearity of the derivative, \(d_{\eta_i} g_i\) is surjective onto \(\mathbb{R}\). Consequently, every \(h_i\) is a regular value for the relative helicity constraint \(g_i(\eta_i) = h_i\), provided \(\beta_i \neq 0\) (Theorem \ref{theorem:Glock}). Hence the regular Lagrange multiplier condition gives necessary and sufficient conditions for a solution of the original problem (Appendix \ref{app:LM}).

The same argument applies verbatim to conventional helicity, up to any additional constraints on the potential.

Applying the method of Lagrange multipliers therefore yields the following necessary and sufficient conditions for a solution:
\begin{align*}
    \underset{\substack{\eta_i \in \Omega^1_D(\ST_i) \\ \mu_i \in \R}}{\mathrm{c.p.}}
    \sum_{i=1}^n
    \left(
        \frac{1}{2}\Big\| d\eta_i + \sum_j \psi_{i,j} T_{i,j}^* \Big\|_{L^2}^2
        - p_i \myabs{\ST_i}
        - \frac{\mu_i}{2}\big(\mathscr{H}(\beta_i,\ST_i)-h_i\big)
    \right),
\end{align*}
for Lagrange multipliers \(\mu_1,\dots,\mu_n \in \R\), assuming \(\beta_i \neq 0\). As each domain may be considered independently we obtain the following lemma.

\begin{lemma}\label{Lemma:replace}
    Let \((\ST_1,\dots,\ST_n)\) be a partition of \(\ST\), and assume that \(\beta \neq 0\). Fix the constants \(\psi_{i,j}\), \(p_i\), and \(h_i\). Given \(f \in \I\) and the volume form \(f^*\varpi\), problem \eqref{eq:ext} reduces, for each \(i\) with \(\ell_i>0\), to
    \begin{align}\label{eq:ext_lagrange}
        \underset{\substack{\eta_i \in \Omega^1_D(\ST_i) \\ \mu_i \in \R}}{\mathrm{c.p.}}
        \left(
            \frac{1}{2}\Big\| d\eta_i + \sum_{j=1}^{\ell_i}\psi_{i,j} T_{i,j}^* \Big\|_{L^2}^2
            - \frac{\mu_i}{2}\big(\mathscr{H}(\beta_i,\ST_i)-h_i\big)
        \right).
    \end{align}
    For \(i\) where \(\ell_i=0\), the harmonic terms vanish.
\end{lemma}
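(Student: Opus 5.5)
With $f$ frozen, I will show that problem \eqref{eq:ext} decouples into the $n$ independent problems \eqref{eq:ext_lagrange}, one per subdomain. The argument has three steps: reduce the constraint set, identify the tangent space to the helicity level set, and apply the regular Lagrange multiplier theorem.

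\textbf{Step 1: reduction of the constraint set.} By the Hodge--Morrey--Friedrichs decomposition of Subsection \ref{sub:Hodge}, every $\beta_i\in\C$ has the form \eqref{eq:b_form}. By Subsection \ref{sub:flux}, the flux constraints $\int_{T_{i,j}}\beta_i=\psi_{i,j}$ fix the harmonic coefficients exactly. So the admissible set of \eqref{eq:ext} is parametrised by $\eta_i\in\Omega^1_D(\ST_i)$ through the map $\eta_i\mapsto d\eta_i+\sum_j\psi_{i,j}T_{i,j}^*$. This is the reformulation \eqref{prob:start_point}.

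Two points need checking here:
\begin{itemize}
\item The map is surjective onto the flux-constrained set, so no admissible $\beta_i$ is lost.
\item Critical points in $\beta_i$ correspond to critical points in $\eta_i$. The objective and the constraint depend on $\eta_i$ only through $d\eta_i$. Any tangent direction $\beta_i'$ to the flux-constrained set is a closed Dirichlet form with zero flux, hence exact with a Dirichlet primitive $\eta_i'$. Conversely, every $d\eta_i'$ is such a direction. So the vanishing of the derivative in all $\eta_i'$ is equivalent to the vanishing in all admissible $\beta_i'$.
\end{itemize}
By Lemma \ref{Lemma:well-pose}, the relative helicity does not depend on the chosen primitive, so $g_i(\eta_i)=\mathscr H(\beta_i,\ST_i)$ is well defined. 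When $\ell_i=0$ the harmonic terms are simply absent.

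\textbf{Step 2: regularity of the constraint.} With $f$ frozen, the term $-p_i|\ST_i|$ is constant and drops out. The $i$-th energy term and the $i$-th constraint involve only $\eta_i$. The total derivative of the sum therefore vanishes on the product of tangent spaces exactly when each summand's derivative vanishes on its own tangent space, and the problem splits into $n$ independent problems.

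For each $i$, the computation preceding the lemma shows that $d_{\eta_i}g_i$ is surjective onto $\R$ whenever $\beta_i\neq0$. The test direction $\chi_k\star\beta_i$ used there is Dirichlet, since $\chi_k$ vanishes on $\partial\ST_i$, so the boundary term drops. The kernel of $d_{\eta_i}g_i$ is complemented because its codomain is one-dimensional. Theorem \ref{theorem:Glock} then identifies $\ker d_{\eta_i}g_i$ with the tangent space of the level set $g_i^{-1}(h_i)$.

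\textbf{Step 3: the Lagrange multiplier.} The regular Lagrange multiplier theorem of Appendix \ref{app:LM} gives the following equivalence. The derivative of the energy $E_i$ annihilates $\ker d_{\eta_i}g_i$ if and only if $d_{\eta_i}E_i=\tfrac{\mu_i}{2}\,d_{\eta_i}g_i$ for some $\mu_i$. This is exactly the criticality of the functional in \eqref{eq:ext_lagrange} in the pair $(\eta_i,\mu_i)$; the $\mu_i$-derivative recovers the constraint. The factor $\tfrac12$ is only a normalisation of $\mu_i$.

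The "if and only if" follows from two facts. A linear functional vanishing on the kernel of a surjective functional to $\R$ factors through it. Conversely, a multiple of $d_{\eta_i}g_i$ clearly vanishes on that kernel.

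\textbf{Main obstacle.} The hypothesis $\beta\neq0$ is only global, but Step 2 needs $\beta_i\neq0$ on each subdomain for the surjectivity argument. For any $i$ with $\beta_i=0$ I will handle the block separately. Either the data $(h_i,\psi_{i,j})$ all vanish, in which case $\beta_i=0$ is the trivial critical point noted after \eqref{eq:ext}; or I read the lemma as assuming $\beta_i\neq0$ for every $i$, consistent with Theorem \ref{thr:iff_simple}.

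I also expect some care in justifying that Fréchet criticality in the $C^\infty$ topology agrees with Gâteaux criticality. I will invoke the differentiability result of Appendix \ref{app:differentiable} for this.
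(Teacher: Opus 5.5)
Your proposal is correct and follows essentially the same route as the paper. You reduce to the potentials $\eta_i\in\Omega^1_D(\ST_i)$ via the HMF decomposition (the fluxes fix the harmonic part), freeze $f$, show $d_{\eta_i}g_i$ is surjective using the direction $\chi_k\star\beta_i$, and apply Gl\"ockner's regular value theorem and the Lagrange multiplier argument of Appendix~\ref{app:LM} independently on each $\ST_i$. Your explicit checks that criticality in $\beta_i$ and in $\eta_i$ coincide, and that the hypothesis must be read as $\beta_i\neq0$ on each subdomain (as the paper's own derivation assumes), are clarifications the paper leaves implicit, not a different argument.
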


\section{Proof of Main Results}\label{section:Main Proofs}

\subsection{Proof of Theorem \ref{thr:iff_simple}}

We start from the original formulation for our variational problem \eqref{eq:ext} with \(\beta_i \neq 0\). Consider an \(f_t \in \I\) for each \(t \in [0,1]\) that is differentiable in \(t\) where \(f_1\) is the identity, \(f_0 = f \). The objective function, or magnetic energy is:
\begin{align*}
    \begin{gathered} 
    E(f_t, \beta_{1,t}, \dots, \beta _{n,t})  =  \sum_{i=1}^n \Big( \frac{1}{2} \big\| \beta_{i,t}\big\|_{L^2}^2 -  p_i \myabs{ \ST_i } \Big) \, , 
    \end{gathered}
\end{align*}
where \( \beta_{i,t} \in \C\) for each \(t \in [0,1]\) and \(a_{i,t}\) is continuously differentiable in \(t\), where \(da_{i,t} = \beta_{i,t}\). Let \(a_{i,0} = a_i\) and \(\beta_{i,0}= \beta _i\).

We compute the derivative of \(E\) that is needed to define critical points: namely, a derivative that is Fr\'echet with respect to \(\eta\) and G\^ateaux with respect to \(f\). Since \(E\) is Fr\'echet differentiable in \(\eta\), this derivative may be obtained by differentiating \(E\) along all paths \(P_t \coloneqq (f_t, \beta_{1,t}, \dots, \beta_{n,t})\) that satisfy the flux and relative helicity constraints in Problem~\eqref{eq:ext}. We refer to such paths as \textit{allowable paths}. A point is critical for the magnetic energy precisely when the first variation of the energy vanishes at \(t=0\) along every allowable path. Hence, a necessary and sufficient condition for \(P_0\) to be a critical point is 
\begin{align*}
    0 = \partial_t E(P_t) |_{t=0}  =
     \left(  \partial_{f_t} E(f')  + \sum_{i=1}^n \partial_{ \beta_{i,t}} E( \beta_i') \right) \Bigg|_{t=0}  \, ,
\end{align*}
where \((f', \beta_{1}', \dots, \beta_{n}') = ( \partial_t f_{t}|_{t=0}, \partial_t \beta_{1,t}|_{t=0}      , \dots, \partial_t \beta_{n,t}|_{t=0} )
\) as in Section \ref{sub:LM}. Namely, \(\partial_{f_t} E(f')\) is the partial derivative with respect to \(f_t\), in the direction \(f'\) at the point \(P_t\) for each \(t\). So the conditions for \(P_0\) to be a critical point are exactly:
\begin{align} \label{eq:crit_cond_1}
    0 = d_{f_t} E(f') |_{t=0} \, , \qquad 
    0 = d_{\beta_{i,t}} E( \beta_i') |_{t=0}  \, , \; \;  i = 1, \dots, n\, ,
\end{align}
for all allowable paths passing through \(P_0\). As \(\beta_i \neq 0\), the second condition is equivalent to finding the critical points of \(E\) at a fixed \(f \in \I\). Recall from the previous section that this is equivalent to a Lagrange multipliers problem (Lemma \ref{Lemma:replace}). Namely, for each \(i = 1, \dots, n\) we intend to determine critical points of 
\begin{align*}
    L(\eta_i, \mu_i) \coloneqq \frac{1}{2} \big\| d a_i  \big\|_{L^2}^2 - \frac{\mu_i}{2} \left(   \mathscr{H}( d a_i,  \ST_i) -  h_i      \right) \, , & &
    a_i = \eta_i + \sum_{j=1}^{\ell_i}  \psi_{i,j} \Gamma_{i,j} \, ,
\end{align*}
where the harmonic terms are zero whenever \(\ell_i = 0\). So
\begin{align*}
    0 = d_{\beta_{i}} E(\beta_i') &\iff 0 = \partial_t L(\eta_{i,t}, \mu_{i,t}) |_{t=0} = d_{\eta_{i,t}} L(\eta_i') |_{t=0} +  d_{\mu_{i,t}} L(\mu_i') |_{t=0} \\ &\iff  0= d_{\eta_{i,t}} L(\eta_i') |_{t=0} \quad \text{and} \quad  0= d_{\mu_{i,t}} L(\mu_i') |_{t=0} \, ,
\end{align*}
for \(\mu_{i,t} \in \R\) that are differentiable in \(t\) where \(\mu_i' = \partial_t \mu_{i,t}|_{t=0}\). The latter equation here is given by:
\begin{align*}
     0= d_{\mu_{i,t}} L(\mu_i') |_{t=0} = \partial_t L(\eta_{i}, \mu_{i,t}) |_{t=0}  =   - \frac{\mu_i'}{2} \left(   \mathscr{H}( \beta_i,  \ST_i) -  h_i      \right)  \, ,
\end{align*}
for all \(i\) and \(\mu_i' \in \R\). Hence, this is equivalent to \(\mathscr{H}( \beta_i,  \ST_i) =  h_i\) for all \(i\). For the other derivative of \(L\) we have that:
\begin{align*}
d_{\eta_{i,t}} L(\eta_i') |_{t=0} = \partial_t L(\eta_{i,t}, \mu_{i}) |_{t=0}   = \langle d\eta_i',  \beta_i\rangle_{L^2} -  \frac{\mu_i}{2} \int_{\ST_i} (\eta'_i \wedge \beta_i + a_i \wedge d\eta_i')    = \langle \eta_i' , \star (d \star \beta_i - \mu_i \beta_i) \rangle_{L^2} \, .
\end{align*}
The vanishing of \(d_{\eta_{i,t}} L(\eta_i') |_{t=0}\) for all \(\eta'_i \in  \Omega^1_D(\ST_i)\) implies that \(\star (d \star \beta_i - \mu_i \beta_i) = 0\) on the interior of \(\ST_i\). Since \(\beta_i\) is smooth, the quantity \(\star (d \star\beta_i - \mu_i\beta_i)\) is continuous. As this vanishes on the interior of \(\ST_i\), it must vanish everywhere on \(\ST_i\). Therefore, we conclude that \(d \star\beta_i = \mu_i\beta_i\), which is a Beltrami equation in \(\beta_i\) for each \(i\). Note that the Hodge star operator encodes the volume form \(f^* \varpi\) on \(\ST_i\). Upon pushing forward via \(f\), we obtain a Beltrami equation for the elements of \(\mathcal{C}(f(\ST_i))\) with a Euclidean metric. This result is highlighted in Remark~\ref{remark:MHD}.

The conditions for a critical point in \eqref{eq:crit_cond_1} are equivalently written as:
\begin{align} \label{eq:crit_cond_2}
    0 = d_{f_t} E(f_t) |_{t=0} = \partial_t \left(   \sum_{i=1}^n \Big( \frac{1}{2} \big\| \beta_{i,t}\big\|_{L^2}^2 -  p_i \myabs{ \ST_i } \Big)      \right) \Bigg|_{t=0} \, , \quad 
    d \star \beta_{i} = \mu_{i} \beta_{i} \, , \quad \mathscr{H}( \beta_{i,t},  \ST_i) =  h_i \, , 
\end{align}
for all \(i\), where the harmonic terms in the HMF decomposition of \(\beta_{i,t}\) are given, as in equation \eqref{eq:b_form}. Note that the relative helicity holds for all \(t\) by definition of the admissible paths. We interpret this problem as: finding critical points of the magnetic energy under variations of the boundary through all paths \(P_t\) satisfying the prescribed flux and helicity constraints while \(P_0\) satisfies the Beltrami equation.

Consider a derivative \(d/dt f_t : \ST \to T f_t(\ST) \). Composing this with a directional derivative \(d_t f_{t}^{-1} :T f_t(\ST) \to T (\ST) \) gives us a vector field \(V(x) \coloneqq d_t f_{t}^{-1} \ (d/dt (f_t(x))  )|_{t=0}\), for \(x \in \ST\), namely \(V:\ST \to T(\ST)\). With this we have the following formula, essentially Cartan's formula,
\begin{align*}
    \partial_t (f^*_t \omega_t) |_{t=0} = f^*_0 (\mathcal{L}_V (\omega_0) +  \partial_t \omega_t |_{t=0}) =  i_V (d\omega_0) + d i_V (\omega_0) + \partial_t \omega_t |_{t=0} \, ,
\end{align*}
for any \(\omega_t \in  \Omega^k (f_t(\ST_i) ) \) where \(\mathcal{L}_V\) is a Lie derivative with respect to \(V\). 

It is useful to perform some operations on \(f_t(\ST_i)\) because the metric here is the Euclidean metric and independent of \(t\). Let \(\epsilon_i \coloneqq f_t^* \partial_t ((f_t^{-1})^* \eta_{i,t}) |_{t=0}\) where \( (f_t^{-1})^* \eta_{i,t} \in  \Omega^1 (f_t(\ST_i))\) and \( (f_t^{-1})^*\) is a pushforward by \(f_t\). Then we have commutation of the Hodge star with a derivative: 
\begin{align*}
    f^*_t \partial_t (f_t^{-1})^* \star \eta_{i,t} |_{t=0} = \star f^*_t  \partial_t (f_t^{-1})^* \eta_{i,t} |_{t=0} = \star \epsilon_i \, ,
\end{align*}
where \(\star\) is the Hodge star in \(\ST_i\) with volume form \(f_t^* \varpi\). We use the notation \(a_i' = \partial_t \eta_{i,t}|_{t=0}\). This gives the relations:
\begin{align*}
    a_i' &= \partial_t \eta_{i,t}|_{t=0} = \partial_t  f_t^*  ((f_t^{-1})^* \eta_{i,t}) |_{t=0} = 
    i_V (d  \eta_{i}  ) + d i_V (  \eta_{i}  ) + \epsilon_i \, , \\
    (\star a_i)' &= \partial_t \star \eta_{i,t}|_{t=0} = \partial_t   f_t^*  (  (f_t^{-1})^*    \star \eta_{i,t} ) |_{t=0} = i_V ( d  \star \eta_{i}  ) + d i_V (  \star \eta_{i}  ) + \star \epsilon_i \, .
\end{align*}
Let's return to our problem, given by equation \eqref{eq:crit_cond_2}. For simplicity, we deal with each term in the derivative with respect to \(f_t\) independently until the end of our computation. First, let's consider \(\| \beta_{i,t}\|_{L^2}^2\) for any \(i \in \{1, 2, \dots, n\} \):
\begin{align*}
    \partial_t    \big\| \beta_{i,t}\big\|_{L^2}^2    \Big|_{t=0}   = 2  \langle da_i, d \epsilon_i \rangle + \! \int_{\ST_i} \! \! d i_V (da_i \wedge \star da_i) = 2 \mu_i \! \int_{\ST_i} \! \! \epsilon_i \wedge\beta_i + \! \int_{\partial \ST_i}  \! \! \mathcal{J}^* ( 2 \epsilon_i \wedge \star\beta_i +  i_V (da_i \wedge \star da_i ) ) \, .
\end{align*}
We have used the Beltrami equation and divergence theorem in this computation. Note that these computations hold for any \(\mu_i \). Additionally, the relative helicity is constant in \(t\), so:
\begin{align*}
    0 & = \partial_t  \left( \int_{\ST_i} a_{i,t} \wedge \beta_{i,t}  - \sum_{j=1}^{\ell_i} \int_{\sigma_{i,j}} \mathcal{J}^* a_{i,t} \int_{\tau_{i,j}} \mathcal{J}^* a_{i,t}  \right) \Bigg|_{t=0}\\
    & =  \int_{\ST_i} \epsilon_i \wedge\beta_i + a_i \wedge d \epsilon_i + \mathcal{L}_V(a_i \wedge\beta_i) - \sum_{j=1}^{\ell_i} \psi_{i,j}  \int_{\sigma_{i,j}} \mathcal{J}^* a_i'    \\ 
     2 \int_{\ST_i}   \epsilon_i \wedge\beta_i & =  \int_{\partial \ST_i} \mathcal{J}^* (  a_i \wedge \epsilon_i -   i_V(a_i \wedge\beta_i) ) + \sum_{j=1}^{\ell_i} \psi_{i,j}  \int_{\sigma_{i,j}} \mathcal{J}^* a_i' \, , 
\end{align*}
where we have used the divergence theorem. Using this equality in the derivative of \(\| \beta_{i,t}\|_{L^2}^2\):
\begin{align*}
    \partial_t    \big\| \beta_{i,t}\big\|_{L^2}^2    \Big|_{t=0}  \!  =   \int_{\partial \ST_i}  \! \! \mathcal{J}^* ( \mu_i a_i \wedge \epsilon_i -   \mu_i i_V(a_i \wedge\beta_i)   +    2 \epsilon_i \wedge \star\beta_i +  i_V (da_i \wedge \star da_i) ) + \mu_i \sum_{j=1}^{\ell_i} \psi_{i,j}  \int_{\sigma_{i,j}} \! \! \mathcal{J}^* a_i' \, .
\end{align*}
We see that this variation with respect to \(t\) reduces to integrals along the boundary \(\partial \ST_i\). With some rearrangement of terms we arrive at:
\begin{align*}
    \partial_t    \big\| \beta_{i,t}\big\|_{L^2}^2    \Big|_{t=0}  & =   \int_{\partial \ST_i} \mathcal{J}^* (   (i_V\beta_i + \epsilon_i ) \wedge ( 2 \star\beta_i - \mu_i a_i) +\beta_i (i_V \star\beta_i - \mu i_V a_i) - i_V( \beta_i \wedge \star \beta_i) ) \\ & \qquad  + \mu_i \sum_{j=1}^{\ell_i} \psi_{i,j}  \int_{\sigma_{i,j}} \mathcal{J}^* a_i' \, .
\end{align*}
Note that \(\mathcal{J}^* \beta_i = t \beta_i = 0\), so the second term disappears. This normal condition on \(\beta_i\) also implies the following:
\begin{align*}
    0 = \mathcal{J}^* \beta_{i,t} \implies 0 = \mathcal{J}^* (d \epsilon_i  + \mathcal{L}_V(\beta_i) ) = d \mathcal{J}^* ( \epsilon_i  + i_V(\beta_i) )  \, . 
\end{align*}
Additionally, for the first term we have that both terms in the wedge product are closed in \( \Omega^1 (\partial \ST_i)\). Applying equation \eqref{eq:compute_2}, gives
\begin{align*}
    \partial_t    \big\| \beta_{i,t}\big\|_{L^2}^2  &= - \sum_{j=1}^{\ell_i}  \left(   \int_{\sigma_{i,j}} \! (i_V\beta_i + \epsilon_i )    \int_{\tau_{i,j}} \!  ( 2 \star\beta_i - \mu_i a_i) -\int_{\sigma_{i,j}} \!  ( 2 \star\beta_i - \mu a_i) \int_{\tau_{i,j}} \!  (i_V\beta_i + \epsilon_i ) \right) \\ & \qquad -
    \int_{\partial \ST_i}  (i_V( \beta_i) \wedge \star\beta_i)  + \mu_i \sum_{j=1}^{\ell_i} \psi_{i,j} \int_{\sigma_{i,j}}   a_i'  \, .
\end{align*}
Using the divergence theorem 
\begin{align*}
    \int_{\tau_{i,j}} \mathcal{J}^* (i_V \beta_i + \epsilon_i ) = \int_{\tau_{i,j}} \mathcal{J}^* a_i' =  \int_{T_{i,j}} \mathcal{J}^*\beta_i' = \partial_t \psi_{i,j} |_{t=0} = 0 \, .
\end{align*}
So this term vanishes as flux is held constant. And
\begin{align*}
    \sum_{j=1}^{\ell_i}    \int_{\sigma_{i,j}} \mathcal{J}^*(i_V\beta_i + \epsilon_i )    \int_{\tau_{i,j}} \mathcal{J}^*( 2 \star\beta_i - \mu_i a_i)  = 
    \mu_i \sum_{j=1}^{\ell_i}   \psi_{i,j}   \int_{\sigma_{i,j}} \mathcal{J}^* a_i'   \, .
\end{align*}
So we are left with:
\begin{align*}
     \partial_t    \big\| \beta_{i,t}\big\|_{L^2}^2    \Big|_{t=0} & =   -
    \int_{\partial \ST_i} \mathcal{J}^* (i_V( \beta_i \wedge \star\beta_i) )  \, .
\end{align*}
Let's also simplify the \( \myabs{ \ST_i } \) terms in \(d_{f_t} E(f') |_{t=0}\):
\begin{align*}
    \partial_t \myabs{ \ST_i } |_{t=0} = \int_{\ST_i} \partial_t f^*_t \varpi |_{t=0} 
    = \int_{\ST_i} f^* \mathcal{L}_V \varpi =  \int_{\ST_i} d i_V \varpi =  \int_{\partial \ST_i} i_V \varpi \, . 
\end{align*}
Therefore the derivative \(d_{f_t} E(f') |_{t=0}\) is:
\begin{align*}
    0 = d_{f_t} E(f') |_{t=0} &= \partial_t \left(   \sum_{i=1}^n \Big( \frac{1}{2} \big\| \beta_{i,t}\big\|_{L^2}^2 -  p_i \myabs{ \ST_i } \Big)      \right) \Bigg|_{t=0} 
    =  -  \sum_{i=1}^n   
    \int_{\partial \ST_i}  \mathcal{J}^* (i_V \varpi) \Big( \frac{1}{2}  \| \beta_i\|^2_E \,  +  p_i  \Big) \, .
\end{align*}
This is required for all \(V\) with \(V|_{\partial \ST}=0\). We see that only the normal component in the image of \(V\) is relevant in this integral. Noticing that the normals are reversed on either side on any interface we have an equivalent condition:
\begin{align*}
    0 = \Big\llbracket \| \beta \|^2_E + 2 p \Big\rrbracket \Big|_{I_j} \, ,
\end{align*}
for all interfaces \(I_j\). So the necessary and sufficient conditions for a critical point with \(\beta_i \neq 0\) are equivalent to 
\begin{align*} 
    0 = \Big\llbracket \|\beta \|^2_E + 2 p \Big\rrbracket \Big|_{I_j} \, , \qquad 
    d \star\beta_i = \mu_i\beta_i \, , \qquad \mathscr{H}( \beta_i,  \ST_i) =  h_i \, ,
\end{align*}
where the flux of \(\beta_i\) through any element of \(H_2(\ST_i, \partial \ST_i)\) is specified. This concludes the proof of Theorem \ref{thr:iff_simple}. \qed

It is worth noting that the term:
\begin{align}\label{eq:floating}
    \mu_i \sum_{j=1}^{\ell_i} \psi_{i,j}  \int_{\sigma_{i,j}} \mathcal{J}^* a_i' \,  ,
\end{align}
in the derivative of \(\|\beta_{i,t}\|^2_{L^2}\) is cancelled by the corresponding term in the derivative of the relative helicity. If we were to replace the relative helicity with helicity in problem \eqref{eq:ext}, without any gauge restriction, and phrasing the problem in \(a_i\) to ensure the corresponding functions are well-defined, then the term \eqref{eq:floating} would remain in the derivative of \(\|\beta_{i,t}\|^2_{L^2}\). In this case we expect that a gauge choice will ensure that helicity is a well-defined function on \(\beta_i\) and make expression \eqref{eq:floating} equal to zero, allowing us to recover a similar result to Theorem \ref{thr:iff_simple}. As we know, this is always possible with an Amperian gauge.

\section{Characterisation for small \texorpdfstring{$|\mu_i|$}{Lagrange Multiplier}}\label{section:Characterisation and Local uniqueness}

Consider problem \eqref{problem:a_only} and fix some \(i \in \{1,2,\dots, n\}\). Our goal here is to show that for small enough \(|\mu_i|\), a solution to problem \eqref{problem:a_only} is a local minimum of the magnetic energy. We already know from the previous sections that extrema to this problem exist and the Beltrami equations provide necessary and sufficient conditions for any extrema for problem \eqref{problem:a_only} for \(\beta_i \neq 0\).

Let's look to see if \(\| \beta_i  \|_{L^2}^2\) is minimised, maximised, or a saddle point. We consider the equivalent setup provided in problem \eqref{eq:ext_lagrange}. We perform a second derivative test for all paths \(\eta_{i,t} = \eta_i + t \eta_i' + t^2 \eta_i'' + o(t^2) \in \Omega^1_D( \ST_i )\) that satisfy the helicity constraint. Let \((\eta_i, \mu_i)\) denote a critical point of problem \eqref{eq:ext_lagrange}, we have that (utilising a result in \cite{exist_gerner}):
\begin{align*}
    \partial_t^2  \left( \frac{1}{2} \big\| d a_{i,t} \big\|_{L^2}^2 - \frac{\mu_i}{2} \left(   H_G (\beta_{i,t},  \ST_i) -  h_i      \right) \right) \Bigg|_{t=0} 
    & = \partial_t  \left( \langle da_{i,t}' , da_{i,t} \rangle
    - \mu_i \int_{\ST_i } a_{i,t}' \wedge \beta _{i,t} 
    \right)\Bigg|_{t=0} \\
    & =    \langle 2 d \eta_i'' , \beta_i  \rangle + \langle d \eta_i' , d \eta_i' \rangle
    - \mu_i \int_{\ST } 2 \eta_i'' \wedge \beta_i  + \eta_i' \wedge d \eta_i' 
    \\
    & = \| d \eta_i' \|^2_{L^2 }  -  \mu_i H_G (\beta_i', \ST_i) \\ & \geq \mu_{\pm} |H_G (\beta_i', \ST_i)|  -  \mu H_G (\beta_i', \ST_i) \, ,
\end{align*}
which is positive if \(\mu_{\pm} > |\mu_i|\) where \(\mu_{\pm} = \min \{ |\mu_+|,|\mu_- |\} \). Here \(\mu_+\) and \(\mu_-\) as the smallest positive and largest negative respective eigenvalues for problem \eqref{problem:a_only} with zero flux for a given \(i\). Hence, this condition guarantees that (for \(\mu_i\) small enough) the 2-form \(\beta_i \) is a local minimiser of the magnetic energy.

\section{Conclusion}\label{section:Conclusion}

In this work, we developed a variational framework for solutions to the multi-region relaxed magnetohydrodynamics (MRxMHD) equations on compact, oriented domains embedded in \(\mathbb{R}^3\) with smooth boundary. We introduced a definition of relative helicity and proved its gauge invariance. We demonstrated the existence of an Amperian gauge, noting that the relative helicity aligns with conventional helicity in this gauge. Within each subdomain, we imposed physically motivated conditions, including divergence-free fields, boundary tangency, and prescribed relative helicity, flux, and pressure constraints.

We showed that under these constraints non-zero critical points of the magnetic energy correspond precisely to solutions of the MRxMHD equilibrium equations. We identified a gauge condition that had not previously been noted in the literature. In the presence of a pressure jump, we established the equivalence between the MRxMHD formulation and a stronger formulation by Bruno and Laurence.

Our results extend the MRxMHD model beyond previously considered toroidal geometries.

\section{Acknowledgements}\label{section:Acknowledgements}
This research was supported by an Australian Government Research Training Program (RTP) Scholarship at The University of Western Australia. The authors gratefully acknowledge this funding assistance.


\bibliographystyle{abbrv}
\bibliography{ref}

\newpage
\section*{Appendix}
\appendix

\section{De Rham Cohomology and Singular Homology}\label{section:Homology cohomology construction}

\subsection{Construction} \label{sub:construct}

For each \(k \in \mathbb{Z}\), the set of all smooth \(k\)-forms on \(\ST_i\) is \(\Omega^k(\ST_i)\). This set is an abelian group with a homomorphism given by the exterior derivative \(d :\Omega^k(\ST_i)\longrightarrow\Omega^{k+1}(\ST_i)\) satisfying \(d^2 = 0\). For \(k < 0\) or \(k > 3\) we have that \(\Omega^k(\ST_i) \cong 0\). The sequence \(\dots , \, \Omega^0(\ST_i), \, \Omega^1(\ST_i), \, \dots\) with operations \(d\) is commonly called the \emph{de Rham complex} of \(\ST_i\). The \(k\)-th \emph{de Rham cohomology} group of \(\ST_i\) is the \(k\)-th cohomology of the de Rham complex,
\[
H^k_{\mathrm{dR}}(\ST_i) \coloneqq
\frac{\ker\big(d:\Omega^k(\ST_i)\to\Omega^{k+1}(\ST_i)\big)}
     {\operatorname{im}\big(d:\Omega^{k-1}(\ST_i)\to\Omega^{k}(\ST_i)\big)} \, .
\]
If \(\omega\in\Omega^k(\ST_i)\) is closed (\(d\omega=0\)), its cohomology class \([\omega]\in H^k_{\mathrm{dR}}(\ST_i)\) is the coset
\[
[\omega]=\{\omega + d\nu:\nu\in\Omega^{k-1}(\ST_i)\} \, ,
\]
so two closed \(k\)-forms represent the same class precisely when their difference is exact. Similarly, one defines the \(k\)-th cohomology \(H^k_{\mathrm{dR}} (\partial \ST_i) \).

The pullback \(\mathcal{J}^*:\Omega^k(\ST_i)\to\Omega^k(\partial\ST_i)\) of an inclusion map \(\mathcal{J}:\partial\ST_i\hookrightarrow\ST_i\) commutes with the exterior derivative. Therefore, the kernel of \(\mathcal{J}^*\), denoted \(\Omega^k_D(\ST_i) \subset \Omega^k(\ST_i)\), forms a sequence 
\[
\dots, \, \Omega^0_D(\ST_i), \, \Omega^1_D(\ST_i), \, \dots
\] 
of abelian groups with connecting homomorphisms \(d\). This is referred to as the \emph{de Rham complex} of \(\ST_i\) \emph{relative} to \(\partial \ST_i\). The corresponding \(k\)-th \emph{de Rham cohomology} of \(\ST_i\) \emph{relative} to \(\partial \ST_i\) is
\[
H^k_{\mathrm{dR}}(\ST_i,\partial\ST_i)
\coloneqq
\frac{\ker\big(d:\Omega^k_D(\ST_i)\to\Omega^{k+1}_D(\ST_i)\big)}
     {\operatorname{im}\big(d:\Omega^{k-1}_D(\ST_i)\to\Omega^{k}_D(\ST_i)\big)} \, .
\]
Relative de Rham cohomology classes are represented by closed \(k\)-forms that vanish under \(\mathcal{J}^*\) modulo exact \(k\)-forms with primitives vanishing under \(\mathcal{J}^*\).

For each integer \(k\ge 0\) and points \(v_0,\dots,v_k \in \R^k\), the \emph{geometric \(k\)-simplex} spanned by \(v_0,\dots,v_k\) is the set,
\[
[v_0,\dots,v_k]
\coloneqq
\Bigg\{\sum_{j=0}^k t_j v_j \, :\; t_j\in[0,1]\text{ for all }j,\ \sum_{j=0}^k t_j=1\Bigg\} \, .
\]
If the vertices \(v_0,\dots,v_k\) are affinely independent this set is a \(k\)-dimensional simplex, otherwise it lies in a lower-dimensional affine subspace. Let \(e_0 \coloneqq 0\in\mathbb{R}^k\) and, for \(i=1,\dots,k\), let \(e_i\) be the \(i\)-th standard basis vector of \(\mathbb{R}^k\). The \emph{standard \(k\)-simplex} is \([e_0,e_1,\dots,e_k] \subset \R^k\).

A \emph{singular \(k\)-simplex} in \(\ST_i\) is a continuous map
\[
    \sigma: [e_0,e_1,\dots,e_k] \longrightarrow \ST_i \, .
\]
The \(k\)th \emph{singular chain group} of \(\ST_i\) with coefficients in the real numbers is the real vector space of finite, formal linear combinations of singular \(k\)-simplices:
\[
C_k(\ST_i)
\coloneqq
\Bigg\{\sum_{j=1}^N t_j \sigma_j \, : \; 0 < N \in\mathbb{N},\   \ t_j\in\mathbb{R} \text{ and } \sigma_j \text{ is a singular \(k\)-simplex for all } j \Bigg\} \, .
\]
Elements of \(C_k(\ST_i)\) are called \emph{singular \(k\)-chains}. For \(k < 0\), we set \(C_k(\ST_i) = \{0\}\).

Consider a boundary operator \( \partial\) that acts on any singular \(k\)-simplex \(\sigma\) and returns a singular \((k-1)\)-simplex via
\[
 \partial \sigma
=\sum_{j=0}^k (-1)^j\, \sigma \circ \mathfrak{d}_j \, ,
\]
where, in terms of any vector \((t_0 , \dots, t_{k-1}) \in [e_0, \dots, e_{k-1}]\)
\[
    \mathfrak{d}_j : (t_0 , \dots, t_{k-1}) \mapsto (t_0 , \dots, t_{j-1}, 0, t_j , \dots,  t_{k-1}) \in  [e_0, \dots, e_{k}] \, .
\]
This boundary operator is uniquely extended to a boundary operator on singular \(k\)-chains via 
\[
\partial : C_k(\ST_i)\longrightarrow C_{k-1}(\ST_i) \, , \quad \text{via} \quad \partial_k w = \partial_k \sum_{j=1}^N t_j \sigma_j =  \sum_{i=1}^N t_j \partial_k \sigma_j \, .
\]
for a singular \(k\)-chain \(w \) where \(t_j \in \R\) and \(\sigma_j\) is a singular \(k\)-simplex for all \(j\). We will identify the domain for the boundary map \(\partial\) whenever it is unclear from context.

The operators \(\partial\) acting on \(k\)-chains are homomorphisms satisfying \(\partial\circ\partial=0\) \cite{lee}, hence a sequence \(\dots , \, C_1(\ST_i) \mathbin{,} \, C_0(\ST_i) \mathbin{,} \, \dots \) with connecting boundary operators is a chain complex called the \emph{singular chain complex} of \(\ST_i\). The \(k\)th \emph{singular homology} group of \(\ST_i\) with real coefficients is the quotient vector space 
\[
  H_k(\ST_i)\coloneqq  \frac{\ker(\partial: C_k(\ST_i)\longrightarrow C_{k-1}(\ST_i)) }{  \operatorname{im}(\partial: C_{k+1}(\ST_i)\longrightarrow C_{k}(\ST_i))} \, .
\]
Similarly, one defines a homology group for the boundary \(H_k(\partial \ST_i)\). An inclusion \(\mathcal{J}:\partial \ST_i \hookrightarrow \ST_i\) induces an inclusion of chain complexes \( \mathcal{J}_* : C_k(\partial \ST_i )\hookrightarrow C_k(\ST_i)\) via a pushforward \(\mathcal{J}_*\). The quotient
\[
  C_k(\ST_i,\partial \ST_i )\coloneqq C_k(\ST_i)/C_k(\partial \ST_i ) \, ,
\]
is therefore a vector space for each \(k\), and the boundary map \(\partial\) acting on \(C_k(\ST_i)\) descends to a well-defined linear map on the quotient (page 115 \cite{hatcher2005algebraic})
\[
   \partial: C_k(\ST_i,\partial \ST_i )\longrightarrow C_{k-1}(\ST_i,\partial \ST_i ) \, .
\]
So \(\dots , \, C_1(\ST_i, \partial \ST_i ) , \, C_0(\ST_i, \partial \ST_i ), \, \dots \) with the boundary operations is a chain complex. The \(k\)th homology group of \(\ST_i\) relative to \(\partial \ST_i\) is 
\[
  H_k(\ST_i,\partial \ST_i)\coloneqq  \frac{\ker( \partial: C_k(\ST_i, \partial \ST_i )\longrightarrow C_{k-1}(\ST_i, \partial \ST_i)) }{  \operatorname{im}(  \partial: C_{k+1}(\ST_i, \partial \ST_i)\longrightarrow C_{k}(\ST_i, \partial \ST_i))} \, .
\]

\subsection{Proof of Lemma \ref{Lemma:finite}}\label{sub:finite}

Recall Lemma \ref{Lemma:finite}:
\begin{center}
    \textit{The homologies \(H_k (\ST_i)\) and \(H_k (\overline{\ST}_i^c)\) are finite-dimensional vector spaces for all \(k,i\).}
\end{center}

These homologies are vector spaces over the reals by construction, so we prove the finite dimensionality. We follow a discussion given by Bott and Tu \cite{bott2013differential}.

\begin{proof}

A finite open cover \(\mathcal{U}=\{U_1,\dots,U_N\}\) of \(\ST_i\) is called a \emph{good cover} if every non-empty finite intersection of the sets \(U_j\) is diffeomorphic to \(\mathbb{R}^3\). We now show that every Riemannian \(3\)-manifold admits a finite good cover.

Fix a point \(x\in \operatorname{int}(\ST_i)\). By the existence and uniqueness theorem for ordinary differential equations, for each \(v\in T_x\ST_i\) there exists \(\varepsilon>0\) and a unique geodesic
\[
    \gamma_v:(-\varepsilon,\varepsilon)\to \ST_i
\]
such that \(\gamma_v(0)=x\) and \(\gamma_v'(0)=v\). The exponential map at \(x\) is therefore defined on a neighbourhood \(U\) of the origin in \(T_x\ST_i\) such that for every \(v \in U\), the geodesic \(\gamma_v\) is also defined on \([0,1]\). We then define the exponential map by
\[
    \exp_x : U \subseteq  T_x\ST_i \longrightarrow \ST_i \, , 
    \qquad 
    \exp_x(v)\coloneqq \gamma_v(1) \, .
\]
By the homogeneity of geodesics, \(\exp_x(sv)=\gamma_v(s)\) for \(s \in [0,1]\). Differentiating at \(s=0\) gives
\[
    d\exp_x(v)
    =
    \frac{d}{ds}\Big|_{s=0}\exp_x(sv)
    =
    \gamma_v'(0)
    =
    v \, ,
\]
and hence \(d(\exp_x)\) is the identity on \(U \subseteq T_x \ST_i\). The inverse function theorem implies that there exists a ball \(\B(r) \subset  T_x\ST_i\) with radius \(r>0\) about \(0\) such that 
\[
    \exp_x : U = \B(r)\longrightarrow V_x\coloneqq \exp_x\bigl(\B(r)\bigr)
\]
is a diffeomorphism onto an open neighbourhood \(V_x\) of \(x\) in \(\ST_i\).

By choosing \(r\) sufficiently small \(V_x\) is geodesically convex \cite{lee}. That is, for any two points \(y,z\in V_x\), there exists a unique geodesic segment joining \(y\) to \(z\), and this segment lies entirely in \(V_x\).

A collection \(\{V_x : x\in \ST_i\}\) is therefore an open cover of \(\ST_i\). Since \(\ST_i\) is compact, this cover admits a finite subcover, namely, there exist points \(x_1,\dots,x_N\in \ST_i\) such that
\[
    \ST_i = V_{x_1}\cup\cdots\cup V_{x_N} \, .
\]
Moreover, each \(V_{x_j}\) is geodesically convex.

Consider the geodesically convex neighbourhoods \(V_{x_1},V_{x_2}\) that are small enough so that the geodesic between any two points within the neighbourhoods is unique, as in the above discussion. Then in \(V_{x_1} \cap V_{x_2}\), the geodesic is unique in both sets separately, and contained in both sets separately, so \(V_{x_1} \cap V_{x_2}\) is geodesically convex. 

As we stated, the exponential map is a diffeomorphism on the neighbourhoods of any interior point. It therefore provides a diffeomorphism to the tangent space \(T_x \ST_i \cong \R^3\) for all \(x \in \text{int} (\ST_i)\). Therefore any compact Riemannian 3-manifold has a finite good cover.

Recall the Poincaré Lemma: \(H^k_{\mathrm{dR}}(\mathbb{R}^3)\cong 0 \) for \(k>0\), and \(H^0_{\mathrm{dR}}(\mathbb{R}^3)\cong\mathbb{R}\). We prove by induction on the number of geodesically convex sets in a union that the de Rham cohomology of such a union is finite dimensional. Applying this to a finite cover of \(\ST_i\) will gives the desired statement.

For a single geodesically convex neighbourhood \(V_{x_1}\) the Poincaré Lemma (and the fact that \(V_{x_1}\) is diffeomorphic to an open ball in \(\mathbb{R}^3\)) implies \(H^k_{\mathrm{dR}}(V_{x_1})\) is finite dimensional.

Now suppose, for the sake of induction, that \(H^k_{\mathrm{dR}}(V)\) is finite dimensional where
\[
V=V_{x_1}\cup\cdots\cup V_{x_j} \, .
\]
The Mayer–Vietoris long exact sequence for de Rham cohomology gives a segment
\[
\cdots \longrightarrow
H^{k-1}_{\mathrm{dR}}(V\cap V_{x_{j+1}})
\xrightarrow{\;d^*\;}
H^{k}_{\mathrm{dR}}(V\cup V_{x_{j+1}})
\xrightarrow{\;\mathcal{J}_*\;}
H^{k}_{\mathrm{dR}}(V)\oplus H^{k}_{\mathrm{dR}}(V_{x_{j+1}})
\longrightarrow\cdots \, .
\]
Exactness at \(H^{k}_{\mathrm{dR}}(V\cup V_{x_{j+1}})\) yields
\[
\ker(\mathcal{J}_*)=\operatorname{Im}(d^*) \, ,
\]
hence, a short exact sequence of vector spaces is
\[
0 \longrightarrow \operatorname{Im}(d^*) \longrightarrow
H^{k}_{\mathrm{dR}}(V\cup V_{x_{j+1}}) \longrightarrow
\operatorname{Im}(\mathcal{J}_*) \longrightarrow 0 \, .
\]
Consequently
\[
\dim H^{k}_{\mathrm{dR}}(V\cup V_{x_{j+1}})
= \dim\operatorname{Im}(d^*) + \dim\operatorname{Im}(\mathcal{J}_*) \, .
\]
But \(\operatorname{Im}(d^*)\) is a quotient (hence a homomorphic image) of \(H^{k-1}_{\mathrm{dR}}(V\cap V_{x_{j+1}})\), so it is finite dimensional whenever \(H^{k-1}_{\mathrm{dR}}(V\cap V_{x_{j+1}})\) is finite dimensional. Also \(\operatorname{Im}(\mathcal{J}_*)\) is a subspace of the finite-dimensional space \(H^{k}_{\mathrm{dR}}(V)\oplus H^{k}_{\mathrm{dR}}(V_{x_{j+1}})\), so it is finite dimensional. Therefore, if
\[
H^{k-1}_{\mathrm{dR}}(V\cap V_{x_{j+1}}) \, ,\qquad
H^{k}_{\mathrm{dR}}(V) \, ,\qquad
H^{k}_{\mathrm{dR}}(V_{x_{j+1}}) \, , 
\]
are finite dimensional, then \(H^{k}_{\mathrm{dR}}(V\cup V_{x_{j+1}})\) is finite dimensional. We still need to verify that \(H^{k-1}_{\mathrm{dR}}(V\cap V_{x_{j+1}})\) is finite. Note that
\[
V\cap V_{x_{j+1}}
=\bigcup_{i=1}^j \bigl(V_{x_i}\cap V_{x_{j+1}}\bigr) \, .
\]
Each intersection \(V_{x_i}\cap V_{x_{j+1}}\) is geodesically convex and any non-empty finite intersection of these is again a good cover of \(V\cap V_{x_{j+1}}\). Hence, \(H^{k-1}_{\mathrm{dR}}(V\cap V_{x_{j+1}})\) is finite dimensional. Therefore, \(\operatorname{Im}(d^*)\) is finite dimensional and the argument above shows \(H^{k}_{\mathrm{dR}}(V\cup V_{x_{j+1}})\) is finite dimensional.

By induction, we conclude that a finite union
\[
\ST_i = V_{x_1}\cup\cdots\cup V_{x_N}
\]
produces a finite dimensional vector space \(H^k_{\mathrm{dR}}(\ST_i)\).

Using the de Rham theorem we see that \(H^k_{\mathrm{dR}} (\ST)\) has the same dimension as \( H_{k} (\ST) \), so \(H_{k} (\ST)\) is finite dimensional. The same argument applies to \(\overline{\ST}_i^c\).
\end{proof}

\subsection{Homology of Complements}\label{sub:complement}

Consider a result by Cantarella \cite{CANTARELLA20101127}: Let \(\mathbb{S}^k\) and \(\mathbb{D}^k\) be the usual \(k\)-dimensional sphere and disk. Thinking of \(\R^3\) topologically as \(\mathbb{S}^3 \setminus \{x\}\) for \(x \in \overline{\ST}_i^c\) we have
\begin{align*}
    H_1( \R^3 \setminus \ST_i ) \cong H_1( \mathbb{S}^3 \setminus \ST_i ) \, . 
\end{align*}
\begin{proof}
    Let \(\mathbb{D}^3\) be a disk about \(x\) and not intersecting \(\ST_i\). The union and intersection of \(\mathbb{D}^3\) and \(\R^3 \setminus \ST_i \cong (\mathbb{S}^3 \setminus\{x\}) \setminus \ST_i\) is
    \begin{align*}
        \mathbb{D}^3 \cup ( (\mathbb{S}^3 \setminus \{x\}) \setminus \ST_i ) = \mathbb{S}^3 \setminus \ST_i \, ,
        \qquad 
        \text{and}
        \qquad 
        \mathbb{D}^3 \cap ( (\mathbb{S}^3 \setminus \{x\}) \setminus \ST_i ) = \mathbb{D}^3  \setminus \{x\}  \cong \mathbb{S}^2 \, .
    \end{align*}
    This provides an exact short sequence:
    \begin{align*}
        H_1(\mathbb{S}^2) \to H_1(\mathbb{D}^3) \oplus H_1( \R^3 \setminus \ST_i )  \to H_1( \mathbb{S}^3 \setminus \ST_i ) \to H_0(\mathbb{S}^2) \, .
    \end{align*}
    As \(\mathbb{D}^3\) is contractible and the first and last homology groups in the exact sequence are zero implying the result. 
\end{proof}

\section{Alexander Basis Exposition}\label{section:Alexander Basis}

This section provides some background on the construction of an Alexander basis. We present a general formulation in subsection \ref{append:general_formulation}, and provide  an explicit example for a hollow torus in subsection \ref{append:example}. The material here is drawn from work by Cantarella and Parsley (appendix B \cite{CANTARELLA20101127}).

Consider a basis \(\{[s_{i,1}], \dots, [s_{i,{\ell_i}}]\}\) of \(H_1 (\ST_i) \) where \(\ST_i\) is a compact, oriented, connected domain in \(\R^3\) with a smooth boundary \(\partial \ST_i\). The genus of \(\partial \ST_i\) is \(0 <{\ell_i} < \infty\). We keep the notation consistent with the remainder of the document, this makes the partition subscript in \(\ST_i\) and other variables superfluous.

The goal in subsection \ref{append:general_formulation} is to prove Theorem \ref{thr:Alexander}. Additionally, we consider any orientation preserving homeomorphism \(f : \ST_i \to \ST_i'\). We show that pushing forward the elements in an Alexander basis for \(H_1(\partial \ST_i)\) gives an Alexander basis for \(H_1(\partial \ST_i')\). This is used throughout our paper in order to have a consistent definition of flux, relative helicity, and Amperian gauge on domains \(f(\ST_i)\) with \(f \in \mathcal{I}( \ST_i )\).

First we provide and prove two well known results.
\begin{lemma}[Proposition C.4 \cite{CANTARELLA20101127}]\label{Lemma:Potential}
    Any closed, normal \(2\)-form on \(\ST_i\) is exact.
\end{lemma}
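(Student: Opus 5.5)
To prove Lemma~\ref{Lemma:Potential}, I plan to use the de Rham isomorphism together with the vanishing of $H^2_{\mathrm{dR}}$ for a domain in $\mathbb{S}^3$ whose boundary is connected to the complement. More concretely, the idea is to show that every closed $2$-form $\beta$ on $\ST_i$ with $\mathcal{J}^*\beta=0$ pairs to zero with every $2$-cycle in $H_2(\ST_i)$. By de Rham's theorem, a closed form that integrates to zero over every cycle is exact, so this suffices.

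The first step is to describe $H_2(\ST_i)$. By Lefschetz duality, $H_2(\ST_i)\cong H^1_{\mathrm{dR}}(\ST_i,\partial\ST_i)$. The long exact sequence of the pair gives
\[
H^0(\ST_i)\to H^0(\partial\ST_i)\to H^1(\ST_i,\partial\ST_i)\to H^1(\ST_i)\to H^1(\partial\ST_i).
\]
The last map is injective, because $\mathcal{J}^*+(\mathcal{J}^c)^*$ was shown to be an isomorphism above. Therefore $H^1(\ST_i,\partial\ST_i)$ is the cokernel of $H^0(\ST_i)\to H^0(\partial\ST_i)$. Dually, $H_2(\ST_i)$ is spanned by the fundamental classes $[\Sigma_k]$ of the boundary components $\Sigma_k$ of $\partial\ST_i$, with the single relation that their sum is null-homologous.

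One could also bypass the exact sequence with a more hands-on geometric argument. By Alexander duality in $\mathbb{S}^3$, any $2$-cycle in $\ST_i$ is homologous to a combination of boundary components. Concretely, a $2$-cycle $z$ separates $\mathbb{S}^3$, and the cycle can be pushed to the boundary spheres or surfaces. I would present the exact-sequence version as the main argument, since the injectivity of $H^1(\ST_i)\to H^1(\partial\ST_i)$ is already available from the Mayer--Vietoris computation.

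With this description in hand, the pairing is immediate:
\[
\int_{\Sigma_k}\beta=\int_{\Sigma_k}\mathcal{J}^*\beta=0 .
\]
Hence $[\beta]$ pairs trivially with all of $H_2(\ST_i)$, and by the de Rham theorem (perfect pairing, Lemma~\ref{Lemma:finite}) we conclude $[\beta]=0$, that is, $\beta=da$.

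The main obstacle is the identification of $H_2(\ST_i)$ with the span of the boundary components. This requires the duality bookkeeping to be done carefully for a manifold with boundary, including checking that the relevant exact sequences hold with real coefficients for the compact domain $\ST_i$.
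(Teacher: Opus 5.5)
Your argument is correct, and it takes a different route from the paper's proof.

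\textbf{The paper's route.} The paper argues analytically. It applies the Hodge--Morrey--Friedrichs decomposition to $\star b$ and uses $db=0$ to kill the co-exact part. It must then show that the remaining harmonic piece $\eta_4^*\in\mathcal{H}^2_N(\ST_i)$ vanishes. It does so by pairing $[\eta_4^*]$ with a dual class $[\eta_4]\in H_2(\ST_i)$ and writing $\int_{\eta_4}\eta_4^*=\int_{\mathcal{J}_*(\eta_4)}\mathcal{J}^*(d\eta_3+\eta_4^*)=0$. That first equality tacitly assumes every class in $H_2(\ST_i)$ is carried by $\partial\ST_i$, i.e.\ that $H_2(\partial\ST_i)\to H_2(\ST_i)$ is onto.

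\textbf{Your route.} This surjectivity is exactly the fact you single out and actually prove. You use Lefschetz duality, the exact sequence of the pair, and injectivity of $H^1_{\mathrm{dR}}(\ST_i)\to H^1_{\mathrm{dR}}(\partial\ST_i)$ from the Mayer--Vietoris sequence in $\mathbb{S}^3$. That sequence does not depend on this lemma, so there is no circularity with Theorem~\ref{thr:Alexander}. Once the fact is in hand, the decomposition is unnecessary and de Rham's theorem finishes the proof, as you do.

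\textbf{What each buys.} The paper's route gives a primitive phrased inside the HMF framework used elsewhere in the paper. Yours makes visible where the embedding in $\R^3$ is used; the statement fails for $\mathbb{S}^2\times\mathbb{S}^1$ minus a ball.

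\textbf{One step to tighten: ``dually''.} Any of the following closes it:
\begin{enumerate}
\item Invoke the commutativity, up to sign, of the duality ladder between the cohomology and homology sequences of $(\ST_i,\partial\ST_i)$.
\item Count dimensions. The kernel of $H_2(\partial\ST_i)\to H_2(\ST_i)$ is the image of $H_3(\ST_i,\partial\ST_i)\cong\R$, so the image has dimension at least $c-1=\dim H_2(\ST_i)$, where $c$ is the number of boundary components.
\item Stay in de Rham language. By the Lefschetz pairing $H^1_{\mathrm{dR}}(\ST_i,\partial\ST_i)\times H^2_{\mathrm{dR}}(\ST_i)\to\R$, it suffices that $\int_{\ST_i}\alpha\wedge\beta=0$ for every relative class $[\alpha]$. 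Your exact-sequence computation shows each such class is $[d\tilde f]$, with $\tilde f$ extending a locally constant $f$ on $\partial\ST_i$. Hence $\int_{\ST_i}d\tilde f\wedge\beta=\int_{\partial\ST_i}f\,\mathcal{J}^*\beta=0$.
\end{enumerate}
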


\begin{proof}
    Let \(b_i \in \Omega^2_D( \ST_i )\) be a closed, normal differential \(2\)-form. Applying a Hodge-Morrey-Friedrichs (HMF) decomposition to the 1-form \(\star b\), then applying a \(\star\)
    \begin{align*}
        b = d \eta_1 + \delta \eta_2 + d \eta_3 + \eta_4^* \, , 
    \end{align*}
    where \(\eta_1 \in \Omega^1 _D (\ST_i)\), \(\eta_2 \in  \Omega^3_N (\ST_i)\), \( d \eta_3 \in \mathcal{H}^2 (\ST_i) \), where \(\eta_3 \in \Omega^1  (\ST_i)\), and \(\eta_4^* \in \mathcal{H}_N^2 (\ST_i) \). Following a similar approach as in section \ref{sub:Hodge}, we apply the closed, and normal conditions:
    \begin{align*}
        0 = db = d   \delta \eta_2  \, , \qquad 0 = tb =  t (  \delta \eta_2 + d \eta_3 + \eta_4^* ) \, .
    \end{align*}
    Taking note of the first equality, and using the divergence theorem:
    \begin{align*}
        \|  \delta \eta_2 \|_{L^2}^2 = \int_{\ST_i}  \delta \eta_2 \wedge  d \star \eta_2 = -\int_{\ST_i} d \delta \eta_2 \; \star \eta_2 + \int_{\partial \ST_i} \mathcal{J}^* (\delta \eta_2 \; \star \eta_2) = 0 \, . 
    \end{align*}
    And by continuity we have that \(\delta \eta_2 = 0\).

    The normal condition is now \( 0 =  t ( d \eta_3 + \eta_4^* ) \). Using de Rham's theorem, \(\eta_4^*\) is a unique harmonic representative for \([\eta_4^*] \in H^2_{\mathrm{dR}} (\ST_i) \). This has a unique dual element \([\eta_4] \in H_2 (\ST_i)\). Namely, we take an algebraic dual relative to a given basis (of dimension \(\ell_i\) by the Hodge isomorphism). Computing
    \begin{align*}
        \langle [ \eta_4^* ], [ \eta_4 ] \rangle = \int_{ \eta_4 } \eta_4^* 
        = \int_{ \mathcal{J}_* (\eta_4 ) } \mathcal{J}^* ( \eta_4^* )
        = \int_{\mathcal{J}_* (\eta_4 )} \mathcal{J}^* (d \eta_3 + \eta_4^* )   = 0 \, . 
    \end{align*}
    Here \(\mathcal{J}\) is the usual boundary inclusion map. The only way to achieve this is when \([\eta_4^*]=0\). Noting that \(0 = [\star \eta_4^*] \in H^1_{\mathrm{dR}} (\ST_i, \partial \ST_i) \) has a unique harmonic representative, we have that \(\eta_4^* = 0\). Hence, \(b = d \eta_3\) is exact.
\end{proof}

\subsection{General Formulation}\label{append:general_formulation}

Given any choice of bases for \(H_1(\ST_i)\) and \(H_1 (\overline{\ST}^c_i )\) one can, of course, define a basis for \(H_k (\partial \ST_i)\) (Lemma \ref{Lemma:form_basis}). Theorem \ref{thr:Alexander} asserts that we already have a basis for \(H_1(\ST_i)\) given by \(\{ [s_{i,1}], \dots, [s_{i,{\ell_i}}] \}\), all that remains is to construct a suitable basis for \(H_1(\overline{\ST_i}^c)\) that satisfies the requirements of an Alexander basis. We proceed in two steps: First, by constructing \(\{[t_{i,1}], \dots, [t_{i,\ell_i}]\}\) (a basis for \(H_1 (\overline{\ST}^c_i )\)). Secondly, by verifying this basis satisfies the conditions in Theorem \ref{thr:Alexander}.

\bigskip 

\textit{Step 1. Construct a Basis for \(H_1(\partial \ST_i)\) via \(\{[s_{i,1}], \dots, [s_{i,{\ell_i}}] \}\).}

As a basis for \(H_1(\ST_i)\) is given. We construct basis elements \(T_{i,j}, \sigma_{i,j}\) and the duals \([s_{i,j}^*]\), \([T_{i,j}^*]\), and \([\sigma_{i,j}^*]\) as in the main body. We will see an explicit example to compute these in the next subsection. 

We want to see how \(H_2(\ST_i , \partial \ST_i)\) relates to \(H_1(\overline{\ST_i}^c)\). Let \((\ST_i)_\epsilon\) be all points in \(\R^3\) within \(\epsilon >0\) of \(\ST_i\) and \(\overline{\ST_i}^c_\epsilon\) is the closure of the complement of \((\ST_i)_\epsilon\). We take \(\epsilon\) small enough that these spaces are compact and oriented in \(\R^3\) with smooth boundary. Note that \(\overline{\ST_i}^c_\epsilon \subset \overline{\ST_i}^c \subset \R^3\), so by the excision theorem, the inclusion \(\mathcal{J}^0 : ( \R^3 - \overline{\ST_i}^c_\epsilon, \overline{\ST_i}^c - \overline{\ST_i}^c_\epsilon) \hookrightarrow (\R^3, \overline{\ST_i}^c)\) induces an isomorphism \(H_k (\R^3 - \overline{\ST_i}^c_\epsilon, \overline{\ST_i}^c - \overline{\ST_i}^c_\epsilon) \cong H_k (\R^3 , \overline{\ST_i}^c) \) (see page 119, theorem 2.20 \cite{hatcher2005algebraic}). Taking \(\epsilon \to 0\) (deformation retraction) we have that \(H_k (\ST_i, \partial \ST_i) \cong H_k (\R^3 , \overline{\ST_i}^c) \).

Taking a relative Mayer-Vietoris sequence for \(k>0\):
\begin{align*}
      0 
    \to  H_{k+1} (\R^3, \overline{\ST_i}^c )
    \xrightarrow{\partial}   H_{k} (\overline{\ST_i}^c ) 
    \to  0 \, .
\end{align*}
Giving isomorphisms \(H_{k+1} (\ST_i, \partial \ST_i) \cong H_{k+1} (\R^3 , \overline{\ST_i}^c) \cong H_{k} (\overline{\ST_i}^c ) \) given by \( \mathcal{J}^0_*\) and \(\partial\). Note that inverting the domains produces the result \(H_{k+1} (\overline{\ST_i}^c, \partial \overline{\ST_i}^c) \cong H_{k} (\ST_i ) \) which we will utilise later. We define (through these isomorphisms) a basis for \(H_{1} (\overline{\ST_i}^c )\) with elements given by \([t_i] \coloneqq \partial \mathcal{J}^0_* [T_{i, j}]  \).

From the \([t_{i,j}]\) we define \([t_{i,j}^*], [S_{i,j}], [S_{i,j}^*], [\tau_{i,j}], [\tau_{i,j}^*]\) as in the main body. We also see that the basis for \(H_{1} (\overline{\ST_i}^c )\) has dimension \(\ell_i\).

\bigskip 

\textit{Step 2. Verifying that the basis constructed in step 1 is an Alexander basis for \(H_1(\partial \ST_i)\).}

Theorem \ref{thr:Alexander} claims that for all \( j, k \in \{ 1, \dots, {\ell_i}\} \):
\begin{align*}
    \begin{gathered}
        \int_{\partial \ST_i} \sigma_{i,j}^* \wedge \sigma_{i,k}^* = 
         0 \,,    \qquad 
         \int_{\partial \ST_i} \tau_{i,j}^* \wedge \tau_{i,k}^*
        = 0 \, ,
        \qquad 
         \int_{\partial \ST_i} \tau_{i,j}^* \wedge \sigma_{i,k}^* = \delta_{jk}  \,.
    \end{gathered}
\end{align*}  
The first 2 integrals we can compute directly, namely, via the divergence theorem we have that 
\begin{align*}
    \int_{\partial \ST_i} \sigma_{i,j}^* \wedge \sigma_{i,k}^* = \int_{ \ST_i} d( s_{i,j}^* \wedge s_{i,k}^* ) = \int_{ \ST_i} 0  = 0 \, .
\end{align*}
Similarly,
\begin{align*}
    \int_{\partial \ST_i} \tau_{i,j}^* \wedge \tau_{i,k}^* = - \int_{ \overline{\ST}^c_i} d( t_{i,j}^* \wedge t_{i,k}^* ) = - \int_{ \overline{\ST}^c_i } 0  = 0 \, .
\end{align*}
For the last integral, we note that \([T_{i,j}^*] \in H^2_{\mathrm{dR}} (\ST_i, \partial \ST_i) \). By the de Rham theorem, we have a unique harmonic representative \(T_{i,j}^* \in \mathcal{H}^2_D(\ST_i)\). Importantly, \(T_{i,j}^*\) is a closed, normal 2-form on \(\ST_i\), so by Lemma \ref{Lemma:Potential}, \(T_{i,j}^* = d \Gamma_{i,j} \) for some \(\Gamma_{i,j} \in \Omega^1(\ST)\). We know that the collection of \([\sigma_{i,j}^*], [\tau_{i,j}^*]\) form a basis for \(H^1_{\mathrm{dR}}(\ST_i)\) (Lemma \ref{Lemma:form_basis}), so given \([\mathcal{J}^* \Gamma_{i,j} ] \in H^1_{\mathrm{dR}} (\partial \ST_i)\), 
\begin{align*}
    [\mathcal{J}^* \Gamma_{i,j} ] = \sum_{r=1}^{\ell_i} \Big( k_{j,r} [\tau_{i,r}^*] +   c_{j,r} [\sigma_{i,r}^*] \Big) \, .
\end{align*}
By definition of \([T_{i,j}^*]\),
\begin{align*}
    \delta_{jk}  = \int_{T_{i,j}} T_{i,k}^*
    =   \int_{T_{i,j}} d \Gamma_{i,k} 
    =    \int_{\tau_{i,j}} \mathcal{J}^*  \Gamma_{i,k} 
    =
    \sum_{r=1}^{\ell_i} \left( k_{k,r} \int_{\tau_{i,j}}  \tau_{i,r}^* +  c_{k,r}  \int_{\tau_{i,j}} \sigma_{i,r}^* \right) = k_{k,j}
    \, .
\end{align*}
Recall that section \ref{section:Problem Construction} shows \(\tau_{i,j}^*\), \(\sigma_{i,j}^*\) are the duals of \(\tau_{i,j}\), \(\sigma_{i,j}\) respectively. Substituting this back into the non-degenerate Lefschetz pairing
\begin{align*}
    \delta_{jk} &= \int_{ \ST_i} \sigma_{i,j}^* \wedge d \Gamma_{i,k} 
    = - \int_{\partial \ST_i} \sigma_{i,j}^* \wedge \mathcal{J}^*\Gamma_{i,k}
    = - \int_{\partial \ST_i} \sigma_{i,j}^* \wedge \Big( \tau_{i,k}^* + \sum_{r=1}^{\ell_i} c_{k,r} \sigma_{i,r}^* \Big)
    =  \int_{\partial \ST_i}  \tau_{i,k}^*  \wedge \sigma_{i,j}^*
    \, .
\end{align*}
This completes the proof for the integral relations. But it remains to be shown that \(\partial [T_{i,j}] = [\tau_{i,j}]\) and \(\partial [S_{i,j}] = [\sigma_{i,j}]\). 

Consider inclusion maps \(\mathcal{J}^0 : (\ST_i, \partial \ST_i) \xhookrightarrow{} (\R^3 , \overline{\ST}_i^c)\), \(\mathcal{J} :  \partial \ST_i \xhookrightarrow{}  \ST_i\) and \( \mathcal{J}^c : \partial \ST_i \xhookrightarrow{} \overline{ \ST }_i^c \). Recalling that \((\mathcal{J}_*, \mathcal{J}^c_*) : H_1(\partial \ST_i) \to H_1(\ST_i) \oplus H_1(\overline{\ST}^c_i)\) is an isomorphism (Lemma \ref{Lemma:isomorphism_inclusion}), for any \([y] \in H_2(\ST_i, \partial \ST_i)\):
\begin{align*}
    (\mathcal{J}_*, \mathcal{J}^c_*) \partial [y] = ( \mathcal{J}_*  \partial [y], \mathcal{J}^c_*  \partial [y]) 
    =  ( \mathcal{J}_*  \partial [y],   \partial \mathcal{J}^0_* [y])  \, . 
\end{align*}
Here the boundary acts as \(\partial : H_2(\ST_i, \partial \ST_i) \to H_{1}(\overline{\ST}_i^c)\) except for the last boundary operator, acting on spaces \(\partial : H_2(\R^3 , \overline{\ST}_i^c) \to H_{1}(\overline{\ST}_i^c)\). Recall a portion of the long exact sequence of the pair \((\ST_i,\partial\ST_i)\):
\[
\cdots \longrightarrow H_2(\ST_i) \longrightarrow H_2(\ST_i,\partial\ST_i)
\xrightarrow{\; \partial \;} H_1(\partial\ST_i)
\xrightarrow{\;\mathcal{J}_*\;} H_1(\ST_i)\longrightarrow\cdots .
\]
Exactness implies \(\operatorname{im}(\partial ) = \ker( \mathcal{J}_* )\). Hence \(\mathcal{J}_*(\partial [y])=0\),
showing that \((\mathcal{J}_*, \mathcal{J}^c_*)  \partial [y]  =  ( 0,   \partial \mathcal{J}^0_* [y])\). As \((\mathcal{J}_*, \mathcal{J}^c_*) \) is an isomorphism, it has an inverse, so, setting \([y] = [T_{i,j}]\) for any \(i,j\),
\begin{align*}
    \partial [T_{i,j}]  = (\mathcal{J}_*, \mathcal{J}^c_*)^{-1} ( 0, [t_{i,j}]) = [\tau_{i,j}] \, ,
\end{align*}
as required.

Now for the last claim, that \([\sigma_{i,j}] = \partial [S_{i,j}]\). We just showed \(\mathcal{J}_*(\partial [y])=0\) for all \([y] \in H_2( \ST_i, \partial \ST_i) \). By exactly the same argument for all \([z] \in H_2 ( \overline{\ST}_i^c ,  \partial \overline{\ST}_i^c )\) we have \(\mathcal{J}^c_*(\partial [z])=0\). Therefore 
\[\partial [S_{i,j}] = \sum_{k=1}^{\ell_i} c_{i,j,k} [\sigma_{i,k}] \, . \]
Select some \(\eta \in \Omega^1(\overline{\ST}^c_i)\) such that \(\eta |_{\partial \ST_i} = \sigma_{i,k}^*\). On the boundary we see that \(d\eta |_{\partial \ST_i} = d \sigma_{i,k}^* = 0\), so \([d\eta] \in H_2(\overline{\ST}^c_i, \partial \overline{\ST}^c_i ) \). Therefore we have, using Lefschetz duality and the divergence theorem 
\begin{align*}
    \int_{\partial S_{i,j}} \sigma_{i,k}^* =  \int_{S_{i,j}}  d\eta = \int_{\overline{\ST}^c_i} t_{i,j}^* \wedge d\eta  = - \int_{\overline{\ST}^c_i} d (t_{i,j}^* \wedge \eta)  
    =    \int_{\partial \ST_i} \tau_{i,j}^* \wedge \sigma_{i,k}^* = \delta_{jk} \, . 
\end{align*}
By selecting an outward orientation for \(\partial \ST_i\) the divergence theorem provides a negative sign when moving to the boundary of \(\overline{\ST}^c_i\). This gives the relation when placing in basis elements
\begin{align*}
    \delta_{jk} = \int_{\partial S_{i,j} }  \sigma_{i,k}^*
    = \int_{ \sum_{r=1}^{\ell_i}  c_{i,j,r} \sigma_{i,r} }  \sigma_{i,k}^*
    = \sum_{r=1}^{\ell_i}  c_{i,j,r}  \int_{ \sigma_{i,r} }  \sigma_{i,k}^*
    =   c_{i,j,k} \, .
\end{align*}
Hence \( \partial [S_{i,j}] =  [\sigma_{i,j}]  \). This completes the proof of Theorem \ref{thr:Alexander}. \qed

Let \(f:\widehat{\ST}_i \to \ST_i\) be an orientation-preserving homeomorphism. Since \(f\) is a homeomorphism the induced maps
\[
(f^{-1})^*:H_k(\widehat{\ST}_i) \to H_k(\ST_i), \qquad
f^*:H^k_{\mathrm{dR}}(\ST_i) \to H^k_{\mathrm{dR}}(\widehat{\ST}_i)
\]
are isomorphisms (with inverses \((f^{-1})_*\) and \((f^{-1})^*\) respectively). Let \(f_\partial:=f\big|_{\partial\widehat{\ST}_i}:\partial\widehat{\ST}_i\to\partial\ST_i\) be the corresponding map on the boundaries. 

Let \(\mathcal{J}:\partial \ST_i \hookrightarrow \ST_i\) denote the boundary inclusion and write \(\widehat{\mathcal{J}}:\partial\widehat{\ST}_i\hookrightarrow\widehat{\ST}_i\) for the analogous inclusion on the hatted side. Recall that an Alexander basis
\(\{[\sigma_{i,1}],\dots,[\sigma_{i,\ell_i}],[\tau_{i,1}],\dots,[\tau_{i,\ell_i}]\}\subset H_1(\partial\ST_i)\) has the property that the inclusion-induced map \(\mathcal{J}_*:H_1(\partial\ST_i)\to H_1(\ST_i)\) sends \(\,[\sigma_{i,j}]\mapsto[s_{i,j}]\) and \(\mathcal{J}^c :\partial \ST_i \hookrightarrow \overline{\ST}^c_i\) induces a map \(\,[\tau_{i,j}]\mapsto[t_{i,j}]\).

Assume \(\{[\widehat{s}_{i,1}],\dots,[\widehat{s}_{i,\ell_i}]\}\) is a basis of \(H_1(\widehat{\ST}_i)\), where \((f^{-1})^*[\widehat{s}_{i,j}] = s_{i,j}\) is the corresponding basis element of \(H_1(\ST_i)\). Defining \([\widehat \sigma_{i, j}]\) as \([\widehat s_{i, j}]\) pushed forward by inclusion to the boundary we have the relation \( [\widehat \sigma_{i, j}] =  f_\partial^* [\sigma_{i, j}]\) on the boundary. Therefore, one may construct an Alexander basis on \(\partial\widehat{\ST}_i\) using the \(\widehat{s}_{i,j}\). Denote the resulting basis \(\{[\widehat{\sigma}_{i,1}],\dots,[\widehat{\sigma}_{i,\ell_i}], [\widehat{\tau}_{i,1}],\dots,[\widehat{\tau}_{i,\ell_i}]\}\). Similarly, define \(\widehat{S}_{i,j}, \widehat{T}_{i,j} \) and the corresponding duals in the same way as their non-hatted counterparts. So
\begin{align*}
    \int_{T_{i, j}} (f^{-1})^*( \widehat T_{i, k}^* )  = \int_{\ST} s_{i, j}^* \wedge (f^{-1})^*( \widehat T_{i, k}^* ) = \int_{\ST} (f^{-1})^*( \widehat s_{i, j}^* \wedge  \widehat T_{i, k}^* )
    =
    \int_{\widehat \ST_i}  \widehat s_{i, j}^* \wedge \widehat T_{i, k}^* = \delta_{jk}
    =
    \int_{T_{i, j}} T_{i, k}^* 
    \, .
\end{align*}
This implies that \((f^{-1})^*([ \widehat T_{i, j}^*] ) = [T_{i, j}^*]\), so \([ \widehat T_{i, j}^*]  = f^* [T_{i, j}^*]\), and on the boundary this induces \( [ \widehat t_i^*]  =  f_\partial^* [t_i^*]\). Hence, in the current context, the Alexander basis is also pushed forward by \(f\). Namely, \(\{  f_\partial^* [\sigma_{i, 1}] , \dots, f_\partial^* [\sigma_{i, \ell_i}]  ,    f_\partial^* [\tau_{i, 1}] , \dots, f_\partial^* [\tau_{i, \ell_i}]\}\) is an Alexander basis for \(H_1(\partial \widehat \ST) \). This is a helpful when considering an MRxMHD problem where interfaces are pushed forward by \(f \in \I\) because, as long as pushing forward gives a basis for \(H_1(\widehat{\ST}_i)\), the corresponding Alexander basis and related definitions are also pushed forward.

We also see that, only requiring a map \(f^{-1}\) to define a new basis allows us to introduce knots in the domain, we see this in the following example. As \([S_{i,j}] \in H_2 ( \overline{\ST}_i^c , \partial \ST_i )\), we do not know the effect of \(f^{-1}\), but we are aware that some \([S_{i,j}] \) exists on the new domain.

\begin{figure}
    \centering
    \begin{overpic}[width=0.5\linewidth]{"Chapter_Alexander_Basis_Exposition/torus_clip2.png"}
    \end{overpic}
    \caption{A hollow torus embedded in \(\R^3\) with a trefoil knot.}
    \label{fig:trefoil} 
\end{figure}

\subsection{Hollow Torus Example} \label{append:example}

We apply our construction of an Alexander basis to a hollow torus \(\HT\subset\R^3\). Rather than contract \(\HT\) to a torus, we work with an explicit orientation preserving homeomorphism
\[
\iota:[1,2]\times \mathbb{T}^2 = U \hookrightarrow \HT\subset\R^3 \, ,
\]
where \(\mathbb{T} \coloneqq \R / \mathbb{Z}\) and the coordinate system \((r,\vartheta_P,\vartheta_T)\) with \(r\in[1,2]\) and \(\vartheta_P,\vartheta_T\in[0,1)\). The subscript \(i\) is retained for consistency with the rest of the paper. On the parameter space define singular \(1\)-cycles
\[
s_{i,1}(\omega )=(1,\omega ,0),\qquad s_{i,2}(\omega )=(2,0, \omega ),\qquad \omega \in\Delta_1=[0,1] \, .
\]
Clearly \(\partial s_{i,j}=0\), so \([s_{i,1}]\) and \([s_{i,2}]\) represent classes in \(H_1( U  )\). Since this homology is two-dimensional (the boundary has total genus \(2\)), it suffices to verify linear independence. This follows either from the K\"unneth theorem or from the geometric observation that no singular \(2\)-chain relates \(s_{i,1}\) to \(s_{i,2}\), hence \(\{[s_{i,1}],[s_{i,2}]\}\) is a basis of \(H_1(  U  )\). 

Note that \(s_{i,1}\) and \(s_{i,2}\) are trivially included in the boundary, giving generators for \([\sigma_{i,1}]\) and \([\sigma_{i,2}]\). 

From the previous subsection, so long as \( \iota_* \) pushes elements in \(\{[s_{i,1}],[s_{i,2}]\}\) forward to a basis for \(H_1( \HT)\), the following construction of an Alexander basis on the parameter space is applicable to \(H_1(\partial \HT)\). In the case of an unknotted torus the choice of \(s_{i,j}\) align with those chosen in the construction of a relative helicity given by Pfefferlé, Noakes, and Perrella \cite{Gauge_freedom}. As \(\iota\) is an orientation-preserving homeomorphism, its image \(\HT\) may be a knotted or unknotted torus, a knotted example is Figure \ref{fig:trefoil}. In some knotted cases constructing \(s_{i,j}\) via boundaries of Seifert surfaces may provide geometric intuition (see \cite{seifert1935geschlecht}).

Our computation of \([T_{i, j}]\) relies on cohomology elements \([s_{i,j}^*]\). We may compute a representative closed 1-form \(s_{i,j}^* \in \Omega^1 (\ST_i) \) through the relation
\begin{align*}
    \delta_{jk}  = \int_{s_{i,j}} s_{i,k}^* \, . 
\end{align*}
Selecting \(s_{i,1}^* = d \vartheta_P \) and \(s_{i,2}^* = d \vartheta_T \) we verify that 
\begin{align*}
    \int_{s_{i,1}} s_{i,1}^* &= \int_{\omega \mapsto  (1,  \omega , 0)} d \vartheta_P  = \int_{0}^1 d  \omega  = 1 \, , \qquad
    \int_{s_{i,1}} s_{i,2}^* = \int_{\omega \mapsto  (1,  \omega , 0)} d \vartheta_T  =  0 \, , \\
    \int_{s_{i,2}} s_{i,2}^* &= \int_{\omega \mapsto  (2, 0,  \omega )} d \vartheta_T  = 1 \, , \qquad \hspace{4em}
    \int_{s_{i,2}} s_{i,1}^* = \int_{\omega \mapsto  (2, 0, \omega )} d \vartheta_P  =0 \, .
\end{align*}
There is some freedom, up to a choice of volume in the parameter space, which we set by selecting a volume form \(\varpi = dr \wedge d \vartheta_P \wedge d \vartheta_T\). Then the corresponding \(T_{i, j}^*\) are defined by
\begin{align*}
\begin{gathered}
        \delta_{jk} =  \int_{ U } s_{i,j}^* \wedge T_{i, k}^* \,  .
\end{gathered}
\end{align*}
As \(T_{i, j}^*\) is a closed, normal 2-form we select 
\begin{align*}
    T_{i, j}^* = f_1(r, \vartheta_P) \,  d r \wedge d \vartheta_P + f_2(r, \vartheta_T) \, d r \wedge d \vartheta_T + f_3(\vartheta_P, \vartheta_T) \, d \vartheta_P \wedge d \vartheta_T \, . 
\end{align*}
Let's first consider \({T_{i,1}}^*\):
\begin{align*}
    0 &= \int_U (d \vartheta_T)  \wedge f_1(r, \vartheta_P) d r \wedge d \vartheta_P = \int_U f_1(r, \vartheta_P)  d r \wedge d \vartheta_P \wedge d \vartheta_T  \, , \\
    1 &= \int_U (d \vartheta_P)  \wedge f_2(r, \vartheta_T) d r \wedge d \vartheta_T = - \int_U f_2(r, \vartheta_T) d r \wedge d \vartheta_P \wedge d \vartheta_T   \, .
\end{align*}
Selecting \(f_2(r, \vartheta_T) = - 1\), \(f_1(r, \vartheta_P) = 0\) and \(f_3(\vartheta_P, \vartheta_T) = 0\) gives us a closed 2-form \(T_{i,1}^* = -  d r \wedge d \vartheta_T \) that satisfies the above conditions. And pulling back to the boundary, \(\mathcal{J}^* T_{i, j}^* = \mathcal{J}^* ( - d r \wedge d \vartheta_T) = 0 \) as \(\mathcal{J}^* (dr)=0\). 

Performing the same for \(T_{i,2}^*\),
\begin{align*}
    1 &= \int_U (d \vartheta_T )  \wedge f_1(r, \vartheta_P) d r \wedge d \vartheta_P = \int_U f_1(r, \vartheta_P)  d r \wedge d \vartheta_P \wedge d \vartheta_T  \, , \\
    0 &= \int_U (d \vartheta_P )  \wedge f_2(r, \vartheta_T) d r \wedge d \vartheta_T = - \int_U f_2(r, \vartheta_T) d r \wedge d \vartheta_P \wedge d \vartheta_T  \, .
\end{align*}
Selecting \(f_1(r, \vartheta_P) =  1\), \(f_2(r, \vartheta_T) = 0\) and \(f_3(\vartheta_P, \vartheta_T) = 0\) gives us a closed 2-form \(T_{i,2}^* = d r \wedge d \vartheta_P  \). Additionally, the boundary condition \(\mathcal{J}^*T_{i,2}^* = 0  \) is satisfied. 

We note that this choice for \(T_{i,1}^*, T_{i,2}^*\) agrees with Pfefferl\'e \emph{et al.} \cite{Gauge_freedom} up to a sign. Clearly \(T_{i, j}^*\) are closed and 0 when pulled back to the boundary, and therefore generators for elements of \(H^2_{\mathrm{dR}}(U, \partial U)\).

For the singular relative 2-chains \(T_{i, j}\) we require that,
\begin{align*}
    \delta_{jk} =  \int_{T_{i, j}} T_{i, k}^* \, . 
\end{align*}
For \((w_1, w_2) \in  \Delta_2\), let's try the 2-chain:
\begin{align*}
    T_{i,1} (w_1, w_2) = ( 1 + w_1, 0, -  w_2    ) + ( 2 - w_1, 0, -1+w_2    ) \, . 
\end{align*}
The image of \(T_{i, j}\) is a region in an unknotted hollow torus \(\HT\) is given by a toroidal ribbon. 
\begin{align*}
        \int_{T_{i,1}} T_{i,1}^* &= \int_{( 1 + w_1, 0, -     w_2    ) + ( 2 - w_1, 0, -    (1-w_2)    )} - dr \wedge d \vartheta_T 
     =    2 \int_0^1 \int_0^{1-w_2}   d w_1 \wedge d  w_2 
     =     1     \, .  \\
        \int_{T_{i,1}} T_{i,2}^* &= \int_{( 1 + w_1, 0, -     w_2    ) + ( 2 - w_1, 0, -    (1-w_2)    )}  dr \wedge d \vartheta_P = 0 \, . 
\end{align*}
Note that the \( T_{i,1} \in H_2( U, \partial U )\) only if \(\partial T_{i,1} \in C_1 (\partial U) \), which we will verify.

Similarly, we may verify the following choice for \(T_{i,2}\):
\begin{align*}
    T_{i,2} (w_1, w_2) = ( 1 + w_1,       w_2  , 0  ) + ( 2 - w_1,      (1-w_2) , 0   ) \, . 
\end{align*}
To complete our construction of an Alexander basis, we take the boundary of \([T_{i, j}]\):
\begin{align*}
    \partial T_{i, j}  
     : w \mapsto  T_{i, j} ( 1-w, w ) - T_{i, j} ( 0, w ) + T_{i, j} ( w, 0 ) \, ,
\end{align*}
where \(w \in [0,1] =  \Delta_1\). Starting with the representative \(T_{i,1}\):
\begin{align*}
    \partial T_{i,1} : w &\mapsto   T_{i,1} ( 1-w, w ) - T_{i,1} ( 0, w ) + T_{i,1} ( w, 0 ) \\
     &=  ( 2 - w, 0, -     w    ) + ( 1  + w, 0, -    (1-w)    ) \\
     & \; \; \; \; - (( 1 , 0, -     w    ) + ( 2 , 0, -    (1-w)    ) ) \\
     & \; \; \; \; + ( 1 + w, 0, 0   ) + ( 2 - w, 0, - 1       ) \, .
\end{align*}
Noting that \(\vartheta_T\) is periodic, the surviving terms are:
\begin{align*}
    \tau_{i,1} = \partial T_{i,1} : w &\mapsto   - (( 1 , 0, -     w    ) + ( 2 , 0, -    (1-w)    )) = ( 1 , 0,     w    )  + ( 2 , 0, -    w    )\, .
\end{align*}
Similarly for \(T_{i,2}\):
\begin{align*}
    \partial T_{i,2} : w &\mapsto   T_{i,2} ( 1-w, w ) - T_{i,2} ( 0, w ) + T_{i,2} ( w, 0 ) \\
     &=  ( 2 - w,       w ,0   ) + ( 1  + w,      (1-w)  ,0  ) \\
     & \; \; \; \; - (( 1 ,       w  ,0  ) + ( 2 ,      (1-w)  ,0  ) ) \\
     & \; \; \; \; + ( 1 + w,  0  ,0 ) + ( 2 - w,       ,0   ) \, .
\end{align*}
Once again the formal sum of the first 2 and last 2 pairs is zero, and we get
\begin{align*}
    \tau_{i,2} = \partial T_{i,2} : w &\mapsto   ( 1 ,   -     w  ,0  ) + ( 2 ,      w  ,0  )        \, .
\end{align*}
Together the classes for \(\sigma_{i,j}\) and \(\tau_{i,j}\) form an Alexander basis for the parameter space of a hollow torus. We observe that our definition of relative helicity in subsection \ref{subsection:Relative Helicity} is given in this context by Pfefferl\'e \emph{et al.} \cite{Gauge_freedom}.

\subsection{Hodge Decomposition}\label{append:Hodge}

Throughout the paper we utilise both a Hodge decomposition, and a Hodge-Morrey-Friedrichs decomposition. Here we provide these decompositions without proof. A Hodge-Morrey decomposition for the space \( \Omega^k  (\ST_i)\) for \(k \geq 1\) is given by:
\begin{align*}
     \Omega^k  (\ST_i) = &  \{      d\omega \in  \Omega^k (\ST_i)  \; : \, t \omega = 0     \} 
     \oplus
    \{      \delta \omega \in  \Omega^k (\ST_i)    \; : \, n \omega = 0       \} 
    \oplus 
     \{      \omega \in \Omega^k (\ST_i)   \; : \, d \omega = \delta \omega = 0       \} \, .
\end{align*}
These spaces are \(L^2\)-orthogonal to each other, giving a unique algebraic decomposition for any \(\omega \in  \Omega^k  (\ST_i)\) (Lemma 2.4.3 \cite{schwarz-1995}):
\begin{align*}
    \omega = d \phi + \delta (\star \eta) + \gamma  \, , 
\end{align*}
where \( \eta \in  \Omega_D^{3-k-1} (\ST_i) \), \(\phi \in  \Omega^0 (\ST_i)\) where \(\phi|_{\partial \ST_i} = 0\), and \(\gamma \in \mathcal{H}^k( \ST_i)\). This is further decomposed to give a Hodge-Morrey-Friedrichs decomposition (Corollary 2.4.9 \cite{schwarz-1995}):
\begin{align*}
     \Omega^k  (\ST_i) = &  \{      d\omega \in  \Omega^k (\ST_i)  \; : \, t \omega = 0   \} \\  \oplus &
    \{      \delta \omega \in  \Omega^k  (\ST_i)    \; : \, n \omega = 0   \} \\  \oplus &
    \{      \omega \in  \Omega^k  (\ST_i)   \; : \, d \omega = \delta \omega = t \omega= 0       \}\\ \oplus&
    \{      d \omega \in  \Omega^k  (\ST_i)   \; : \, \delta d \omega  = 0        \} \, .
\end{align*}
These spaces are \(L^2\)-orthogonal to each other and give a unique algebraic decomposition for any \(\omega \in  \Omega^k  (\ST_i)\):
\begin{align*}
    \omega = d \phi + \delta (\star \eta) + \star \lambda + dz  \, , 
\end{align*}
where \( \eta \in  \Omega_D^{3-k-1} (\ST_i) \), \(\phi \in  \Omega^0(\ST_i)\), \(\star \lambda \in \mathcal{H}^k_D (\ST_i)\), and \(z \in  \Omega^0 (\ST_i)\) for \(\phi|_{\partial \ST_i} = 0\), and \(z\) is a harmonic function \(\Delta z = 0\) where \(\Delta \) is the Laplacian.

\subsection{Fréchet Differentiable}\label{app:differentiable}

We claim in the main body that certain Fréchet derivatives exist. As these computations are relatively standard, we show this for the relative helicity. The magnetic energy and flux are similar computations.

Consider \( \eta_i , \eta_i' \in \Omega^1 _D(\ST_i)\), we look for the Fr\'echet derivative of \(\mathscr{H} \) with respect to the first coordinate \(\Omega^1 _D(\ST_i)\) equipped with the \(C^\infty\) topology, and the codomain is given the absolute value norm. As in the main body, let
\begin{align*}
    a_i =  \eta_i +  \sum_{j=1}^{{\ell_i}}  \psi_{i,j} \Gamma_{i,j} \, , 
\end{align*}
and \(\beta_i = da_i\). So,
\begin{align*}
    \mathscr{H}( d\eta_i + d\eta_i', \ST_i ) = g_i( \eta_i + \eta_i') &= \int_{\ST_i} \Big( \eta_i + \eta_i' + \sum_{j=1}^{{\ell_i}}  \psi_{i,j} \Gamma_{i,j}  \Big) \wedge d\Big( \eta_i + \eta_i' + \sum_{j=1}^{{\ell_i}}  \psi_{i,j} \Gamma_{i,j}  \Big) \\ & \hspace{12em} - \sum_{j=1}^{{\ell_i}} \psi_{i,j} \int_{\sigma_{i,j}} \Big( \eta_i + \eta_i' + \sum_{j=1}^{{\ell_i}}  \psi_{i,j} \Gamma_{i,j}  \Big) \\
    &= g_i( \eta_i ) +
    2 \int_{\ST_i} \beta_i \wedge  \eta_i'   - \sum_{j=1}^{{\ell_i}} \psi_{i,j} \int_{\sigma_{i,j}}  \eta_i' 
    +
    \int_{\ST_i}  \eta_i'  \wedge d \eta_i' 
\end{align*}
We see that 
\begin{align*}
    \partial_{\eta_i} g_i(  \eta_i'  ) \coloneqq 2 \int_{\ST_i} \beta_i \wedge  \eta_i'   - \sum_{j=1}^{{\ell_i}} \psi_{i,j} \int_{\sigma_{i,j}}  \eta_i' 
\end{align*}
is a linear, continuous operator in \(\eta_i'\). For \(\partial_{\eta_i} g_i(  \eta_i'  )\) to be a Fr\'echet derivative requires
\begin{align*}
    \Big| \int_{\ST_i}  \eta_i'  \wedge d \eta_i' \Big|  / \| \eta_i' \|_{C^\infty} \to 0 \qquad \text{as} \qquad  \| \eta_i' \|_{C^\infty} \to 0 \, . 
\end{align*}
And we see that
\begin{align*}
    \Big| \int_{\ST_i}  \eta_i'  \wedge d \eta_i' \Big| = \langle \eta_i'  , \star d \eta_i' \rangle
    \leq
    \| \eta_i' \|_{L^2} \, \| \star d \eta_i'  \|_{L^2}
    =
    \| \eta_i' \|_{L^2} \, \|  d \eta_i'  \|_{L^2} \, .
\end{align*}
By definition of the \(C^\infty\) topology we see that \(\| \eta_i' \|_{L^2}\) is controlled by the \(C^0\) norm and the volume of \(\ST_i\), giving \( \| \eta_i' \|_{L^2} \leq C \| \eta_i' \|_{C^\infty}\) for some \(C\). Similarly, \(\|  d \eta_i'  \|_{L^2} \leq \|  d \eta_i'  \|_{C^0} | \ST_i |^{1/2} \leq C \|  \eta_i'  \|_{C^1} \leq C \|  \eta_i'  \|_{C^\infty} \) for some constant \(C\) at each step \cite{warner1983foundations}. Therefore, we see that 
\begin{align*}
    0 < \Big| \int_{\ST_i}  \eta_i'  \wedge d \eta_i' \Big|  / \| \eta_i' \|_{C^\infty} \leq \| \eta_i' \|_{L^2} \, \|  d \eta_i'  \|_{L^2}  / \| \eta_i' \|_{C^\infty}  \leq C \| \eta_i' \|_{C^\infty} \, .
\end{align*}
the central terms evidently go to 0 as \(\| \eta_i' \|_{C^\infty} \to 0 \). Hence, \(\partial_{\eta_i} g_i(  \eta_i'  )\) defines a Fr\'echet derivative.

\subsection{Complementation of the Kernel}\label{app:Complemented}

Consider \(g_i(\eta_i)\) given by the relative helicity. Gl\"ockner's regular value result \ref{theorem:Glock} requires the derivative $d_{\eta_i} g_i$ to have a \emph{complemented} kernel at $\eta_i \in \Omega_D^1(\ST_i)$ \cite{glockner2015fundamentals}. We show that this condition is satisfied whenever $g_i$ has a one-dimensional codomain.

Recall that a closed subspace $K$ of a topological vector space $E$ is said to be \emph{complemented} if there exists a closed subspace $M \subset E$ such that the map 
$K \times M \to E$ given by addition is a topological isomorphism. This is equivalent to saying $E = K \oplus M$.

In our context, we consider the surjective derivative (for \(\beta_i \neq 0\))
\[ L := d_{\eta_i} g_i : T_{\eta_i} \Omega_D^1(\ST_i) \to \mathbb{R} \, . \]
To prove that $\ker(L)$ is complemented in the tangent space $E = T_{\eta_i} \Omega_D^1(\ST_i)$, we construct an explicit projection. Since $L$ is a surjective linear functional, there exists some vector $\nu \in E$ such that $L(\nu) = 1$. We define the continuous linear map $P: E \to E$ by:
\[ P(\eta) = L(\eta)\nu. \]
By construction, $P$ is a projection onto the one-dimensional subspace $M = \text{span}\{\nu\}$, since:
\[ P(P(\eta)) = L(L(\eta)\nu)\nu = L(\eta)L(\nu)\nu = L(\eta)\nu = P(\eta). \]
Any element $\eta \in E$ can be decomposed uniquely as:
\[ \eta = (\eta - P(\eta)) + P(\eta). \]
We observe the following:
\begin{enumerate}
    \item The first term $(\eta - P(\eta))$ lies in $\ker(L)$, because $L(\eta - L(\eta)\nu) = L(\eta) - L(\eta)L(\nu) = 0$.
    \item The second term $P(\eta)$ lies in $M = \text{span}\{\nu\}$.
    \item The intersection $\ker(L) \cap M$ is trivial, as $L(c\nu)  = 0$ implies $c=0$.
\end{enumerate}
This induces the topological direct sum $E = \ker(L) \oplus M$. Since $M \cong \mathbb{R}$, it is a closed subspace, and so $\ker(d_{\eta_i} g_i)$ is complemented in $T_{\eta_i} \Omega_D^1(\ST_i)$.

\subsection{Lagrange Multipliers}\label{app:LM}

The definition of a critical point $\eta_i$ for an objective function $F: \Omega_D^1(\ST_i) \to \mathbb{R}$, subject to the constraint $g_i(\eta_i) = h_i$, is given by the condition:
\begin{align*}
    d_{\eta_i} F(\nu) = 0 \quad \text{for all} \quad \nu \in T_{\eta_i} g_i^{-1}(h_i).
\end{align*}
Assuming $\eta_i$ is a regular value of the constraint $g_i$, the tangent space to the constraint manifold is precisely the kernel of the derivative: $T_{\eta_i} g_i^{-1}(h_i) = \ker(d_{\eta_i} g_i)$. Thus, the condition for a critical point is that $d_{\eta_i} F$ vanishes on $\ker(d_{\eta_i} g_i)$.

We wish to show that this is equivalent to the existence of a Lagrange multiplier $\lambda \in \mathbb{R}$ such that:
\begin{align*}
    d_{\eta_i} F + \lambda d_{\eta_i} g_i = 0 \quad \text{on the whole space } T_{\eta_i} \Omega_D^1(\ST_i).
\end{align*}
As established in Appendix \ref{app:Complemented}, since $d_{\eta_i} g_i$ is a surjective linear map to a one-dimensional codomain, its kernel is complemented. We can decompose the tangent space $E = T_{\eta_i} \Omega_D^1(\ST_i)$ into a direct sum:
\[ E = \ker(d_{\eta_i} g_i) \oplus \text{span}\{\nu_0\} \]
where $\nu_0 \in E$ is a vector chosen such that $d_{\eta_i} g_i(\nu_0) = 1$. 

Any arbitrary vector $\nu \in E$ can be uniquely decomposed as $\nu = \nu_k + c\nu_0$, where $\nu_k \in \ker(d_{\eta_i} g_i)$ and $c \in \mathbb{R}$. Note that by applying $d_{\eta_i} g_i$ to both sides, we find that the scalar $c$ is exactly $d_{\eta_i} g_i(\nu)$.

Now, consider the action of the derivative of the objective function on $\nu$:
\begin{align*}
    d_{\eta_i} F(\nu) &= d_{\eta_i} F(\nu_k + c\nu_0) \\
    &= d_{\eta_i} F(\nu_k) + c d_{\eta_i} F(\nu_0).
\end{align*}
By our definition of a critical point, $d_{\eta_i} F(\nu_k) = 0$. Substituting $c = d_{\eta_i} g_i(\nu)$, the expression simplifies to:
\[ d_{\eta_i} F(\nu) = d_{\eta_i} g_i(\nu) \cdot d_{\eta_i} F(\nu_0). \]

If we define the constant $\lambda = -d_{\eta_i} F(\nu_0)$, then for any $\nu \in E$, we have:
\[ d_{\eta_i} F(\nu) + \lambda d_{\eta_i} g_i(\nu) = 0. \]
This confirms the existence of a Lagrange multiplier at a critical point. Going the other way, we need to show that \(d_{\eta_i} F + \lambda d_{\eta_i} g_i = 0\) implies \(\eta_i \) is a critical point. This is trivial since \(\nu \in \ker (d_{\eta_i} g_i )\) implies that \(d_{\eta_i} F(\nu) = 0\).

\section{Existence of Solutions} \label{append:exist}

With \(n=1\) there are no internal interfaces, so the diffeomorphism \(f\in\I\) plays no role. Hence, we take \(f=\mathrm{id}\) and drop partition subscripts. Otherwise, the notation in this section is consistent with the remainder of the paper.

 We look for a solution \(\beta\) to
\begin{align}  \label{problem:a_only}
    \begin{gathered} 
    \underset{\beta  \in  \mathcal{C} ( \ST ) }{ \text{min} }  \| \beta   \|_{L^2}^2  \, , \\
    H_{\mathscr{A}(\ST)} (\beta,  \ST) = h \, , \qquad \int_{T_{j}} \beta  = \psi_{j} \quad j = 1,2, \dots, \ell \, ,
    \end{gathered}
\end{align}
for an Amperian primitive \(a \in \mathscr{A}(\ST)\), where \(\{[T_1], \dots, [T_\ell]\}\) is a basis of \(H_2(\ST,\partial\ST)\) whose boundaries coincide with the usual \([\tau_j]\). Fixing an Amperian gauge makes the helicity a function of \(\beta \) alone. The pressure is irrelevant and is omitted.

\begin{lemma}\label{lemma:c1}
    The following set is non-empty:
    \[
    \mathscr Q(\psi_1,\dots,\psi_\ell,h,\ST)
    :=\Big\{\beta \in\mathcal C(\ST)\ :\ H_{\mathscr{A}(\ST)}(\beta,\ST)=h,\ \langle[ \beta ],[T_j]\rangle=\psi_j\ \forall j\Big\},
    \]
    with the flux conditions absent when \(\ell=0\). 
\end{lemma}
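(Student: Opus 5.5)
Since every \(\beta\in\mathcal C(\ST)\) has the form \(\beta = d\eta + \sum_j \psi_j T_j^*\) with \(\eta\in\Omega^1_D(\ST)\), as in \eqref{eq:b_form}, the flux conditions are met simply by fixing the harmonic part \(\lambda=\sum_j\psi_j T_j^*\in\h\). It then remains to show that the helicity map
\[
\eta \;\mapsto\; H_{\mathscr A(\ST)}\big(d\eta+\lambda,\ST\big)=\mathscr H\big(d\eta+\lambda,\ST\big)
\]
attains the value \(h\) on \(\Omega^1_D(\ST)\). For \(\ell=0\) take \(\lambda=0\). The plan is to find a one-parameter family along which this map behaves like a non-degenerate quadratic, and then apply the intermediate value theorem.

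\textbf{Step 1: local helicity.} Choose a small ball \(\B\ssubset\mathrm{int}(\ST)\) and a compactly supported \(\eta_0\in\Omega^1_D(\ST)\) with support in \(\B\) and \(\int_{\ST}\eta_0\wedge d\eta_0\neq 0\). A localised Beltrami-type or Hopf-type twisted field works: for instance \(\eta_0=\chi\,(x\,dy - y\,dx + c\,dz)\), with \(\chi\) a bump function, can be arranged to have nonzero helicity. By flipping orientation (\(z\mapsto -z\)) we can obtain either sign. Relative helicity reduces to \(\int \eta_0\wedge d\eta_0\) for such forms, because \(\eta_0\) vanishes near \(\partial\ST\) and so contributes nothing to the boundary terms \(\int_{\sigma_j}\), \(\int_{\tau_j}\).

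\textbf{Step 2: the quadratic in \(t\).} Fix an Amperian primitive \(a\) of \(\lambda\), which exists by Lemma \ref{Lemma:gauge_exist}, and set \(\beta_t=\lambda+t\,d\eta_0\), with primitive \(a+t\eta_0\). Since \(\eta_0\) vanishes near the boundary, \(a+t\eta_0\) is still Amperian, and
\[
H(\beta_t)=H(\lambda)+t\Big(\int a\wedge d\eta_0+\int\eta_0\wedge\lambda\Big)+t^2\int\eta_0\wedge d\eta_0 .
\]
Integrating by parts (support inside \(\B\)) shows the linear coefficient equals \(2\int\eta_0\wedge\lambda\). The function \(t\mapsto H(\beta_t)\) is therefore a real quadratic with nonzero leading coefficient \(c_0=\int\eta_0\wedge d\eta_0\).

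\textbf{Step 3: covering all \(h\).} A quadratic covers only a half-line, so we use two bumps of opposite sign. Take \(\eta_+\) and \(\eta_-\) with disjoint supports in two balls, with \(\int\eta_\pm\wedge d\eta_\pm=\pm1\), and consider \(\beta_{s,t}=\lambda+s\,d\eta_++t\,d\eta_-\). The cross term \(\int\eta_+\wedge d\eta_-\) vanishes because the supports are disjoint. Hence \(H(\beta_{s,0})\to+\infty\) as \(s\to\infty\), \(H(\beta_{0,t})\to-\infty\) as \(t\to\infty\), and \(H\) is continuous on the connected set \(\{(s,0)\}\cup\{(0,t)\}\). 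By the intermediate value theorem some \(\beta_{s,t}\) has helicity exactly \(h\). This \(\beta_{s,t}\) is smooth, closed and Dirichlet, and its flux equals \(\psi_j\), because \(d\eta_\pm\) is exact with Dirichlet primitive and so has zero flux (Subsection \ref{sub:flux}).

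\textbf{Main obstacle.} The key difficulty is Step 1: exhibiting an explicit compactly supported \(\eta_0\) with \(\int\eta_0\wedge d\eta_0\neq0\). I would take the standard contact form \(dz+x\,dy-y\,dx\) multiplied by a radial cutoff \(\chi(r)\) and compute its helicity. Since the contact form \(\alpha\) satisfies \(\alpha\wedge d\alpha=2\,dx\wedge dy\wedge dz\), with a sign-definite integrand where \(\chi\approx1\), I expect the result to be
\[
\int\chi^2\,\alpha\wedge d\alpha+\int\chi\,\alpha\wedge d\chi\wedge\alpha,
\]
where the second term vanishes because \(\alpha\wedge\alpha=0\). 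This would give a strictly positive value, and reflecting the coordinates gives the negative one.
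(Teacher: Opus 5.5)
Your proof is correct, but it takes a different route from the paper.

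\textbf{The paper's route.} The paper chooses a closed boundary 1-form \(w\) with \([w]=\sum_j\psi_j[\tau_j^*]\). It has no \([\sigma_j^*]\) component, so the resulting primitive is automatically Amperian. The paper extends \(w\) into \(\ST\) by Whitney extension and a cutoff vanishing on a ball \(\B(r_0)\). It then adds a field \(\beta_\B\) supported in \(\B(r_0)\), taken from the linked-trefoil construction of Kedia, Foster, Dennis and Irvine and cited as realising any prescribed helicity. Because \(a_\ST\) and \(\beta_\ST\) both vanish on \(\B(r_0)\), helicity is exactly additive, and one takes \(\beta_\B\) with helicity \(h-H_{\mathscr A(\ST)}(\beta_\ST,\ST)\).

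\textbf{Your route.} You take the flux part to be the harmonic form \(\lambda=\sum_j\psi_jT_j^*\), with an Amperian primitive from Lemma~\ref{Lemma:gauge_exist}, so no extension argument is needed. You replace the external citation by the explicit bump \(\chi(c\,dz+x\,dy-y\,dx)\). Its helicity is exactly \(2c\int\chi^2\,dx\wedge dy\wedge dz\), and the integrand is sign-definite everywhere, not only where \(\chi\approx1\). So choosing \(c=\pm1\) gives either sign.

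\textbf{What each buys.} Your price is that \(\lambda\) is not supported away from the bumps, so exact additivity is lost. You must absorb the linear cross terms \(2s\int\eta_+\wedge\lambda\) and \(2t\int\eta_-\wedge\lambda\). Your intermediate-value argument on the connected union of the two rays \(\{(s,0)\}\) and \(\{(0,t)\}\) handles this correctly, since \(H\to+\infty\) on one ray and \(H\to-\infty\) on the other. In short, the paper gets a clean additive splitting at the cost of an extension theorem and a nontrivial cited construction. Yours is fully self-contained and elementary, at the cost of a short continuity argument.
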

\begin{proof}
As $\ST$ is a non-empty, compact 3-manifold there exists an open 3-dimensional ball $\B(r) \subset\ST$ of some radius $r>0$. Set \(\widehat{\ST}\coloneqq \ST\setminus \B(r)\). Consider \(\ell>0\) and a closed \(w \in \Omega^1(\partial \ST)\), so
\begin{align*}
    [w] = \sum_{j=1}^\ell c^1_j [\tau_j^*] + c^2_j [\sigma_j^*] \, .
\end{align*}
We select \(c^1_j = \psi_j\) and \(c^2_j = 0\). In Euclidean coordinates \((x_1, x_2, x_3)\), there are \(\widehat g_j \in \Omega^0(\partial \ST)\) for \(j=1,2,3\) such that
\begin{align*}
    w = \sum_{j=1}^3 \widehat g_{j}\,dx_j \, .
\end{align*}
Since $\partial {\ST}$ is compact in \(\R^3\) and contained in $\ST$, we may extend $\widehat g_j$ to a smooth function \(g_j\) on $\ST$ (\(j=1,2,3\)) by Whitney's extension theorem \cite{whitney1992analytic}. 

Let \(\chi\in \Omega^0 (\ST)\) be a bump function with \(\chi=0\) on \(\B(r_0)\) (with \(r_0 < r\)) and \(\chi=1\) on \(\widehat{\ST}\). Define
\[
a_{\ST}\coloneqq \chi \sum_{j=1}^3 g_{j}\,dx_j   \in\Omega^1(\ST),\qquad \text{and} \qquad \beta_{\ST}\coloneqq d a_{\ST} \in\Omega^2(\ST) \, ,
\]
giving \(a_{\ST}|_{\B(r_0)}= 0\). Additionally, \(\beta _{\ST}\) is exact by construction and vanishes when pulled back to \(\partial\ST\)
\begin{align*}
    t(\beta _{\ST}) = t(d a_{\ST}) = d t(a_{\ST}) = d w = 0 \, .
\end{align*}
Hence \(\beta _{\ST} \in \mathcal{C}(\ST)\) and satisfies the flux constraint:
\begin{align*}
    \int_{T_j} \beta _{\ST} = \int_{\tau_j} \mathcal{J}^* a_{\ST} = \int_{\tau_j} w = \psi_j \, .
\end{align*}
Additionally, 
\begin{align*}
    \int_{\sigma_j} \mathcal{J}^* a_{\ST} = \int_{\sigma_j} w = 0 \, ,
\end{align*}
indicating that \(a_{\ST} \in \mathscr{A}(\ST)\).

 There exists a field \(\beta _\B \in \mathcal{C}(\ST)\) that is only non-zero within \(\B(r_0)\) and able to achieve any value for helicity (\cite{PhysRevLett.117.274501} supplementary material). Such \(\beta _\B\) were constructed explicitly by Kedia, Foster, Dennis, and Irvine. These authors show that a smooth potential \(a_\B\) may be constructed from two linked trefoil knots to achieve any helicity, and that \(a_\B, \beta _\B = 0\) outside of a bounding ball and therefore the flux of \(\beta _{\ST} + \beta _\B \in \mathcal{C}(\ST)\) is given by \(\beta _{\ST}\). As \(a_\B = 0\) on the boundary this implies \(a_\B \in \mathscr{A}(\ST)\). 

The helicity for \(\beta _{\ST} + \beta _\B\) is:
\begin{align*}
    H_{\mathscr{A}(\ST)}( \beta_\B + \beta_\ST , \ST) = H_{\mathscr{A}(\ST)} (\beta_\ST , \ST \setminus \B(r_0) ) +  H_{\mathscr{A}(\ST)} ( \beta_\B , \B(r_0) ) \, .
\end{align*}
Immediately we see that, after fixing a selection for \({\beta}_\ST\) with the desired flux, we may select a \(  \beta_\B \) to generate any value for \(H_{\mathscr{A}(\ST)}(\beta , \ST) \), without affecting the flux. We now have a way to construct any flux and helicity with a closed Dirichlet 2-form \(\beta  \in \mathcal{C} (\ST)\).

If \(\ell = 0\) then there is no flux and one only needs to reconstruct the helicity. The above arguments tell us that \(\mathscr{Q}( \psi_1, \dots, \psi_\ell , h, \ST )\) is non-empty for any \(\ell\). 
\end{proof}

We now apply the direct method in the calculus of variations to prove the existence of a minimiser of problem \ref{problem:a_only}, and hence the existence of a stationary point.

By Lemma \ref{lemma:c1}, we may choose
\[
\beta \in \mathscr{Q}(\psi_1,\dots,\psi_\ell,h,\ST).
\]
Moreover, there exists a potential for \(\beta\) given by
\[
a=\eta+\Gamma \in \mathscr{A}(\ST),
\]
where \(d\Gamma \in \mathcal{H}^2_D (\ST)\) is the unique harmonic term and \(\eta \in \Omega^1_D(\ST)\). Given that \(\eta\) is zero when pulled back to the boundary, only the \(\Gamma\) term contributes to \(\langle [a], [\sigma_j]\rangle\). Therefore, \(\Gamma\) is solely responsible for ensuring \(a \in \mathscr{A}(\ST)\). Without loss of generality, we fix \(\Gamma\).

Orthogonality of the HMF decomposition (see Appendix \ref{append:Hodge}) implies 
\begin{align*}
    \| \beta  \|_{L^2}^2 = \|d \eta \|_{L^2}^2 + \| d \Gamma \|_{L^2}^2 \, .
\end{align*}
The last term is determined by a choice of \(\psi_1, \dots, \psi_\ell\). Hence, problem \ref{problem:a_only} descends to an equivalent problem in \(d \eta\), namely, to determine if \(\|d \eta \|_{L^2}^2\) has any critical points with the constraint \(H_{\mathscr{A}(\ST)}( \beta , \ST) = h\).

It is convenient to enlarge the admissible space of $2$-forms containing $d\eta$. We therefore formulate the variational problem in a Hilbert space \(L^2\Omega_D^2(\ST)\), obtained as the completion of $\Omega_D^2(\ST)$ with respect to the $L^2$ inner product.

For $1$-forms, we equip $\Omega_D^1(\ST)$ with the $H^1$ norm
\[
\|\omega\|_{H^1}^2 \coloneqq \|\omega\|_{L^2}^2 + \|D\omega\|_{L^2}^2,
\qquad \omega \in \Omega_D^1(\ST),
\]
where $D$ denotes the weak covariant derivative used by Schwarz \cite{schwarz-1995}. The corresponding $H^1$ inner product is
\[
\langle \omega,\omega_0\rangle_{H^1}
\coloneqq
\langle \omega,\omega_0\rangle_{L^2}
+
\langle D\omega, D\omega_0\rangle_{L^2},
\qquad \omega,\omega_0 \in \Omega_D^1(\ST).
\]

More generally, for the definitions of the spaces $H^k\Omega_D^1(\ST)$, $H^k_{\mathrm{loc}}\Omega_D^1( \text{int}( \ST ) )$, and $C^k\Omega_D^1(\ST)$ on a manifold with boundary, together with their associated norms and the tangential trace operator $t$, we refer the reader to \cite{schwarz-1995}. We now turn to the minimisation problem
\begin{align}\label{prob:inter} 
    \begin{gathered} 
    \underset{  d \eta \in L^2 \Omega^2_D(\ST)}{ \text{min}}   \| d \eta  \|_{L^2}^2  \, , \qquad 
    \hat H ( \eta) = h  \, ,
    \end{gathered}
\end{align}
where 
\begin{align*}
    \hat H  : H^1 \Omega^1_D(\ST) \to \R \qquad \text{is given by} \qquad \hat H  ( \eta) = \int_{\ST} (\eta + \Gamma) \wedge d (\eta + \Gamma)  \, ,
\end{align*}
where the helicity \(\hat H\) is dependent upon \( d \eta\) irrespective of the primitive \(\eta\) chosen. This is in precisely the same form as the problem discussed by Gerner who used the direct method in calculus of variations to show that there exists a minimiser \(d \eta \) (Theorem 2.1 \cite{exist_gerner}). Additionally, as the Euclidean metric is real analytic then for non-zero helicity \(d \eta\) is real analytic on \(\text{int}(\ST_i)\) (Corollary 2.1 \cite{exist_gerner}). We provide an exposition for these results below.

For a minimising sequence, \( d \eta_j \) in \(j\), we are aware that \(L^2 \Omega^2_D(\ST_i)\) is complete, so \(d \eta_j \to d \eta\) weakly in \(L^2\) as \(j \to \infty\) (up to a subsequence). Consider the map \(\star d \eta \mapsto \eta \in H^1 \Omega^1_D(\ST)\) as defined by Gerner which is a continuous linear operator (Lemma 3.4 in \cite{exist_gerner}). Hence, \( \eta_j \to  \eta\) converges weakly in \(H^1\). By the Rellich–Kondrachov theorem there is an embedding into  \(L^2 \Omega^1_D(\ST)\) where a subsequence of \(  \eta_j\) converges strongly to \(\eta\). So
\begin{align*}
    |\hat H  (   \eta_j ) - \hat H  (   \eta )| &= \bigg| \int_{\ST} \left(   \eta_j + \Gamma \right)  \wedge \beta _j
    - 
    \int_{\ST} \left(   \eta +  \Gamma \right)  \wedge \beta  \bigg| 
    \\
      &= \bigg| \int_{\ST} \left(   \eta_j - \eta \right)  \wedge \beta _j 
    - 
    \int_{\ST} \left(   \eta +  \Gamma \right)  \wedge  (\beta - \beta_j) \bigg| \, .
\end{align*}
The first term tends to zero because \( \eta_j - \eta \to 0\) strongly in \(L^2\) and \( \| \beta_j \|_{L^2}\) is bounded. The second term goes to zero because \(\beta - \beta_j\) goes to zero in \(L^2\) when tested against the test function \(\eta +  \Gamma  \) by weak convergence. We see that \(|\hat H  (   \eta_j ) \to \hat H  (   \eta )| \), namely \(\hat H\) is sequentially weakly continuous. This implies that \(\hat H^{-1}\) is weakly closed.

The objective \(\| d \eta  \|_{L^2}^2\) is weakly lower semicontinuous and trivially coercive. It follows from the direct method on Banach spaces that there exists a minimiser \(d \eta\).

We now determine the Euler-Lagrange equations for problem \eqref{prob:inter}. Note that a Banach space is automatically Fréchet, so Theorem \ref{theorem:Glock} immediately applies. First we show that the corresponding helicity is Fr\'echet differentiable. So,
\begin{align*}
    \hat H ( \eta + \eta') 
    &= \hat H ( \eta ) +
    2 \int_{\ST} \beta \wedge  \eta'  
    +
    \int_{\ST}  \eta'  \wedge d \eta' 
\end{align*}
We see that 
\begin{align*}
    \partial_{\eta} \hat H(  \eta'  ) \coloneqq 2 \int_{\ST} \beta \wedge  \eta'  
\end{align*}
is a linear, continuous operator in \(\eta'\). For \(\partial_{\eta} \hat H(  \eta'  )\) to be a Fr\'echet derivative requires
\begin{align*}
    \Big| \int_{\ST}  \eta'  \wedge d \eta' \Big|  / \| \eta' \|_{H^1} \to 0 \qquad \text{as} \qquad  \| \eta' \|_{H^1} \to 0 \, . 
\end{align*}
And we see that
\begin{align*}
    \Big| \int_{\ST}  \eta'  \wedge d \eta' \Big| = \langle \eta'  , \star d \eta' \rangle
    \leq
    \| \eta' \|_{L^2} \, \| \star d \eta'  \|_{L^2}
    =
    \| \eta' \|_{L^2} \, \|  d \eta'  \|_{L^2} \, .
\end{align*}
By definition \( \| \eta' \|_{L^2} \leq C \| \eta' \|_{H^1}\) for some \(C\). And the fact that \(d\) is a first-order differential operator, so on a compact smooth volume \( \| d \eta' \|_{L^2} \leq C \| \eta' \|_{H^1}\) for some \(C\). Giving,
\begin{align*}
    0 < \Big| \int_{\ST}  \eta'  \wedge d \eta' \Big|  / \| \eta' \|_{H^1} \leq \| \eta' \|_{L^2} \, \|  d \eta'  \|_{L^2}  / \| \eta' \|_{H^1}  \leq C \| \eta' \|_{H^1} \, .
\end{align*}
The central terms evidently go to 0 as \(\| \eta' \|_{H^1} \to 0 \). Hence, \(\partial_{\eta} \hat H(  \eta'  )\) defines a Fr\'echet derivative.

Let \(\chi_{\B}\) be the characteristic function for a 3-dimensional ball \(\B(r_0)\) of radius \(r_0\) contained in the interior of \(\ST\). As \(\ST\) is a compact Riemannian 3-manifold, there exists such a ball. Additionally, let \(\chi_k : \ST \to \R\) for each \(k \in \mathbb{N}\) represent an element in \( \Omega^0 (\ST)\) with \(\chi_k |_{\partial \ST} = 0\). Select \(\chi_k\) such that \(\chi_k \to \chi_{\B}\) pointwise as \(k \to \infty\).

Consider the direction \(\hat \eta'_k = \chi_k  \, \star \beta\). Then, as \(H^1 \Omega^1(\ST)\) is dense in \(L^2 \Omega^1(\ST)\) we can select a \(\eta'_{k,q} \to  \hat \eta'_k \) in an \(L^2\) topology as \(q \to \infty\) such that \(\eta'_{k,q} \in H^1 \Omega^1_D (\ST)\):
\begin{align*}
    d_{\eta} \hat H (\eta'_{k,q}) 
    \xrightarrow{q \to \infty}
    2\int_{\ST} \! \chi_k  \, \star \beta  \wedge \beta   \xrightarrow{k \to \infty} 2 \int_{\chi_{\B}} \! \star \beta  \wedge \beta 
    = 2 \| \beta |_{\B(r_0)} \|_{L^2}^2   
     \, .
\end{align*}
So long as \(\| \beta \|_{L^2} > 0\) there exists a choice of ball \(\B(r_0)\) so that \( \| \beta |_{\B(r_0)} \|_{L^2}^2  > 0 \), we select such a ball.

So, by an epsilon-delta argument, there are \(k_0,q_0\) such that for \(k > k_0, q>q_0\) we have \(d_{\eta} \hat H (\eta'_{k,q}) > 0\) and surjectivity for any \( \| \beta \|_{L^2}  \neq 0\). This means that any \( h \) is a regular value for the relative helicity constraint \(\hat H (\eta) =  h \) (Theorem \ref{theorem:Glock}) if \(\| \beta \|_{L^2} > 0\). One usually checks the abnormal case \(\| \beta \|_{L^2} = 0\) using higher derivatives \cite{tret1984necessary}. Assuming that \(\| \beta \|_{L^2} > 0\) the regular Lagrange multiplier condition provides necessary and sufficient conditions for a critical point to the original problem \eqref{prob:inter} (see Theorem 43.C (Ljusternik (1934)) in \cite{Lys} for constraint qualifications),
\begin{align*}
    \underset{\substack{\eta \in  H^1 \Omega^1_D(\ST) \\  \mu \in \R} }{ \text{c. p.} }  \left(  \frac{1}{2} \Big\|d \eta + \sum_j  \psi_{j}  T_{j}^*  \Big\|_{L^2}^2   - \frac{\mu}{2} \big(   \hat H (\eta) -   h       \big) \right) \, .
\end{align*}
Proceeding with the Gâteaux derivative, namely, \(\beta_0 = \beta \), \(\partial_t \beta_t |_{t=0} =  \beta' \), similarly for \(\mu\) we have that stationary points of the above are the derivative with respect to \(\mu\) in the direction \(\mu'\),
\begin{align*}
    0 &=    - \frac{\mu'}{2} \big(   \hat H (\eta) -   h       \big) \, ,
\end{align*}
for all \(\mu'\) which implies that the helicity constraint holds. And the derivative with respect to \(\eta\) in the direction \(\eta'\) is
\begin{align*}
      0 &= \int_{\ST} d \eta' \wedge \star \beta   - \frac{\mu}{2} \int_{\ST} \eta'  \wedge \beta + a \wedge d \eta'  \\
      0 &= \int_{\ST} \eta' \wedge (d\star \beta  - \mu \beta ) \, ,
\end{align*}
for all \(\eta'\), where we have used the divergence theorem and that \(\beta = da\) to simplify this condition. The condition implies that
\begin{align*}
    \langle   (d\star \beta  - \mu \beta )   , \star \eta' \rangle = 0 \, . 
\end{align*}
On the interior of \(\ST\) we see that \( d\star \beta  - \mu \beta \) projected onto any element of \(\Omega^2( \text{int} (\ST ))\) is zero. Hence, in \(\text{int} (\ST)\) we have that \( d\star \beta  - \mu \beta  = 0\) in the sense of distributions with \(\beta \in H^0 \Omega^2 (\ST)\). 

We may apply \((d\star   - \mu   )\), as a distributional operator, to \( d\star \beta  - \mu \beta  = 0\). Namely,
\begin{align*}
    0 = (d\star   - \mu   )^2 \beta &= (d\star d \star   - 2 \mu d\star + \mu^2   ) \beta =  (d\star d \star   -  \mu^2   ) \beta = (\Delta - \mu^2 ) \beta  \, ,
\end{align*}
where \(\Delta \beta = (d \delta + \delta d) \beta = d \delta \beta= d \star d \star  \beta \) and \(\Delta\) is the Hodge Laplacian on 2-forms. A Hodge decomposition gives \(\beta = d ( \eta + \Gamma) \) where \(\Gamma\) is known and smooth, and \(d \Gamma \) is harmonic, so
\begin{align*}
    (\Delta - \mu^2 ) d \eta &=  \mu^2   d \Gamma \, . 
\end{align*}
Since the right-hand side of this equation is smooth, interior elliptic regularity theory implies that \(d \eta \in H^k_{loc} (  \text{int}(\ST)  )\) for all \(k\), namely, \(d \eta \) is smooth within the domain interior \cite{gilbarg1998elliptic}.

Hence, there exists a minimiser \(\eta\) to problem \eqref{prob:inter}, giving a \(\beta = d (\eta + \Gamma)\) and if the minimiser satisfies \(\| \beta \|_{L^2} > 0 \) then \(d \eta\) and therefore \(\beta\) must also be smooth within the domain interior. We note that if at least one of the flux values \(\psi_1, \dots, \psi_\ell\) is non-zero this implies that (by \(L^2\) orthogonality of the Hodge decomposition) \(\| \beta \|_{L^2}^2 \geq \|d \Gamma \|_{L^2}^2 > 0\).

\end{document}